\newif\ifShowJournalIdeas
\newcommand{\headerL}[2]{\node[rectangle, rounded corners, draw=black, thick, inner sep=4pt, fill=black!10,xshift=12pt,yshift=-1pt,anchor=south west, minimum height=4.2ex] at (#1.north west) {#2};}
\newcommand{\headerR}[2]{\node[rectangle, rounded corners, draw=black, thick, inner sep=3pt, fill=black!02,xshift=-12pt,yshift=-1pt,anchor=south east] at (#1.north east) {#2};}
\newcommand{\paramBox}[5]{
		\node[inner sep = 2pt,anchor = south west] (A#1) at (#2) {
		\begin{tabularx}{\TableWidth}{s|m|b}
			& & \\ [-10.5pt] 
			#4
		\end{tabularx}
		};
		\node[inner sep=-0.6pt,draw=white,ultra thick,rounded corners,fit=(A#1)] () {};
		\node[inner sep=-1.4pt,draw=white,ultra thick,rounded corners,fit=(A#1)] () {};
		\node[inner sep=-2pt,draw=black,thick,rounded corners,fit=(A#1)] (#1) {};
		\begin{pgfonlayer}{foreground}
			\headerL{#1}{#3}
			\headerR{#1}{#5}
		\end{pgfonlayer}
}
\newcommand{\appref}[1]{{\hyperref[proof:#1]{\appsymb}}}
\theoremstyle{definition}
\newtheorem{definition}{Definition}[section]
\theoremstyle{plain}
\newtheorem{theorem}{Theorem}[section]
\newtheorem{proposition}{Proposition}[section]
\newtheorem{observation}{Observation}[section]
\newtheorem{lemma}{Lemma}[section]
\newtheorem{corollary}{Corollary}[section]
\newtheorem{rrule}{Reduction Rule}[section]
\newtheorem{claim}{Claim}{\upshape\itshape}{\upshape\rmfamily}
\newenvironment{claimproof}{{\noindent\textit{Proof of Claim. }}}{\hfill$\blacksquare$}
\crefname{rrule}{Reduction Rule}{Reduction Rules}
\Crefname{theorem}{Theorem}{Theorems}
\crefname{theorem}{Thm.}{Thms.}
\Crefname{proposition}{Proposition}{Propositions}
\crefname{proposition}{Prop.}{Props.}
\Crefname{observation}{Observation}{Observations}
\crefname{observation}{Obs.}{Obs.}
\crefname{definition}{Definition}{Definitions.}
\crefname{definition}{Def.}{Defs.}
\crefname{corollary}{Cor.}{Cors.}
\Crefname{corollary}{Corollary}{Corollaries}
\Crefname{claim}{Claim}{Claims}
\Crefname{figure}{Figure}{Figures}
\newcommand{\Oh}{\ensuremath{\mathcal{O}}}
\newcommand{\problemdef}[3]{
  \vspace{.5ex}
  \begin{quote}
      \normalsize\textsc{#1} \smallskip \\
      \begin{tabularx}{0.9\textwidth}{@{}l@{\hspace{3pt}}X}
        \normalsize\textbf{Input:}    & \normalsize#2 \\
        \normalsize\textbf{Question:} & \normalsize#3
      \end{tabularx}
  \end{quote}
  \vspace{.5ex}
}
\newcommand{\PHC}{{\normalfont coNP $\subseteq$ NP/poly}} 
\newcommand{\probFull}[1]{{\normalfont\textsc{#1$\alpha$-Fixed Cardinality Graph Partitioning}}}
\newcommand{\probShort}[2]{{\normalfont\textsc{#2#1$\alpha$-FCGP}}}
\newcommand{\borderFocused}{Degrading}
\newcommand{\internalFocused}{Non-Degrading}
\newcommand{\borderFocusedS}{degrading}
\newcommand{\internalFocusedS}{non-degrading}
\newcommand{\prob}{\probShort{}}
\newcommand{\probAno}{\probShort{Annotated }} 
\newcommand{\probBor}{\probShort{\borderFocused{} }} 
\newcommand{\probAnoBor}{\probShort{Annotated \borderFocused{} }}
\newcommand{\probMax}{\probShort{Max }}
\newcommand{\probBorMax}{\probShort{\borderFocused{} Max }} 
\newcommand{\probIntMax}{\probShort{\internalFocused{} Max }} 
\newcommand{\probAnoMax}{\probShort{Annotated Max }}
\newcommand{\probAnoBorMax}{\probShort{Annotated \borderFocused{} Max }}
\newcommand{\probMin}{\probShort{Min }}
\newcommand{\probAnoMin}{\probShort{Annotated Min }} 
\newcommand{\probBorMin}{\probShort{\borderFocused{} Min }} 
\newcommand{\probIntMin}{\probShort{\internalFocused{} Min }} 
\newcommand{\probAnoBorMin}{\probShort{Annotated \borderFocused{} Min }}
\newcommand{\contribution}{\mathrm{cont}} 
\DeclareMathOperator{\val}{val} 
\newcommand{\counter}{\mathsf{counter}}
\DeclareMathOperator{\degCounter}{deg^{+c}}
\begin{document}

\title{Covering Many (or Few) Edges with $k$ Vertices in~Sparse~Graphs\footnote{A preliminary version of this article appeared in Proceedings of the 39th International Symposium on Theoretical Aspects of Computer Science (STACS~'22), volume 219 of LIPIcs, pages 42:1--42:18.}}

\author{Tomohiro Koana$^1$\footnote{Supported by the Deutsche Forschungsgemeinschaft (DFG), project DiPa, NI 369/21.}
 \and Christian Komusiewicz$^2$
 \and André Nichterlein$^1$
 \and Frank Sommer$^2$\footnote{Supported by the Deutsche Forschungsgemeinschaft (DFG), project EAGR, {KO~3669/{6-1}}.} 
 }
\date{%
    $^1$ Algorithmics and Computational Complexity, Technische Universität Berlin, Germany\\%
    $^2$ Fachbereich Mathematik und Informatik, Philipps-Universität Marburg, Germany\\[2ex]%
}






\maketitle






\begin{abstract}
	We study the following fixed-cardinality optimization problem in a maximization and a minimization variant.
	For a fixed~$\alpha$ between zero and one we are given a graph and two numbers~$k \in \mathds{N}$ and~$t \in \mathds{Q}$.
	The task is to find a vertex subset~$S$ of exactly~$k$ vertices that has value at least~$t$ in the maximization variant or at most~$t$ in the minimization variant.
	Here, the value of a vertex set computes as $\alpha$ times the number of edges with exactly one endpoint in~$S$ plus~$1-\alpha$ times the number of edges with both endpoints in~$S$.
	These two problems generalize many prominent graph problems, such as \textsc{Densest $k$-Subgraph}, \textsc{Sparsest $k$-Subgraph}, \textsc{Partial Vertex Cover}, and~\textsc{Max ($k$,$n-k$)-Cut}.
	
	In this work, we complete the picture of their parameterized complexity on several types of sparse graphs that are described by structural parameters. 
	In particular, we provide kernelization algorithms and kernel lower bounds for these problems. 
	A somewhat surprising consequence of our kernelizations is that \textsc{Partial Vertex Cover} and \textsc{Max $(k,n-k)$-Cut} not only behave in the same way but that the kernels for both problems can be obtained by the same algorithms.
\end{abstract}

\maketitle

\section{Introduction}
Fixed-cardinality optimization problems are a well-studied class of
graph problems where one seeks, for a given graph~$G$, a vertex set~$S$
of size~$k$ such that~$S$ optimizes some objective
function~$\val_G(S)$~\cite{BEHM06,CaiCC06,Cai08,KS15}. Prominent
examples of these problems are \textsc{Densest
  $k$-Subgraph}~\cite{BGLMP17,Fei97,KS15}, \textsc{Sparsest
  $k$-Subgraph}~\cite{HKYT17,HMTYS15,WBG16}, \textsc{Partial Vertex
  Cover}~\cite{AFS11,GNW07,KLR08}, and~\textsc{Max ($k$,$n-k$)-Cut}~\cite{Cai08,SZ18,SZ17}.

A common thread in these example problems is that they are
formulated in terms of the number of edges that have one or two
endpoints in~$S$: In the decision version of \textsc{Densest
  $k$-Subgraph} we require that there are at least~$t$ edges with both
endpoints in~$S$. Conversely, in \textsc{Sparsest $k$-Subgraph}
we require that at most~$t$ edges have both endpoints in~$S$. In
\textsc{Partial Vertex Cover} we require that at least~$t$ edges have
at least one endpoint in~$S$. Finally, in \textsc{Max ($k$,$n-k$)-Cut}
we require that at least~$t$ edges have exactly one endpoint in~$S$.

We study the following related problem first defined by Bonnet et al.~\cite{BEPT15}.\footnote{On the face of it, the definition of Bonnet et al.~\cite{BEPT15} seems to be more general as it has separate weight parameters for the internal and outgoing edges. It can be reduced to our formulation by adapting the value of~$t$ and thus our results also hold for this formulation.}

\problemdef
{\probFull{Max } (\probMax{})}
{A graph $G$, $k \in \mathds{N}$, and~$t \in \mathds{Q}$.}
{Is there a set $S$ of exactly $k$ vertices such that \newline
\begin{minipage}{25em}\[\val(S)\coloneqq (1-\alpha) \cdot m(S) + \alpha \cdot m(S, V(G) \setminus S)\ge t\ ?\]\end{minipage}}
\vspace{1.2ex}

Here, $\alpha\in [0,1]$ and~$m(S)$ denotes the number of edges with two endpoints in~$S$ and $m(S, V(G) \setminus S)$ denotes the number of edges with exactly one endpoint in~$S$.
Naturally, one may also consider the minimization problem, denoted as \probFull{Min } (\probMin{}), where we are looking for a set~$S$ such that~$\val(S)\le t$.

The value of~$\alpha$ describes how strongly edges with exactly one endpoint in~$S$ influence the value of~$S$ relative to edges with two endpoints in~$S$. 
For~$\alpha = 1 / 3$, edges with two endpoints in~$S$ count twice as much as edges with one endpoint in~$S$ and, thus, every vertex contributes exactly its degree to the value of~$S$. Hence, in this case, we simply want to find a vertex set with a largest or smallest degree sum. 

More importantly, \probMax{} and \probMin{} contain all of the above-mentioned problems as special cases (see \cref{fig:problem-def}). 
\begin{figure}[t]
	\centering
	\begin{tikzpicture}[xscale=1.25,node distance = 12pt]
		\newcommand{\probh}{0.75}

		\tikzset{prob/.style={rectangle, rounded corners, draw=black, thick, inner sep=3pt}}
		
		\node at  (-0.5,0)   {$\alpha$};
		
		\node[prob,minimum width=3.75cm] (DkS) at  (0.75, \probh - 0.1)   {\small \textsc{Densest~$k$-Subgraph}};
		\node[prob,minimum width=3.75cm] (SkS) at  (0.75,-\probh - 0.1)   {\small \textsc{Sparsest~$k$-Subgraph}};

		\node[prob,right=of DkS,minimum width=2.5cm] (MaDS) {\small \textsc{Max Deg Sum}};
		\node[prob,right=of SkS,minimum width=2.5cm] (MiDS) {\small \textsc{Min Deg Sum}};

		\node[prob,right=of MaDS,minimum width=3.1cm] (MaPV) {\small \textsc{Max PVC}};
		\node[prob,right=of MiDS,minimum width=3.1cm] (MiPV) {\small \textsc{Min PVC}};

		\node[prob,right=of MaPV,minimum width=3.1cm] (MaC) {\small \textsc{Max ($k$,$n-k$)-Cut}};
		\node[prob,right=of MiPV,minimum width=3.1cm] (MiC) {\small \textsc{Min ($k$,$n-k$)-Cut}};

		\draw[black!50] (0, 0.1) -- (DkS);
		\draw[black!50] (0,-0.1) -- (SkS);
		\draw[black!50] (3.33, 0.1) -- (MaDS);
		\draw[black!50] (3.33,-0.1) -- (MiDS);
		\draw[black!50] (5, 0.1) -- (MaPV);
		\draw[black!50] (5,-0.1) -- (MiPV);
		\draw[black!50] (10, 0.1) -- (MaC);
		\draw[black!50] (10,-0.1) -- (MiC);

		\draw[thick] (0,0) -- (10,0);

		\draw[thick] (0,-0.1) -- (0,0.1);
		\node at  (0,-0.35)   {0};

		\draw[thick] (3.33,-0.1) -- (3.33,0.1);
		\node at  (3.2,-0.35)   {\nicefrac{1}{3}};

		\draw[thick] (5,-0.1) -- (5,0.1);
		\node at  (5,-0.35)   {\nicefrac{1}{2}};

		\draw[thick] (10,-0.1) -- (10,0.1);
		\node at  (10,-0.35)   {1};

	\end{tikzpicture}
	\caption{Problem definition cheat sheet.}
	\label{fig:problem-def}
\end{figure}
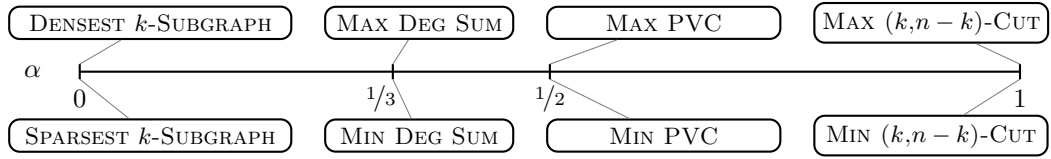
For~$\alpha=0$,  only the edges with both endpoints in~$S$ count and thus \textsc{Densest $k$-Subgraph} is \probMax{} and \textsc{Sparsest $k$-Subgraph} is \probMin{} in this case. For~$\alpha=1$, only edges with exactly one endpoint in~$S$ count and \textsc{Max $(k, n - k)$-Cut} is \probMax{} in this case.
 Finally, \textsc{Partial Vertex Cover} (\textsc{MaxPVC}) is \probMax{} with~$\alpha = 1 / 2$ as all edges with at least one endpoint in~$S$ count the same.  
Consequently, there exist values of~$\alpha$ such that \probMax{} and \probMin{} are NP-hard and W[1]-hard on general graphs with respect to the natural parameter~$k$~\cite{Cai08,CFK+15,DF13,GNW07}. 
This hardness makes it interesting to study these problems on input graphs with special structure. For example, Bonnet et al.~\cite{BEPT15} and~Shachnai and Zehavi~\cite{SZ17} studied this problem on bounded-degree graphs.

We continue this line of research and give a complete picture of
the parameterized complexity of \probMin{} and \probMax{} on several
types of sparse graphs that are described by structural parameters. In
particular, we provide kernelization algorithms and kernel lower
bounds for these problems, see \cref{fig:overview} for an overview.
\begin{figure}[t]

	\definecolor{ColorOpen}{RGB}{255, 255, 255}
	\definecolor{ColorHard}{RGB}{255, 100, 100} 
	\definecolor{ColorFPT}{RGB}{255, 200, 100}
	\definecolor{ColorPPK}{RGB}{255, 255, 100}
	\definecolor{ColorPK}{RGB}{100, 255, 50}   

	\newcolumntype{C}[1]{>{\centering\let\newline\\\arraybackslash\hspace{0pt}}p{#1}}
	\newcolumntype{D}{>{\centering\let\newline\\\arraybackslash\hspace{0pt}}X}

	\newcolumntype{b}{>{\hsize=1.3\hsize}D}
	\newcolumntype{m}{>{\hsize=1\hsize}D}
	\newcolumntype{s}{>{\hsize=.7\hsize}D}

	\newcommand{\columnSep}{0.5cm}
	\newcommand{\rowDist}{1.6cm}

	\newcommand{\TableWidth}{6.7cm}
	\centering \small
	\begin{tikzpicture}

		\paramBox{vc}{0,0}{Vertex Cover Number~$\mathsf{vc}$}{
				 \cite{BJK14} & \cellcolor{ColorPK} \cref{prop:vertex-cover:poly-kernel-max} & \cellcolor{ColorPK}
				\\\hline
				\multicolumn{2}{c|}{\cellcolor{ColorPK}} & \cellcolor{ColorPK} \cref{prop:vertex-cover:poly-kernel-min}\\
		}{\cref{sec:vc}}
		
		\paramBox{max}{\TableWidth + \columnSep,0}{Max Degree~$\Delta$}{
				\multicolumn{2}{c|}{\cellcolor{ColorFPT} \cref{thm:maxdeg:no-poly-kernel-const-delta}}
				& \cellcolor{ColorPK} \\\hline
				
				\multicolumn{2}{c|}{\cellcolor{ColorPK} }
				& \cellcolor{ColorFPT}\cref{thm:maxdeg:no-poly-kernel-const-delta} \\
		}{\cref{sec:maxdeg}}

		\paramBox{hind}{0,-\rowDist}{$h$-index}{
				\multicolumn{2}{c|}{\cellcolor{ColorFPT} \cref{prop:hindex:fpt-internal-max}} 
				& \cellcolor{ColorPK}\cref{prop:h-index:poly-kernel} \\ \hline
				\multicolumn{2}{c|}{\cellcolor{ColorPK}} & \cellcolor{ColorFPT} \\
		}{\cref{sec:vc}}

		\paramBox{dg}{0,-2*\rowDist}{degeneracy~$d$}{
				\multicolumn{2}{c|}{\cellcolor{ColorHard}\cref{thm-hardness-small-alpha-for-deg-and-closure}} 
				& \cellcolor{ColorPPK} \cref{thm:degeneracy:max-border-case} \\ \hline
				\multicolumn{2}{c|}{\cellcolor{ColorPK}\cref{thm-minimization-kernel-d+k}} & \cellcolor{ColorFPT}\cref{lem-minimization-fpt-for-d} \\
		}{\cref{sec:degeneracy}}

		\paramBox{cc}{\TableWidth + \columnSep,-1.25*\rowDist}{$c$-closure}{
				\multicolumn{2}{c|}{\cellcolor{ColorHard}\cref{thm-hardness-small-alpha-for-deg-and-closure}} 
				& \cellcolor{ColorPPK} \cref{thm:degrading:closure} \\ \hline
				\cellcolor{ColorPK} \cite{KKS20a} & \cellcolor{ColorPPK} \cref{thm:degrading:closure} & \cellcolor{ColorHard}\cref{thm:nondeg:max:c} \\
		}{\cref{sec:closure}}
		
		\draw[thick] (vc) -- (hind) -- (dg);
		\draw[thick] (max) -- (hind);
		\draw[thick] (max) -- (cc);

		\paramBox{legend}{1.1,-3*\rowDist}{Parameter~$p$}{
				$\alpha = 0$  & $\alpha \in (0,1/3)$ & $\alpha \in (1/3,1]$ \\ \hline $\alpha = 0$  & $\alpha \in (0,1/3)$ & $\alpha \in (1/3,1]$ \\ 
		}{Section with results}
		\node[anchor=south west] at (0,-3*\rowDist - 1pt) {
			\begin{tabularx}{1.05cm}{l}
				Max: \\ \hline
				Min:
			\end{tabularx}
		};

		\node[draw=black,very thick,inner sep = 0pt,anchor=south east] at (\textwidth,-3*\rowDist + 2pt) {
			\begin{tabularx}{5.5cm}{D}
				\cellcolor{ColorHard} W[1]-hard wrt.~$k$ for constant~$p$\\ \hline 
				\cellcolor{ColorFPT} FPT, no $k^{g(p)}$ kernel for any~$g$ \\ \hline
				\cellcolor{ColorPPK} $k^{\Oh(p)}$ kernel, no~$k^{o(p)}$ kernel \\ \hline
				\cellcolor{ColorPK} $(k+p)^{\Oh(1)}$ kernel \\ \hline
				\cellcolor{white} $2^{p}$ kernel, no $(k+p)^{\Oh(1)}$ kernel \\ \hline
			\end{tabularx}
		};
		
	\end{tikzpicture}

	\caption{
		Overview of our results. 
		Each box displays the parameterized results (see also bottom right) with respect to~$k$ and the corresponding parameter~$p$ for all variants (maximization, minimization, and all $\alpha \in [0,1]$, see bottom left).
		Note that the split of the boxes is not proportional to the corresponding values of~$\alpha$.
		See \Cref{sec:prelim} (paragraph ``Graph parameter definitions.'') for the definitions of the parameters.
		A line from a box for parameter~$p$ to a box \emph{above} for parameter~$p'$ implies that~$p \in \Oh(p')$ on all graphs.
		Thus, hardness results propagate downwards along lines and tractability results propagate upwards along lines.}
	\label{fig:overview}
\end{figure}



\paragraph{Known results.}

\textsc{MaxPVC} can be solved in  $\Oh^*((\Delta + 1)^k)$~time\footnote{The~$\Oh^*$ notation hides polynomial factors in the input size.} where~$\Delta$ is the maximum degree of the input graph~\cite{RS08}. 
For the degeneracy $d$, Amini et al.~\cite{AFS11} gave an $\Oh^*((dk)^k)$-time algorithm which was recently improved to an algorithm with running time~$\Oh^*(2^{\Oh(dk)})$~\cite{PY22}.
Bonnet et al.~\cite{BEPT15} showed that  in~$\Oh^*(\Delta^k)$~time one can solve \probMax{}  for all~$\alpha > 1/3$ and \probMin{} for all~$\alpha < 1/3$. Bonnet et al.~\cite{BEPT15} call these two problem cases \emph{\borderFocusedS{}}.
This name reflects the fact that in \probMax{} with~$\alpha > 1/3$, adding a vertex~$v$ to a set~$S$ increases the value at least as much as adding~$v$ to a superset of~$S$.\footnote{Note that this matches the definition of submodularity.}     
This is because here one edge with both endpoints in~$S$ is less valuable than two edges each with one endpoint in~$S$. In~\probMin{} this effect is reversed since we aim to minimize~$\val$. The other problem cases are called \emph{\internalFocusedS{}}.
For~\internalFocusedS{} problems, Bonnet et al.~\cite{BEPT15} achieved a running time of~$\Oh^*((\Delta k)^{\Oh(k)})$ and asked whether they can also be solved in~$\Oh^*(\Delta^{\Oh(k)})$~time. This question was answered positively by Shachnai and Zehavi~\cite{SZ17}, who showed that \probMax{} and \probMin{} can be solved in~$\Oh^*(4^{k+o(k)}\Delta^k)$ time.

Kernelization has been studied only for special cases.
\textsc{Max ($k$,$n-k$)-Cut} admits a polynomial problem kernel when parameterized by~$t$~\cite{SZ18}. This also gives a polynomial kernel for~$k+\Delta$ since instances with~$t>\Delta k$ are trivial no-instances.
It is also known that \textsc{Sparsest~$k$-Subgraph} admits a kernel with $\gamma k^2$~vertices~\cite{KKS20a}.
Here,~$\gamma$ is a parameter bounded by~$\max(c, d+1)$ where~$c$ is the so-called~$c$-closure of the input graph~\cite{FRSWW20}. We will describe this parameter in more detail below.
In contrast, \textsc{Densest-$k$ Subgraph} is unlikely to admit a polynomial problem kernel when parameterized by~$\Delta + k$ since \textsc{Clique} is a special case.

Independent of our work, a polynomial \emph{compression} for \textsc{MaxPVC} (the special case of \probMax{} with~$\alpha=1/2$) of size $(dk)^{\Oh(d)}$ was recently discovered by Panolan and Yaghoubizade~\cite{PY22}.
This result and our kernel of size~$k^{\Oh(d)}$ (\Cref{thm:degeneracy:max-border-case}) both independently answer an open question of Amini, Fomin, and Saurabh~\cite{AFS11}.
They asked whether \textsc{MaxPVC} admits a polynomial kernel in planar graphs.
Our results directly imply such a kernel since planar graphs have degeneracy at most~$5$. 


More broadly, for many graph problems which are W[1]-hard with respect to the solution size parameter~$k$, the study of kernelization on sparse graphs~\cite{CGH17,CPPW12,KKS21,PRS12} or on graphs with bounded~$c$-closure~\cite{KMRS22,KKS20,KKS20a,KKS21} has received a lot of attention in the recent years. 
In particular,~Kanesh et al.~\cite{KMRS22} posed as an open question to determine the complexity of \textsc{Partial Vertex Cover} with respect to~$k$ on~$c$-closed graphs; we answer this question positively by providing a kernel of size~$k^{\Oh(c)}$ (see \Cref{thm:closure:kernel}). 

\paragraph{Our results.}

We provide a complete picture of the
parameterized complexity of \probMax{} and \probMin{} for all~$\alpha$
with respect to the combination of~$k$ and five parameters describing
the graph structure: the maximum degree~$\Delta$ of~$G$, the $h$-index
of~$G$, the degeneracy of~$G$, the $c$-closure of~$G$, and the vertex cover number~$\mathsf{vc}$ of~$G$.
With the exception of the $c$-closure, all parameters are sparseness
measures. The $c$-closure, first described by Fox et
al.~\cite{FRSWW20}, measures how strongly a graph adheres to the
triadic closure principle. Informally, the closure of a graph is small
whenever all vertices with many common neighbors are also neighbors of
each other. For a formal definition of all parameters refer to \Cref{sec:prelim}.

Our results are summarized by \cref{fig:overview}. On a very
general level, our main finding suggests that the \borderFocusedS{} problems
are much more amenable to FPT algorithms and kernelizations than their
\internalFocusedS{} counterparts. No such difference is observed
when considering the running time of FPT algorithms for the parameter~$k+\Delta$ but it
becomes striking in the context of kernelization and when using
 secondary parameters that are smaller than~$\Delta$.  Given the importance of the distinction
between the \borderFocusedS{} and \internalFocusedS{} cases, we
distinguish these subcases of \probMax{} and \probMin{} by name (\probBorMax{}, \probIntMax{}, \probBorMin{}, \probIntMin{}). 
We use the term \prob{} when we simultaneously refer to \probMax{} and \probMin{}.

On a technical level, a first contribution is the introduction of an annotated version of the problem that keeps track of removed vertices by using vertex weights (called $\counter$). This simplifies the technical details of dealing with vertices identified as (not) being part of a solution. Moreover, this annotated version allows us to formulate the rules for dealing with such vertices in a unified manner for all special cases of~\prob{}. As one may expect, the general approach for the kernel is to reduce to the annotated problem, perform data reduction on the annotated instance, and then reduce back to the non-annotated problem. There is, however, one technical difficulty in the last step: The size of the non-annotated instance depends not only on the number of vertices of the annotated instance but also on the largest $\counter$-value. Now, for the parameter combination of~$k$ and~$\Delta$ this is not really a problem since the maximum counter value is bounded in~$\Oh(\Delta)$ for fixed~$\alpha$.
For the more sophisticated kernelizations for the degeneracy~$d$ and the $c$-closure we now perform two steps: First, we decrease the maximum degree of the instance as this allows us, in principle, to use the kernel for~$k+\Delta$. Second, we decrease the largest $\counter$-value as this value may not be bounded in~$\Oh(k+\Delta)$ after the first step which decreases the maximum degree of the instance.

Now, we describe our idea to decrease the maximum degree.
In this high-level description, we only focus on the arguments for maximization.
The minimization variant can be handled similar.
We make use of Ramsey bounds. 
More precisely, the Ramsey bounds help us to find a large independent set $I$ such that all vertices outside of~$I$ have only a bounded number of neighbors in $I$. 
This then allows to prove by pigeonholing the following for the vertex $v$ of $I$ whose addition currently gives the smallest increase to the objective function: 
No matter what the optimal solution selects outside of~$I$, there is always some vertex of~$I\setminus \{v\}$ that increases the objective function at least as much as~$v$. 
For the parameter~$c$, we also need an additional pigeonhole argument excluding large cliques in order to apply the Ramsey bound. 
For the parameter $d$, we establish a new constructive Ramsey bound for $K_{i,j}$-free graphs that may be of independent interest. 
To decrease the largest $\counter$-values, we describe reduction rules that make use of the following crucial observation:  
If there exists a vertex~$v$ whose addition increases the objective function by far more than~$t/k$, then we can add~$v$ to the solution and if its addition increases the objective function by far less than~$t/k$, then we can remove~$v$. 
Note that we can think of~$t/k$ as the increase of the objective function that is needed on average to find a solution of value at least~$t$.  

We remark that when we describe the kernel sizes for $\alpha > 0$ (for example, \Cref{prop:general:delta}), a factor of $\alpha^{-x}$, where~$x$ is a small constant, is hidden by the $\Oh$ notation.
We would like to emphasize, however, that the exponents in the kernel size such as $\Oh(c)$ and $\Oh(d)$ do not depend on $\alpha^{-1}$.
In contrast, the lower bounds such as \Cref{thm:maxdeg:no-poly-kernel-const-delta} hold indeed for all~$\alpha$ in the range corresponding to the case. 

We believe that this general approach could be useful for other parameterizations that are not considered in this work.  
A somewhat surprising consequence of our kernelizations is that \textsc{Partial Vertex Cover} and \textsc{Max $(k,n-k)$-Cut} not only behave in the same way but that the kernels for both problems can also be obtained by the same algorithms. 




\section{Preliminaries}
\label{sec:prelim}

For~$q \in \mathds{N}$, we write~$[q]$ to denote the set~$\{ 1, 2, \dots, q \}$.
For a graph~$G$, we denote its \emph{vertex set} by~$V(G)$ and its \emph{edge set} by~$E(G)$.
Furthermore, by~$n(G)\coloneqq |V(G)|$ we denote the \emph{number of vertices} and by~$m(G)\coloneqq |V(G)|$ we denote the \emph{number of edges} of~$G$.
Let~$X,Y\subseteq V(G)$ be vertex subsets.
We use~$G[X]$ to denote the \emph{subgraph induced} by~$X$.
Let $G - X$ denote the graph obtained by \emph{removing} the vertices in~$X$.
We denote by~$N_G(X)\coloneqq\{ y \in V(G) \setminus X \mid xy \in E(G), x \in X \}$ the \emph{open neighborhood} and by~$N_G[X]\coloneqq N_G(X) \cup X$ the \emph{closed neighborhood} of~$X$.
We also use the notation $N^{\cap}_G(X) \coloneq \bigcap_{x \in X}N(x)$.
By~$E_G(X,Y)\coloneqq\{xy\in E(G)\mid x\in X, y\in Y\}$ we denote the set of edges \emph{between}~$X$ and~$Y$.
As a shorthand, we set~$E_G(X)\coloneqq E_G(X,X)$.
Furthermore, we denote by~$m_G(X,Y)\coloneqq|E_G(X,Y)|$ and~$m_G(X)\coloneqq|E_G(X)|$ the \emph{sizes} of these edge sets.
For all these notations, when $X$ is a singleton~$\{ x \}$ we may write~$x$ instead of~$\{x\}$.
Let~$v \in V(G)$.
We denote the \emph{degree} of~$v$ by~$\deg_G(v)$.
We drop the subscript~$\cdot_G$ when it is clear from context.
We call~$v$ \emph{isolated} if~$\deg_G(v) = 0$ and \emph{non-isolated} otherwise.
We also say that~$v$ is a \emph{leaf vertex} if~$\deg_G(v) = 1$ and a \emph{non-leaf vertex} if~$\deg_G(v) \ge 2$.

\paragraph{Graph parameter definitions.}
For more information on parameterized complexity, including the definition of fixed-parameter tractability, kernelization, parameterized reductions, and W[1]-hardness, we refer to the standard monographs~\cite{CFK+15,DF13}.
A vertex cover of a graph is a set of vertices that covers all of its edges. We denote the size of a smallest vertex cover of a graph~$G$ by~$\mathsf{vc}_G$.
The maximum and minimum degree of~$G$ are~$\Delta_G \coloneqq \max_{v \in V(G)} \deg_G(v)$ and~$\delta_G \coloneqq \min_{v \in V(G)} \deg_G(v)$, respectively.
The degeneracy of~$G$ is~$d_G \coloneqq\max_{S \subseteq V(G)} \delta_{G[S]}$.
The~\emph{${h}$-index} of a graph~$G$ is the largest integer~$h_G$ such that~$G$ has at least~$h_G$ vertices of degree at least~$h_G$~\cite{ES12}.
We say that~$G$ is~$c$-closed for~$c\coloneqq \max(\{ 0 \} \cup \{ |N_G(u) \cap N_G(v)| \mid uv \notin E(G) \}) + 1$~\cite{FRSWW20}.

\paragraph{Ramsey numbers.}
Ramsey's theorem states that for every $p, q \in \mathds{N}$, there exists an integer~$R(p, q)$ such that any graph on at least $R(p, q)$ vertices contains either a clique of size~$p$ or an independent set of size $q$.
The numbers~$R(p, q)$ are referred to as \emph{Ramsey numbers}.
Although the precise values of Ramsey numbers are not known, some upper bounds have been proven.
For instance, it holds that $R(p, q) \le \binom{p + q - 2}{p - 1}$ (see e.g.\ \cite{Juk11}).
The proof for this upper bound is constructive.
More precisely, given a graph $G$ on at least $\binom{p + q - 2}{p - 1}$ vertices, we can find in time $n^{\Oh(1)}$ either a clique of size $p$ or an independent set of size $q$.

{
\paragraph{Weak compositions.}
Our kernel lower bounds are based on weak~$q$-compositions.



\begin{definition}[\cite{DM12,HW12}]
  Let~$q\ge 1$ be an integer, let~$L_1\subseteq \{0,1\}^*$ be a classic (non-parameterized) problem, and let~$L_2\subseteq \{0,1\}^*\times\mathds{N}$ be a parameterized  problem.  A \emph{weak~$q$-composition} from~$L_1$ to~$L_2$ is a polynomial time algorithm that on input~$x_1, \ldots ,x_{t^q}\in\{0,1\}^n$ outputs an instance~$(y,k')\in\{0,1\}^*\times\mathds{N}$ such that:
\begin{itemize}
    \item $(y,k')\in L_2 \Leftrightarrow x_i\in L_1 \text{ for some } i \in [t^q]$, and
    \item $k'\le t\cdot n^{\Oh(1)}$.
  \end{itemize}
\end{definition}

\begin{lemma}[\cite{CFK+15,DM12,HW12}] \label{lemma:is:noq}
  Let~$q\ge 1$ be an integer, let~$L_1\subseteq \{0,1\}^*$ be a classic NP-hard problem, and let~$L_2\subseteq \{0,1\}^*\times\mathds{N}$ be a parameterized problem.
  The existence of a weak~$q$-composition from~$L_1$ to~$L_2$ implies that~$L_2$ has no compression of size~$\Oh(k^{q-\epsilon})$ for any~$\epsilon >0$, unless \PHC.
\end{lemma}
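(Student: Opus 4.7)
The plan is to prove this by contradiction, combining the hypothetical compression of $L_2$ with the weak $q$-composition to build a procedure that compresses an OR of many NP-hard instances below the threshold permitted by the weak distillation framework. Suppose for contradiction that $L_2$ admits a compression of size $\Oh((k')^{q-\epsilon})$ for some $\epsilon > 0$, i.e., a polynomial-time algorithm producing an equivalent instance of some (possibly different) problem of bit length $\Oh((k')^{q-\epsilon})$.

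Given $t^q$ instances $x_1, \ldots, x_{t^q} \in \{0,1\}^n$ of the NP-hard problem $L_1$ (with $t$ to be chosen as a suitable polynomial in $n$), I would first run the weak $q$-composition to produce in polynomial time an instance $(y, k') \in \{0,1\}^* \times \mathds{N}$ of $L_2$ with $k' \le t \cdot n^{c_1}$ for some constant $c_1$, satisfying $(y, k') \in L_2$ if and only if at least one $x_i$ lies in $L_1$. Then I would apply the hypothetical compression to $(y, k')$ to obtain a string $z$ of length
\[
  \Oh\bigl((k')^{q-\epsilon}\bigr) \;=\; \Oh\bigl(t^{q-\epsilon} \cdot n^{c_2}\bigr)
\]
for some constant $c_2$, which is still equivalent to the OR $\bigvee_{i=1}^{t^q} [x_i \in L_1]$. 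Note that the exponent of $t$ has dropped from $q$ to $q - \epsilon$, while only a fixed polynomial slack in $n$ is introduced.

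The main obstacle is then to match this output exactly to the hypothesis of a standard distillation lower bound. I would invoke the weak distillation theorem of Dell and van Melkebeek (building on Fortnow--Santhanam), which in the form developed by Hermelin and Wu asserts that, unless \PHC{}, there is no polynomial-time procedure mapping $t^q$ instances of an NP-hard problem, each of size $n$, to a single string of length $\Oh(t^{q-\epsilon} \cdot \mathrm{poly}(n))$ that decides their OR. Choosing $t \coloneqq n^{c_3}$ for a sufficiently large constant $c_3$ (so that $c_3 \epsilon$ dominates $c_2$ and $c_1$ appropriately) makes our construction fit precisely this forbidden regime. The resulting contradiction yields \PHC, proving the contrapositive claimed in the lemma. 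The delicate bookkeeping step — verifying that the polynomial-in-$n$ overhead does not swamp the gain $t^{-\epsilon}$ — is where the choice of $t$ as a suitable polynomial in $n$ is essential, and is the only nonroutine part of the argument.
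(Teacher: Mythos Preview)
The paper does not prove this lemma at all; it is cited as a known result from \cite{CFK+15,DM12,HW12} and used as a black box in the kernel lower-bound proofs. Your sketch is a faithful outline of the standard argument from those references: compose the weak $q$-composition with the assumed $\Oh((k')^{q-\epsilon})$ compression to obtain an OR-distillation of $t^q$ size-$n$ instances into $\Oh(t^{q-\epsilon}\cdot n^{\Oh(1)})$ bits, and then appeal to the weak distillation lower bound (Dell--van Melkebeek / Fortnow--Santhanam, as packaged by Dell--Marx and Hermelin--Wu) to derive \PHC. The step of choosing $t$ polynomial in $n$ to absorb the $n^{\Oh(1)}$ slack is exactly how the cited sources handle it, so there is nothing to compare against in the present paper.
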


}

\section{A Data Reduction Framework via Annotation}
\label{sec:annotation}

In this section, we introduce an annotated variant which gives more options for encoding information in the instances, allowing easier handling for kernelization.
Moreover, to avoid repeating certain basic arguments, we provide general data reduction rules and statements used in the subsequent sections.
Finally, we describe how to reduce from the annotated to the non-annotated problem variants in polynomial time.

In the annotated problem variant we encode that some vertices are decided to be in the solution and some vertices are decided to not be in the solution.
To this end, we have additionally as input a (possibly empty) partial solution $T \subseteq V(G)$.
Moreover, for each vertex we store a number~$\counter\colon V \rightarrow \mathds{N}$ which encodes the number of deleted neighbors not in the solution.
We will assume throughout the paper that $\counter(v) = 0$ for every $v \in T$.
For a set~$S \subseteq V(G)$, we set
\begin{itemize}
	\item $\counter(S) \coloneqq \sum_{v \in S} \counter(v)$ and
	\item $\val_G(S) \coloneqq \alpha (m(S, V(G) \setminus S) + \counter(S)) + (1 - \alpha) m(S)$.
\end{itemize}
For a vertex~$v \in S$, we set~$\degCounter(v) \coloneqq \deg(v) + \counter(v)$.

\problemdef
{\probAnoMax{}}
{A graph $G$, $T \subseteq V(G)$, $\counter\colon V(G) \rightarrow \mathds{N}$, $k \in \mathds{N}$, and~$t \in \mathds{Q}$.}
{Is there a vertex set~$S$ of size~$k$ such that~$T \subseteq S \subseteq V(G)$   and~$\val_G(S) \ge t$ (\textsc{Max}) or $\val_G(S) \le t$ (\textsc{Min}), respectively?}

A vertex set~$S$ fulfilling these requirements is referred to as a \emph{solution}.
Now, we define the \emph{contribution} of a vertex.
The contribution of a vertex~$v$ is a measure on how much the value of a partial solution~$T$ increases if~$v$ is added to~$T$.
Note that our definition slightly differs from that of Bonnet et al.~\cite{BEPT15}.

\begin{definition}
\label{def-contribution}
For a vertex set $T \subseteq V(G)$, we define the \emph{contribution} of a vertex~$v$ as
\begin{align*}
	\contribution(v, T) \coloneqq {}& \alpha \cdot (|N(v) \setminus T| + \counter(v)) + (1 - 2 \alpha) |N(v) \cap T| \\ 
	={}& \alpha \degCounter(v) +  (1 - 3 \alpha) |N(v) \cap T|.
\end{align*}
\end{definition}

The contribution is chosen so that the value~$\val(S)$ of a vertex set~$S$ computes as follows. 
\begin{lemma}
	\label{lemma:val}
	Let~$G$ be a graph and~$S \coloneqq \{ v_1, \dots, v_\ell \} \subseteq V(G)$ a vertex set.
	Then, it holds that $\val(S) =  \sum_{i \in [\ell]} \contribution(v_i, \{ v_1, \dots, v_{i - 1} \})$.
\end{lemma}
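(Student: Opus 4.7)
The plan is to proceed by induction on $\ell$, enumerating the vertices $v_1, \dots, v_\ell$ of $S$ in the given order and letting $S_j \coloneqq \{v_1, \dots, v_j\}$. The base case $\ell = 0$ is trivial: $\val(\emptyset) = 0$ since $m(\emptyset) = m(\emptyset, V(G)) = \counter(\emptyset) = 0$, and the empty sum is also $0$. For the inductive step it suffices to establish the single-vertex increment identity
\[
\val(S_i) - \val(S_{i-1}) \;=\; \contribution(v_i, S_{i-1}),
\]
after which telescoping over $i = 1, \dots, \ell$ yields the claim.

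To prove the increment identity, I would separately track how each of the three quantities $\counter(S)$, $m(S)$, and $m(S, V(G)\setminus S)$ changes when $v_i$ is appended to $S_{i-1}$. The counter sum increases by exactly $\counter(v_i)$. The internal edge count $m(S)$ increases by $|N(v_i)\cap S_{i-1}|$, since the newly created internal edges are precisely those from $v_i$ to its neighbors already in $S_{i-1}$. The cut edge count $m(S, V(G)\setminus S)$ changes in two opposing ways: we gain $|N(v_i)\setminus S_{i-1}|$ new cut edges (from $v_i$ to its neighbors lying outside $S_i$) but lose $|N(v_i)\cap S_{i-1}|$ previously-cut edges that have now become internal. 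Plugging these deltas into the definition of $\val$ and simplifying gives
\[
\alpha\bigl(|N(v_i)\setminus S_{i-1}| + \counter(v_i)\bigr) + \bigl((1-\alpha) - \alpha\bigr)|N(v_i)\cap S_{i-1}|,
\]
which is exactly $\contribution(v_i, S_{i-1})$ by \Cref{def-contribution}.

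The proof is essentially a bookkeeping exercise, so there is no real obstacle; the only subtle point worth flagging is the origin of the coefficient $(1-2\alpha)$ on $|N(v)\cap T|$ in the definition of contribution. It is not an arbitrary choice but reflects the fact that each edge from $v_i$ to a previously inserted vertex simultaneously gains weight $(1-\alpha)$ as a freshly created internal edge and sheds weight $\alpha$ as a cut edge that no longer exists, netting $1-2\alpha$. Once this is understood, the algebra collapses immediately and the induction closes.
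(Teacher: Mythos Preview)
Your proof is correct and essentially equivalent to the paper's. The only difference is organizational: the paper expands $m(S,V(G)\setminus S)$ and $m(S)$ directly as sums over $i$ and then collects terms, whereas you prove the single-step increment identity $\val(S_i)-\val(S_{i-1})=\contribution(v_i,S_{i-1})$ and telescope; both are the same bookkeeping computation, and your presentation is arguably a bit cleaner.
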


\begin{proof}
  Let $S_i = \{ v_1, \dots, v_{i - 1} \}$ and $\overline{S_i} = \{ v_i, \dots, v_{\ell} \}$ for each $i \in [\ell]$.
  Observe that
  \begin{align*}
    m(S, V(G) \setminus S) &= \sum_{i \in [\ell]} |N(v_i) \setminus S| = \sum_{i \in [\ell]} |N(v_i) \setminus S_{i - 1}| - |N(v_i) \cap \overline{S_i}| \text{ and } \\
    m(S) &= \sum_{i \in [\ell]} |N(v_i) \cap S_{i - 1}| = \sum_{i \in [\ell]} |N(v_i) \cap \overline{S_i}|.
  \end{align*}
  We thus have
  \begin{align*}
    \val(S) &= \alpha (m(S, V(G) \setminus S) + m(S) + \counter(S)) + (1 - 2\alpha)\cdot m(S) \\
    &= \sum_{i \in [\ell]} \bigg( \alpha [ (|N(v_i) \setminus S_{i - 1}| - |N(v_i) \cap \overline{S_i}|) + |N(v_i) \cap \overline{S_i}| + \counter(v)] \\ 
    &\qquad\qquad + (1 - 2\alpha) |N(v_i) \cap S_{i - 1}| \bigg)\\
    &=\sum_{i\in[\ell]}\contribution(v_i,\overline{S_i}) = \sum_{i\in[\ell]}\contribution(v_i,S_i),
  \end{align*}
  which proves the lemma.
\end{proof}

For a vertex $v$ and two sets $X \subseteq Y \subseteq V(G)$, we have $\contribution(v, X) \ge \contribution(v, Y)$ for $\alpha \in (1/3, 1]$ and $\contribution(v, X) \le \contribution(v, Y)$ for $\alpha \in [0, 1/3)$ by \Cref{def-contribution}.
Note that a function~$f:2^P\to \mathds{Q}$ such that for each~$X,Y\subseteq P$ with~$X\subseteq Y$ and for each element~$v\in P\setminus Y$ it holds that~$f(X\cup\{v\})-f(X)\ge f(Y\cup\{v\})-f(Y)$ is called \emph{submodular} and a function~$g:2^P\to \mathds{Q}$ such that for each~$X,Y\subseteq P$ with~$X\subseteq Y$ and for each element~$v\in P\setminus Y$ it holds that~$g(X\cup\{v\})-g(X)\ge g(Y\cup\{v\})-g(Y)$ is called \emph{supermodular}.
By \Cref{lemma:val} we have~$\val_G(X\cup\{v\})=\val_G(X)+\contribution(v,X)$.
Hence, we conclude the following.

\begin{observation}
       \label{obs-val-is-submodular}
       The function~$\val_G(\cdot)$ is submodular for~$\alpha\in(1/3,1]$ and supermodular for~$\alpha\in[0,1/3)$.
\end{observation}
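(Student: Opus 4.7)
The plan is to observe that the statement is essentially a two-line consequence of \Cref{def-contribution} combined with \Cref{lemma:val}, so the proof just needs to make the dependency on the sign of $1 - 3\alpha$ explicit. First I would recall that, by \Cref{lemma:val} applied with $\ell = |X| + 1$ (ordering the elements of $X \cup \{v\}$ so that $v$ comes last), $\val_G(X \cup \{v\}) - \val_G(X) = \contribution(v, X)$, and analogously for $Y$. Hence, verifying (super)modularity reduces to comparing $\contribution(v, X)$ and $\contribution(v, Y)$ for arbitrary $X \subseteq Y \subseteq V(G)$ and $v \in V(G) \setminus Y$.

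Next I would use the second form of the contribution given in \Cref{def-contribution}, namely $\contribution(v, T) = \alpha \degCounter(v) + (1 - 3\alpha)\,|N(v) \cap T|$. Subtracting yields
\begin{equation*}
    \contribution(v, X) - \contribution(v, Y) = (1 - 3\alpha)\,\bigl(|N(v) \cap X| - |N(v) \cap Y|\bigr),
\end{equation*}
and since $X \subseteq Y$ the bracketed quantity is non-positive. I would then split on the sign of $1 - 3\alpha$: for $\alpha \in (1/3, 1]$ the factor is negative so the whole difference is non-negative, giving $\contribution(v, X) \ge \contribution(v, Y)$ and therefore $\val_G(X \cup \{v\}) - \val_G(X) \ge \val_G(Y \cup \{v\}) - \val_G(Y)$, which is exactly submodularity. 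For $\alpha \in [0, 1/3)$ the factor is positive, reversing the inequality and establishing supermodularity.

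There is no real obstacle here; the only thing to be careful about is to make sure the invocation of \Cref{lemma:val} is applied consistently with some ordering of $X$ and $Y$ (both using $v$ as the final element) so that the telescoping identity $\val_G(T \cup \{v\}) = \val_G(T) + \contribution(v, T)$ is justified for an arbitrary $T$, rather than only for an initial segment of a fixed enumeration. Once this is noted, the rest is a single sign check.
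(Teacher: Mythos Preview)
Your proposal is correct and is essentially the same argument the paper gives (inline, in the paragraph immediately preceding the observation): use \Cref{lemma:val} to write the marginal gain as $\contribution(v,\cdot)$, then read off from \Cref{def-contribution} that $\contribution(v,X)-\contribution(v,Y)=(1-3\alpha)\bigl(|N(v)\cap X|-|N(v)\cap Y|\bigr)$ and conclude by the sign of $1-3\alpha$. There is nothing to add.
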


 \subsection{Main Reduction Rules \& Basic Exchange Argument}
Annotations are helpful for data reductions in the following way:
If we identify a vertex~$v$ that is (or is not) in a solution, then, we can simplify the instance as follows using the annotations.

\begin{rrule}[Inclusion Rule]
	\label{rr:general:how-to-include}
	If there is a solution $S$ with $v \in S \setminus T$, then add~$v$ to~$T$.
	If there is a vertex~$v \in T$ with~$\counter(v) > 0$, then decrease~$t$ by~$\alpha \cdot \counter(v)$ and set~$\counter(v) \coloneqq 0$. 
\end{rrule}

\begin{rrule}[Exclusion Rule]
	\label{rr:general:how-to-exclude}
	If there is a solution $S$ with $v \notin S$, then for each~$u\in N(v)$ increase~$\counter(u)$ by one and remove~$v$ from~$G$.
\end{rrule}

The correctness of these two rules follows by the definitions of a partial solution and of the counter.
Notice that we maintain the aforementioned invariant that every vertex $v \in T$ has $\counter(v) = 0$ when applying the Inclusion Rule (\cref{rr:general:how-to-include}).
Furthermore, we observe the following.

\begin{observation}
	\label{obs-rr-preserve-params}
	The Inclusion Rule (\cref{rr:general:how-to-include}) and the Exclusion Rule (\cref{rr:general:how-to-exclude}) do not increase the parameters maximum degree~$\Delta$,~$c$-closure, degeneracy~$d$, vertex cover number~$\mathsf{vc},$ and~$h$-index.
\end{observation}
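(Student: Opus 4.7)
The plan is to handle the two rules separately and exploit the fact that neither of them modifies the graph in a way that could increase the listed parameters.

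First I would dispatch the Inclusion Rule (\cref{rr:general:how-to-include}): it only modifies $T$, $\counter$, and $t$, leaving the graph $G$ entirely untouched. Hence $\Delta$, $c$-closure, $d$, $\mathsf{vc}$, and the $h$-index are literally unchanged, and there is nothing to prove.

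Next I would turn to the Exclusion Rule (\cref{rr:general:how-to-exclude}), which changes the graph by deleting a single vertex $v$ (the counter updates on $N(v)$ do not affect the graph). Thus it suffices to observe that vertex deletion is monotone with respect to each of the five parameters. For $\Delta$, the degree of every remaining vertex in $G-v$ is at most its degree in $G$. For the degeneracy $d$, every induced subgraph of $G-v$ is an induced subgraph of $G$, so the maximum over these of the minimum degree can only shrink. For $\mathsf{vc}$, any minimum vertex cover $C$ of $G$ yields a vertex cover $C \setminus \{v\}$ of $G-v$ of no larger size. For the $h$-index, removing $v$ can only reduce the number of vertices of any given degree, so the largest $h$ with at least $h$ vertices of degree at least $h$ cannot grow. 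Finally, for the $c$-closure, for any non-adjacent pair $u,w \in V(G-v)$ we have $N_{G-v}(u) \cap N_{G-v}(w) \subseteq N_G(u) \cap N_G(w)$, so the maximum common neighborhood size over non-adjacent pairs does not increase, and hence neither does~$c$.

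I do not foresee a real obstacle; the entire statement reduces to standard monotonicity of graph parameters under vertex deletion plus the trivial remark that the Inclusion Rule does not alter the graph. The only minor care needed is to make sure the counter updates in the Exclusion Rule really are irrelevant to the graph-theoretic parameters in question, which is immediate since $\counter$ is a vertex labelling and none of $\Delta$, $c$, $d$, $\mathsf{vc}$, $h$ depend on it.
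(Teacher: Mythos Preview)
Your proposal is correct. The paper actually states this as an observation without any proof, treating it as immediate; your argument (Inclusion Rule leaves $G$ unchanged, Exclusion Rule is a vertex deletion, and all five parameters are monotone under taking induced subgraphs) is exactly the standard justification one would supply and is complete as written.
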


The reduction rules themselves are simple.
The difficulty lies of course in identifying vertices that are included in or excluded from some solution.
In the respective arguments, we use the following notion of better vertices.

 \subsection{Better Vertices} 
The following notion captures a situation that frequently appears in our arguments for the annotated problem variant and allows for simple exchange arguments (see \cref{lemma:replace}).

\begin{definition}
	A vertex~$v \in V(G)$ is \emph{better} than~$u \in V(G)$ with respect to a vertex set $T \subseteq V(G)$ if $\contribution(v, T) \ge \contribution(u, T)$ for the maximization variant (if $\contribution(v, T) \le \contribution(u, T)$ for the minimization variant).
	
	A vertex~$v \in V(G)$ is \emph{strictly better} than~$u \in V(G)$ if for all~$T \subseteq V(G)$ of size at most~$k$ we have $\contribution(v, T) \ge \contribution(u, T)$ for the maximization variant ($\contribution(v, T) \le \contribution(u, T)$ for the minimization variant).
\end{definition}

When we simply say that $v$ is better than $u$, we mean that $v$ is better than $u$ with respect to the empty set.
The following lemma immediately follows from \Cref{lemma:val}.
\begin{lemma}
  \label{lemma:replace}
  Let $S$ be a solution of an instance of \probAno{}.
  Suppose that there are two vertices $v \in S$ and $v' \notin S$ such that $v'$ is better than $v$ with respect to $S \setminus \{ v \}$ or~$v'$ is strictly better than~$v$.
  Then, $S' \coloneqq (S \setminus \{ v \}) \cup \{ v' \}$ is also a solution.
\end{lemma}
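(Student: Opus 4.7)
The plan is to derive the lemma directly from \cref{lemma:val} (the telescoping decomposition of $\val$) by choosing an ordering of $S$ that places $v$ last and an ordering of $S'$ that agrees on the first $k-1$ positions and places $v'$ last. Concretely, I would enumerate the vertices of $S\setminus\{v\}$ as $v_1,\dots,v_{k-1}$ in any fixed order, and apply \cref{lemma:val} both to the sequence $v_1,\dots,v_{k-1},v$ (giving $\val(S)$) and to the sequence $v_1,\dots,v_{k-1},v'$ (giving $\val(S')$). Since the first $k-1$ summands coincide, everything cancels except the final term, so
\[
\val(S')-\val(S) \;=\; \contribution(v',S\setminus\{v\}) \;-\; \contribution(v,S\setminus\{v\}).
\]

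Next I would handle the two hypotheses in turn. If $v'$ is better than $v$ with respect to $S\setminus\{v\}$, then by definition the right-hand side above is $\ge 0$ in the maximization variant and $\le 0$ in the minimization variant; either way, $\val(S')$ satisfies the required threshold whenever $\val(S)$ does. If instead $v'$ is strictly better than $v$, then because $|S\setminus\{v\}|=k-1\le k$, the strictly-better condition specializes to the same inequality between $\contribution(v',S\setminus\{v\})$ and $\contribution(v,S\setminus\{v\})$, and the same conclusion follows.

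Finally, to confirm that $S'$ is genuinely a solution of the annotated instance, I need to verify the structural constraints $|S'|=k$ and $T\subseteq S'$. The cardinality is immediate since $v'\notin S$ implies $|S'|=|S|=k$. For the inclusion, $T\subseteq S$ and $v'\notin S$ together give $T\subseteq S'=(S\setminus\{v\})\cup\{v'\}$ provided $v\notin T$, which is the natural setting in which the lemma will be invoked (namely when comparing candidate replacements for some $v\in S\setminus T$); this also preserves the standing invariant $\counter(v)=0$ for $v\in T$.

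The main obstacle, such as it is, is purely bookkeeping: lining up the two orderings so that the first $k-1$ contributions really do cancel, and keeping the sign convention of \emph{better} consistent with the direction of the threshold in each variant. No new ideas beyond \cref{lemma:val} are required.
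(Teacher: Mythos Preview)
Your proposal is correct and follows essentially the same approach as the paper: both apply \cref{lemma:val} with $v$ (respectively $v'$) placed last to obtain $\val(S')-\val(S)=\contribution(v',S\setminus\{v\})-\contribution(v,S\setminus\{v\})$, and then conclude from the definition of \emph{better}. You are in fact more careful than the paper in two respects: you explicitly treat the ``strictly better'' hypothesis (noting that $|S\setminus\{v\}|\le k$ so it specializes to the needed inequality), and you observe that $T\subseteq S'$ tacitly requires $v\notin T$, a point the paper's proof leaves unaddressed.
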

\begin{proof}
  We give a proof for the maximization variant; the minimization variant follows analogously.
  By \Cref{lemma:val}, we have $\val(S') = \val(S \setminus \{ v \}) + \contribution(v', S \setminus \{ v \}) \ge \val(S \setminus \{ v \}) + \contribution(v, S \setminus \{ v \}) = \val(S)$.
  Here, the inequality follows from the fact that~$v'$ is better than~$v$.
\end{proof}

Observe that the contribution of any vertex~$v$ differs from~$\alpha \degCounter(v)$ by at most~$|(1 - 3\alpha)k|$.
This observation allows us to identify some strictly better vertices in the following.
This is helpful when we wish to apply the second part of \Cref{lemma:replace} on strictly better vertices.


\begin{lemma}
	\label{lem:general:condition-strictly-better}
	Let~$u,v \in V(G)$. 
	Vertex~$v$ is strictly better than~$u$ if
	\begin{itemize}
		\item (Maximization:) $\alpha\degCounter(u) \le \alpha \degCounter(v) - |(1 - 3\alpha)k|$.
		\item (Minimization:) $\alpha\degCounter(u) \ge \alpha \degCounter(v) + |(1 - 3\alpha)k|$.
	\end{itemize}
\end{lemma}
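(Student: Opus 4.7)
The plan is to prove both items directly from \Cref{def-contribution} by isolating the two contributing terms and then bounding the second term uniformly over all choices of $T$ with $|T|\le k$. Concretely, for any set $T$ of size at most $k$, we have
\[
\contribution(v,T)-\contribution(u,T) \;=\; \alpha\bigl(\degCounter(v)-\degCounter(u)\bigr) + (1-3\alpha)\bigl(|N(v)\cap T|-|N(u)\cap T|\bigr).
\]
Since $|N(v)\cap T|,|N(u)\cap T|\in\{0,1,\dots,k\}$, the quantity $|N(v)\cap T|-|N(u)\cap T|$ lies in $[-k,k]$, so the second term lies in the interval $[-|(1-3\alpha)k|,\,|(1-3\alpha)k|]$, regardless of the sign of $1-3\alpha$.

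For the maximization case, we want $\contribution(v,T)\ge\contribution(u,T)$, i.e.\ the displayed difference to be nonnegative. A sufficient condition is that the first term dominates the worst case of the second, i.e.
\[
\alpha\bigl(\degCounter(v)-\degCounter(u)\bigr)\;\ge\;|(1-3\alpha)k|,
\]
which is exactly the hypothesis $\alpha\degCounter(u)\le\alpha\degCounter(v)-|(1-3\alpha)k|$. As this is independent of $T$, it yields that $v$ is strictly better than $u$. The minimization case is symmetric: we require the displayed difference to be nonpositive, and in the worst case the second term equals $+|(1-3\alpha)k|$, so it suffices that $\alpha(\degCounter(v)-\degCounter(u))\le -|(1-3\alpha)k|$, i.e.\ exactly the stated hypothesis.

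I do not anticipate any real obstacle here; the only mild subtlety is that one must not split on $\alpha\lessgtr 1/3$ when bounding $(1-3\alpha)(|N(v)\cap T|-|N(u)\cap T|)$, because the symmetric range $[-k,k]$ of the integer difference absorbs the sign of $1-3\alpha$ automatically, and the absolute value $|(1-3\alpha)k|$ in the hypothesis is precisely calibrated to this symmetric bound. The proof is therefore a two-line calculation followed by reading off the two inequalities from the hypotheses.
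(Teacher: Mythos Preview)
Your proof is correct and follows essentially the same approach as the paper: both expand $\contribution(v,T)-\contribution(u,T)$ via \Cref{def-contribution}, use the hypothesis to bound the $\alpha\,\degCounter$ part, and then bound the $(1-3\alpha)(|N(v)\cap T|-|N(u)\cap T|)$ term by $|(1-3\alpha)k|$ using $|T|\le k$. Your phrasing of the second bound via the symmetric interval $[-|(1-3\alpha)k|,|(1-3\alpha)k|]$ is in fact a bit cleaner than the paper's final chain of inequalities, but the argument is the same.
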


{
\begin{proof}
	We give a proof for the maximization variant; the minimization variant follows analogously.
	By the definition of strictly better vertices, it suffices to show that~$\contribution(v, T) - \contribution(u, T) \ge 0$ for each $T \subseteq V(G)$ of size at most~$k$:
	\begin{align*}
		& \contribution(v, T) - \contribution(u, T) \\
		&= \alpha \degCounter(v) +  (1 - 3 \alpha) |N(v) \cap T| - \alpha \degCounter(u) -  (1 - 3 \alpha) |N(u) \cap T| \\
		&= \alpha (\degCounter(v) - \degCounter(u)) + (1 - 3 \alpha) (|N(v) \cap T| - |N(u) \cap T|) \\
		&\ge |(1 - 3\alpha)k| + (1 - 3 \alpha) (|N(v) \cap T| - |N(u) \cap T|) \ge |(1 - 3\alpha)k| + (1 - 3 \alpha)k  \ge 0.
	\end{align*}
	This completes the proof.
\end{proof}
}

 \subsection{Reduction to Non-annotated Variant}
The following two lemmas (for the maximization and the minimization variant, respectively) show that it is possible to remove annotations without blowing up the instance size.
However, the instance size after removing annotations will depend on $\Delta_{\overline{T}} \coloneqq \max_{v \in V(G) \setminus T} \deg(v)$ and $\Gamma \coloneqq \max_{v \in V(G) \setminus T} \counter(v) + 1$.
Note that the maximum degree $\Delta =\max_{v \in V(G)} \deg(V) \ge \Delta_{\overline{T}}$.
We obtain an upper bound on $\Gamma$ in terms of $k + \Delta$ in the next section.

\begin{lemma}
  \label{lemma:general:remove-ano}
  Given an instance $\mathcal{I}\coloneqq (G, T, \counter, k, t)$ of \probAnoMax{} with $\alpha \in (0, 1]$, we can compute an equivalent instance~$\mathcal{I}'$ of \probMax{} of size~$\Oh((\Delta_{\overline{T}} + \Gamma + \alpha^{-1}) \cdot |V(G)| + \alpha^{-1} k \cdot |T|))$ in polynomial time.
\end{lemma}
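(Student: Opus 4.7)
I would reduce $\mathcal{I}=(G,T,\counter,k,t)$ to a non-annotated instance $\mathcal{I}'=(G',k,t')$ by replacing the annotations with explicit pendant gadgets attached to the vertices of $G$. Construct $G'$ from $G$ by attaching, for every $v\in V(G)$, a set of $\counter(v)$ fresh degree-$1$ \emph{counter pendants} at $v$ (so that every ghost cut-edge encoded by the counter becomes a real cut-edge whenever $v$ is selected), and, for every $v\in T$, an additional set of $L$ degree-$1$ \emph{forcing pendants} at $v$, where $L=\Theta(\Delta_{\overline{T}}+\Gamma+\alpha^{-1}k)$ is chosen so that \Cref{lem:general:condition-strictly-better} makes each $v\in T$ strictly better in $G'$ than every non-$T$, non-pendant vertex (whose $G'$-degree is at most $\Delta_{\overline{T}}+\Gamma-1$) and strictly better than every pendant (which has $G'$-degree $1$). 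Set $k':=k$ and $t':=t+\alpha L|T|$, absorbing the fixed cut contribution of the $L|T|$ forcing pendants. Counting the added vertices gives $|V(G')|=\Oh(\Gamma|V(G)|+L|T|)$, which fits the claimed $\Oh((\Delta_{\overline{T}}+\Gamma+\alpha^{-1})|V(G)|+\alpha^{-1}k|T|)$ bound.

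\textbf{Forward direction.} For any solution $S$ of $\mathcal{I}$ with $T\subseteq S\subseteq V(G)$ and $|S|=k$, a direct computation yields $\val_{G'}(S)=\val_G(S)+\alpha L|T|$: the $\alpha\counter(S)$ ghost-term already inside $\val_G(S)$ is realized by the counter pendants, which all lie outside $S$ and therefore contribute $\alpha$ per edge, while the $L|T|$ forcing pendants contribute an additional $\alpha L|T|$ cut-edges since $T\subseteq S$. Hence $\val_{G'}(S)\ge t+\alpha L|T|=t'$.

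\textbf{Backward direction and main obstacle.} Given any solution $S'$ of $\mathcal{I}'$, I apply \Cref{lemma:replace} twice. First, the strict-betterness of the $T$-vertices lets us swap any non-$T$ element of $S'$ for a missing $v\in T\setminus S'$ without decreasing $\val_{G'}$, so we may assume $T\subseteq S'$. Second, every pendant remaining in $S'$ is swapped out in favor of a non-$T$, non-pendant vertex of $V(G)\setminus T\setminus S'$ via the same strict-betterness argument; afterwards $S'\subseteq V(G)$ with $|S'|=k$, and reversing the forward identity yields $\val_G(S')\ge t'-\alpha L|T|=t$. The hard part will be the pendant-removing swap: an isolated non-$T$ vertex with zero counter has $\deg_{G'}=0$ and need not dominate a pendant in contribution, so some care is required. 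I would handle this by rejecting upfront when $|V(G)|<k$ and, where necessary, by a small refinement of the construction that attaches $\lceil\alpha^{-1}\rceil$ further degree-$1$ boost pendants to every non-$T$ vertex (with $t'$ updated accordingly), which raises the $G'$-degree of every non-$T$ vertex to at least $\alpha^{-1}$ and closes the swap argument through \Cref{lem:general:condition-strictly-better}; this refinement is what introduces the extra $\alpha^{-1}\cdot|V(G)|$ summand in the size bound.
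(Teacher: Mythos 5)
Your proposal is correct and follows essentially the same route as the paper's proof: degree-one pendants realizing the counters, a large pendant set of size $\Theta(\Delta_{\overline{T}}+\Gamma+\alpha^{-1}k)$ forcing the vertices of $T$ into any solution via \Cref{lem:general:condition-strictly-better}, and roughly $\alpha^{-1}$ extra pendants per vertex (the paper attaches $\lfloor\alpha^{-1}\rfloor$ of them to \emph{every} vertex rather than only to non-$T$ vertices) so that, after rejecting when $|V(G)|<k$, any selected pendant can be exchanged for an unselected vertex of $V(G)$ by \Cref{lemma:replace}. The obstacle you flag and the refinement you propose to close it are exactly the paper's resolution, so no further comparison is needed.
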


\begin{proof}
  We may assume that $G$ has at least $k$ vertices (otherwise no solution for \probAnoMax{} exists and thus the empty graph and the same value of~$k$ form a no-instance for \probMax{}).
  We construct an equivalent instance~$\mathcal{I}'\coloneqq (G', k, t')$ of \probMax{}.
  The graph $G'$ is obtained from $G$ as follows:
  \begin{enumerate}
  \item
      Add~$\counter(v) + \lfloor \alpha^{-1} \rfloor$ degree-one neighbors to every vertex $v \in V(G)$.
    \item
      Additionally, add~$\ell \coloneqq \Delta_{\overline{T}} + \Gamma + |\alpha^{-1} - 3| \cdot k + \lfloor \alpha^{-1} \rfloor$ degree-one neighbors to every vertex $v \in T$.
  \end{enumerate}
  We denote by $L_v$ the set of degree-one vertices added to vertex~$v\in V(G)$ and we denote by~$L\coloneqq \bigcup_{v\in V(G)}L_v$ the set of all newly added leaf vertices.
  To conclude the construction of~$\mathcal{I}'$, we set~$t' \coloneqq t + \alpha(\ell \cdot |T| + \lfloor \alpha^{-1} \rfloor \cdot k)$.
  Since~$G$ has at most~$\Delta_{\overline{T}} \cdot |V(G)|$~edges and we add $\Oh((\Gamma + \alpha^{-1})\cdot |V(G)| + (\Delta_{\overline{T}} + \Gamma + \alpha^{-1} k) \cdot |T|)$~edges, we see that~$G'$ has $\Oh((\Delta_{\overline{T}} + \Gamma + \alpha^{-1}) \cdot |V(G)| + \alpha^{-1} k \cdot |T|)$~edges.

  Next, we prove the equivalence between~$\mathcal{I}$ and~$\mathcal{I}'$.
  For a solution~$S$ of~$\mathcal{I}$, its value in~$G'$ is increased by~$\alpha \cdot \lfloor \alpha^{-1} \rfloor$ for every vertex in~$S$ and, additionally, by~$\alpha\cdot\ell$ for every vertex in $T$, amounting to~$t + \alpha(\ell \cdot |T| + \lfloor \alpha^{-1} \rfloor \cdot k)$.

  Conversely, consider a solution~$S'$ of~$\mathcal{I'}$.
  First, we show that there is a solution containing all vertices of~$T$ and no leaf vertex of~$L$ using \Cref{lemma:replace}.
  Suppose that for some vertex~$v \in V(G)$, one of its degree-one neighbors~$v' \in L_v$ is in~$S'$ but not~$v$ itself.
  We then have~$\contribution(v', S' \setminus \{ v \}) = \alpha$ and~$\contribution(v, S' \setminus \{ v \}) \ge \alpha$, implying that~$(S' \setminus \{ v' \}) \cup \{ v \}$ is also a solution by \Cref{lemma:replace}.
  Thus, in the following we can assume that~$S' \cap L_v = \emptyset$ for every vertex~$v \in V(G)\setminus S'$.
  If there is a vertex~$v' \in S' \cap L_v$ for some~$v \in S'$, then by the assumption that~$|V(G)| \ge k$, the pigeonhole principle gives us a vertex~$w \in V(G) \setminus S'$ with~$S' \cap L_w = \emptyset$.
  Since~$|L_w| \ge \counter(w) + \lfloor \alpha^{-1} \rfloor \ge \lfloor \alpha^{-1} \rfloor$, we have~$\contribution(w, S' \setminus \{ v' \}) \ge \alpha \cdot \lfloor \alpha^{-1} \rfloor \ge \alpha (\alpha^{-1} - 1) = 1 - \alpha$.
  We thus have~$\contribution(v', S \setminus \{ v' \}) = 1 - \alpha \le \contribution(w, S' \setminus \{ v' \})$.
  Hence,~$(S' \setminus \{ v' \}) \cup \{ w \}$ is a solution, again by \Cref{lemma:replace}.
  
  Thus, we may assume that~$S'$ consists only of vertices from~$V(G)$.
  Suppose that some vertex~$v \in T$ is not in~$S'$.
  For any vertex~$v' \in S' \setminus T$, we have~$\deg(v') \le \deg_G(v) + \counter(v) + \lfloor \alpha^{-1} \rfloor \le \Delta_{\overline{T}} + \Gamma + \lfloor \alpha^{-1} \rfloor$.
  So we have $\deg(v') \ge \Delta_{\overline{T}} + \Gamma + |\alpha^{-1}| \cdot k + \lfloor \alpha^{-1} \rfloor \ge \deg(v) + |\alpha^{-1} - 3| \cdot k$.
  Applying \Cref{lem:general:condition-strictly-better} with $\counter(v) = \counter(v') = 0$, we obtain that~$v$ is strictly better than~$v'$.
  Now, it follows from \Cref{lemma:replace} that~$\mathcal{I'}$ has a solution~$S'$ such that $T \subseteq S' \subseteq V(G')$.
  Hence,~$S'$ is also a solution for~$\mathcal{I}$.
\end{proof}

\begin{lemma}
  \label{lemma:general:remove-ano-min}
  Given an instance~$\mathcal{I}\coloneqq (G, T, \counter, k, t)$ of \probAnoMin{} for $\alpha \in (0,1]$, we can compute an equivalent instance~$\mathcal{I'}$ of \probMin{} of size~$\Oh(\alpha^{-2} (\Delta + \Gamma + k)^2 + \alpha^{-1} (\Delta + \Gamma + k) \cdot |V(G)|)$ in polynomial time.
\end{lemma}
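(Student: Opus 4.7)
The plan is to mirror the construction of \Cref{lemma:general:remove-ano}, but with one structural change: in the minimization setting, degree-one leaves are \emph{low}-contribution and hence attractive picks for the solver, so they can no longer play the role of ``safe decoys.''  Instead I would use a large clique $K$ as a \emph{heavy} decoy whose high $\degCounter$ makes it strictly worse than any vertex of $V(G)$ by \Cref{lem:general:condition-strictly-better}, and I would route the counter-encoding edges into $K$ rather than into pendants. Concretely, we may assume $|V(G)| \ge k$ (otherwise output a trivial no-instance). Set $\ell_0 \coloneqq \Delta + |\alpha^{-1}-3|k + 1$ and $L \coloneqq 2\Delta + \Gamma + 2|\alpha^{-1}-3|k + 2$. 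Build $G'$ by taking $G$, adding a fresh clique $K$ of size $L$, and for every $v \in V(G)\setminus T$ adding $\counter(v) + \ell_0$ edges from $v$ to arbitrarily chosen (distinct) vertices of $K$. Finally set $t' \coloneqq t + \alpha \ell_0 (k - |T|)$. Counting gives $|E(K)| = \Oh(L^2) = \Oh(\alpha^{-2}(\Delta+\Gamma+k)^2)$, new $V(G)$--$K$ edges $\Oh(|V(G)|\cdot (\Gamma + \ell_0)) = \Oh(\alpha^{-1}(\Delta+\Gamma+k)\cdot |V(G)|)$, and original edges $\Oh(\Delta|V(G)|)$, matching the claimed size.

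For the forward direction, if $S$ is a solution of $\mathcal{I}$ then $T\subseteq S\subseteq V(G)$ and $K\cap S=\emptyset$, so every new edge incident to $S\setminus T$ leaves $S$; a direct calculation yields $\val_{G'}(S) = \val_G(S) + \alpha\ell_0(k-|T|) \le t'$. For the backward direction, take any solution $S'$ of $\mathcal{I}'$ and apply \Cref{lemma:replace} in two phases. In phase one, each $u\in T$ satisfies $\degCounter_{G'}(u) \le \Delta$, whereas every $v\in (V(G)\setminus T)\cup K$ satisfies $\degCounter_{G'}(v) \ge \ell_0 \ge \Delta + |\alpha^{-1}-3|k$; by \Cref{lem:general:condition-strictly-better} each $T$-vertex is strictly better than every other vertex, so swap non-$T$-vertices and $K$-vertices out of $S'$ in favor of missing $T$-vertices until $T\subseteq S'$. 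In phase two, the choice of $L$ yields $\degCounter_{G'}(w) \ge L - 1 \ge \Delta + \Gamma + \ell_0 + |\alpha^{-1}-3|k \ge \degCounter_{G'}(v) + |\alpha^{-1}-3|k$ for every $w\in K$ and every $v\in V(G)\setminus T$, so each $V(G)\setminus T$ vertex is strictly better than each $K$-vertex; swap the remaining $K$-vertices of $S'$ for vertices of $V(G)\setminus T$ outside $S'$, which exist because $|V(G)|\ge k\ge |S'|$. After both phases $T\subseteq S'\subseteq V(G)$, and the same identity as in the forward direction gives $\val_G(S') = \val_{G'}(S') - \alpha\ell_0(k-|T|) \le t$.

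The main obstacle is the joint calibration of $\ell_0$ and $L$: both strict-betterness gaps in \Cref{lem:general:condition-strictly-better} carry the additive slack $|(1-3\alpha)k|$, so I need $\ell_0 \gtrsim \Delta + |\alpha^{-1}-3|k$ for the first gap and then $L \gtrsim \Delta + \Gamma + \ell_0 + |\alpha^{-1}-3|k$ for the second; chaining these is precisely what inflates $L$ to $\Theta(\alpha^{-1}(\Delta+\Gamma+k))$ and produces the quadratic term $\alpha^{-2}(\Delta+\Gamma+k)^2$ from the clique. A minor subtlety is guaranteeing a swap target in phase two, handled by the $|V(G)| \ge k$ reduction; another is that the counter-encoding and the degree-inflation for forcing $T\subseteq S'$ are realized by the \emph{same} edges to $K$, so the analysis of $\val_{G'}$ must be performed with the combined count $\counter(v)+\ell_0$ in mind, which is where the clean identity $\val_{G'}(S)=\val_G(S)+\alpha\ell_0(k-|T|)$ comes from.
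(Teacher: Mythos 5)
Your proposal is correct and is essentially the paper's own proof: the paper likewise adds a clique $C$ (of size $2\ell+1$ with $\ell$ just above $\alpha^{-1}(\Delta+\Gamma+|(1-3\alpha)k|)$), attaches each $v\in V(G)\setminus T$ to $\ell+\counter(v)$ clique vertices, sets $t'=t+\alpha\ell(k-|T|)$, and uses the same chain of strictly-better comparisons via \Cref{lem:general:condition-strictly-better} and \Cref{lemma:replace} to force $T\subseteq S'\subseteq V(G)$. The only difference is that you calibrate the per-vertex edge count ($\ell_0$) and the clique size ($L$) as two separate parameters where the paper uses a single $\ell$ for both; this is immaterial to correctness and to the size bound.
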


\begin{proof}
  We may assume that~$G$ has at least~$k$ vertices (otherwise no solution for \probAnoMin{} exists and the thus the empty graph and parameter~$k$ are a no-instance for \probMin{}).
  We construct an equivalent instance $\mathcal{I}'\coloneqq (G', k, t')$ of \probMin{}.
  
  Let~$\ell$ be the smallest integer greater than~$\alpha^{-1} (\Delta + \Gamma + |(1 - 3 \alpha)k|)$.
  Let~$G'$ be the graph obtained from~$G$ as follows:
  We add a clique $C$ on $2 \ell + 1$ vertices.
  For every vertex~$v \in V(G) \setminus T$, choose~$\ell + \counter(v)$ vertices of $C$ arbitrarily and add edges between $v$ and the chosen vertices.
  To conclude the construction of~$\mathcal{I}'$, we set~$t' \coloneqq t + \alpha \ell (k - |T|)$.
  Observe that we add $\Oh(\alpha^{-1} (\Delta + \Gamma + k))$~vertices and $\Oh(\alpha^{-2} (\Delta + \Gamma + k)^2 + \alpha^{-1} (\Delta + \Gamma + k) \cdot |V(G)|)$~edges.

  Next, we show that~$\mathcal{I}$ and~$\mathcal{I}'$ are equivalent.
  For a solution~$S$ of~$\mathcal{I}$ its value in~$G'$ is increased by~$\alpha\ell$ for every vertex~$v\in V(G)\setminus T$, amounting to~$t+\alpha\ell(k-|T|)$.
  Thus,~$S$ is also a solution of~$I'$.
  
  Conversely, suppose that~$\mathcal{I}'$ has a solution~$S'$.
  We show that there is a solution that contains all vertices of~$T$ and no vertex of~$C$.
  By construction, the following holds:
  \begin{enumerate}
    \item
      $\deg_{G'}(v) = \deg_G(v) \le \Delta$ for any vertex~$v \in T$.
    \item
      $\deg_{G'}(v) = \deg_{G}(v) + \ell + \counter(v) \in [\ell, \Delta + \Gamma + \ell]$ for any vertex~$v \in V(G) \setminus T$.
    \item
      $\deg_{G'}(v) \ge 2 \ell$ for any vertex~$v \in C$.
  \end{enumerate} 
  Since $\ell \ge \alpha^{-1}(\Delta + \Gamma + |(1 - 3 \alpha) k|)$, any vertex in $T$ is strictly better than any vertex in $V(G) \setminus T$ and any vertex in $V(G) \setminus T$ is strictly better than any vertex in $C$ by \Cref{lem:general:condition-strictly-better}:
  To see the latter, consider~$v_2 \in V(G) \setminus T$ and~$v_3 \in C$.
  Then we have:
  \begin{align*}
	\alpha \deg_{G'}(v_3) - \alpha \deg_{G'}(v_2) & \ge \alpha 2 \ell - \alpha (\Delta + \Gamma + \ell) \\
	& = \alpha \ell - \alpha \Delta - \alpha \Gamma \ge \Delta + \Gamma + |(1 - 3\alpha)k| - \alpha \Delta - \alpha \Gamma \ge |(1 - 3\alpha)k|.
  \end{align*}
  Thus, by \Cref{lemma:replace}, $\mathcal{I}'$ admits a solution~$S'$ with~$T \subseteq S' \subseteq V(G')$ and, hence,~$S'$ is a solution of value at least~$t' - \alpha \ell(k - |T|) = t$ for~$\mathcal{I}$.
\end{proof}

\subsection{Dependence of the Kernel Sizes on~$\alpha$}
To simplify notation, we will generally omit the polynomial factors in $\alpha^{-1}$ in the following sections.
Note that when we remove the annotations using \Cref{lemma:general:remove-ano} or \Cref{lemma:general:remove-ano-min}, a factor polynomial in $\alpha^{-1}$ appears in the size of the graph of the resulting \prob{} instance.
In our kernelization, we apply \Cref{lemma:general:remove-ano} or \Cref{lemma:general:remove-ano-min} once after obtaining an instance of \probAno{} in which the maximum degree~$\Delta$, the maximum counter~$\Gamma$, and the graph~$G$ are all bounded by some (polynomial) function of the parameter in question.

In Sections~\ref{sec:maxdeg} (maximum degree),~\ref{sec:closure} (closure), and~\ref{sec:degeneracy} (degeneracy) the bound on~$\Delta$ and the size of~$G$ will not depend on~$\alpha^{-1}$, while~$\Gamma$ has a term linearly dependent on $\alpha^{-1}$ (see \Cref{lemma:general:counter-bound}).
Thus, the kernel size will be proportional to $\alpha^{-1}$ for the maximization variant (we remark that $\alpha^{-1} \le 3$ for the degrading case) and $\alpha^{-4}$ for the minimization variant in the worst case.
In \Cref{sec:vc} (vertex cover number and~$h$-index), we make the dependence on $\alpha^{-1}$ explicit since many results have their own approach to obtain an upper bound on $\Delta$ and $\Gamma$.

%
%

\section{Parameterization By Maximum Degree}\label{sec:maxdeg}

\subsection{Polynomial Kernels in Degrading Cases}

Now, we present our framework to provide polynomial kernels of size~$\Delta+k$.
For this, it is essential to bound the largest counter of any vertex polynomial in~$\Delta+k$.
We do this by first adding vertices with a contribution far above~$t/k$ for maximization (and for below~$t/k$ for minimization) to the partial solution.
Second, we remove vertices with contribution far below~$t/k$ for maximization (and far above~$t/k$ for minimization).
We then show that this is sufficient to bound the counters.

To obtain a polynomial kernel for \prob{} with respect to~$\Delta+k$, we then show that it is sufficient to remove a vertex~$v$ if polynomial in~$\Delta+k$ many vertices are better than~$v$.
To obtain kernels for the smaller parameters $c$-closure (\Cref{sec:closure}) and degeneracy (\Cref{sec:degeneracy}) plus~$k$ it then remains to show that the maximum degree can be bounded in the parameter plus~$k$.

Recall that in the degrading cases we have~$\alpha\in(1/3,1]$ for  maximization and~$\alpha\in [0,1/3)$ for minimization.
Furthermore, recall that for two vertices~$u$ and~$v$,~$v$ is said to be better than~$u$ with respect to~$T$ if~$\contribution(v, T) \ge \contribution(u, T)$ (vice versa for the minimization variant).

Recall that we defined $\Delta_{\overline{T}} \coloneqq \max_{v \in V(G) \setminus T} \deg(v)$ for the annotated version. 
To obtain the kernels in this section it is sufficient to use~$\Delta$ instead of~$\Delta_{\overline{T}}$ in \Cref{rr:delta:better}.
However, in \Cref{sec:vc} it is sometimes important to use~$\Delta_{\overline{T}}$ in \Cref{rr:delta:better} to obtain the kernels.

\begin{rrule}
	\label{rr:delta:better}
	Let~$\mathcal{I}$ be an instance of \probAnoBor{}.
	If there are at least~$(\Delta_{\overline{T}}+1)( k-1)+1$ vertices that are better than~$v$ with respect to~$T$, then apply the Exclusion Rule (\Cref{rr:general:how-to-exclude}) to~$v$.
\end{rrule}

{
\begin{lemma}
\Cref{rr:delta:better} is correct.
\end{lemma}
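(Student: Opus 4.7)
My plan is to prove the rule by an exchange argument. Let~$B$ denote the set of at least~$(\Delta_{\overline{T}}+1)(k-1)+1$ vertices distinct from~$v$ that are better than~$v$ with respect to~$T$. Since the Exclusion Rule is only invoked on vertices outside~$T$, I may assume~$v\notin T$. To establish correctness, it suffices to show that every solution~$S$ containing~$v$ can be turned into a solution~$S'$ avoiding~$v$, from which I can safely delete~$v$ using \Cref{rr:general:how-to-exclude}. Fix such an~$S\ni v$, set~$R\coloneqq S\setminus(T\cup\{v\})$, and note that~$|R|\le k-1$ and~$R\subseteq V(G)\setminus T$, so that every~$w\in R$ satisfies~$\deg_G(w)\le\Delta_{\overline{T}}$. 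I carry out the degrading maximization case ($\alpha\in(1/3,1]$); the degrading minimization case is entirely symmetric.

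Next, I would exhibit~$u\in B\setminus S$ such that~$S'\coloneqq(S\setminus\{v\})\cup\{u\}$ is also a solution. By \Cref{lemma:replace} it suffices to check that~$u$ is better than~$v$ with respect to~$S\setminus\{v\}$. Decomposing~$S\setminus\{v\}=T\cup R$ (disjointly) and unfolding \Cref{def-contribution} yields the identity
\begin{align*}
  &\contribution(u,S\setminus\{v\})-\contribution(v,S\setminus\{v\})\\
  &\qquad=\bigl[\contribution(u,T)-\contribution(v,T)\bigr]+(1-3\alpha)\bigl[|N(u)\cap R|-|N(v)\cap R|\bigr].
\end{align*}
The first bracket is nonnegative because~$u\in B$. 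Since~$1-3\alpha<0$ in the degrading max case, the whole difference is nonnegative as soon as~$|N(u)\cap R|\le|N(v)\cap R|$. So my remaining task is to produce~$u\in B\setminus S$ meeting this inequality.

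I would close the argument by pigeonhole. Suppose toward contradiction that every~$u\in B\setminus S$ has~$|N(u)\cap R|\ge|N(v)\cap R|+1\ge 1$. Then
\[
  |B\setminus S|\;\le\;\sum_{u\in B\setminus S}|N(u)\cap R|\;\le\;\sum_{w\in R}\deg_G(w)\;\le\;(k-1)\,\Delta_{\overline{T}}.
\]
On the other hand, since~$v\in S\setminus B$, we have~$|B\cap S|\le k-1$, so~$|B\setminus S|\ge|B|-(k-1)\ge\Delta_{\overline{T}}(k-1)+1$, contradicting the displayed upper bound. Hence a valid swap vertex~$u$ exists and~$S'$ is the desired solution without~$v$.

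The main obstacle will be isolating the correct sufficient condition on the swap vertex, namely~$|N(u)\cap R|\le|N(v)\cap R|$. This relies on the sign of~$1-3\alpha$ forced by the degrading hypothesis, and on matching this condition with a pigeonhole budget of~$(k-1)\Delta_{\overline{T}}$, which in turn requires that the degree cap~$\Delta_{\overline{T}}$ applies to all of~$R$, i.e.\ that~$R\subseteq V(G)\setminus T$. The accounting is tight and needs~$v\notin B$ (so that~$|B\cap S|\le k-1$) to yield the strict contradiction.
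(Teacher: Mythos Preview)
Your proof is correct and follows essentially the same exchange-plus-pigeonhole approach as the paper: both locate a replacement vertex in~$B\setminus S$ whose neighborhood in~$R=S\setminus(T\cup\{v\})$ is small enough, using that~$\sum_{w\in R}\deg(w)\le(k-1)\Delta_{\overline T}$. The paper requires the replacement~$v'$ to satisfy~$N(v')\cap R=\emptyset$ and then invokes the degrading inequality~$\contribution(v,T)\ge\contribution(v,S\setminus\{v\})$ directly, whereas you use the algebraic identity to allow the weaker condition~$|N(u)\cap R|\le|N(v)\cap R|$; this is a minor refinement rather than a different route, and your counting in fact already yields some~$u$ with~$|N(u)\cap R|=0$.
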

\begin{proof}
	Let~$\mathcal{I}'$ be the reduced instance. 
	Clearly, if~$\mathcal{I}'$ has a solution~$S'$, then~$S'$ is also a solution for~$\mathcal{I}$.
	Conversely, suppose that~$\mathcal{I}$ has a solution~$S$.
	If~$v\notin S$, then~$S$ is also a solution for~$\mathcal{I'}$.
	In the following, we assume that~$v\in S$. 

	By the pigeonhole principle, there exists a vertex $v'$ better than $v$ such that $v \notin N[S\setminus T]$.
	We claim that $S' \coloneqq (S \setminus \{ v \}) \cup \{ v' \}$ is a solution for~$\mathcal{I'}$.
	First, we consider maximization.
	By \Cref{lemma:replace}, it suffices to show that $\contribution(v', S \setminus \{ v \}) \ge \contribution(v, S \setminus \{ v \})$.
	Since $v' \notin N[S' \setminus T]$, we have $\contribution(v', S \setminus \{ v \}) = \contribution(v', T) \ge \contribution(v, T) \ge \contribution(v, S \setminus \{ v \})$.
	Here, the last inequality follows from the fact that the contribution is \borderFocusedS{}.
	
	Second, we consider minimization.
	By \Cref{lemma:replace}, it suffices to show that $\contribution(v', S \setminus \{ v \}) \le \contribution(v, S \setminus \{ v \})$.
	Since $v' \notin N[S' \setminus T]$, we have $\contribution(v', S \setminus \{ v \}) = \contribution(v', T) \le \contribution(v, T) \le \contribution(v, S \setminus \{ v \})$.
	Here, the last inequality follows from the fact that the contribution is \borderFocusedS{}.
\end{proof}
}

Next, we show that the exhaustive application of \Cref{rr:delta:better} yields a polynomial kernel for \probBor{}.

\begin{proposition}
  \label{prop:general:delta}
  \probBor{} has a kernel of size
  \begin{itemize}
	\item $\Oh(\Delta^2 k)$ for maximization and~$\alpha\in(1/3,1]$, and
	\item $\Oh(\Delta k(\Delta +k))$ for minimization and~$\alpha\in (0, 1/3)$.
  \end{itemize}
\end{proposition}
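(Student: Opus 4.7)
The plan proceeds in three stages: bound the largest counter value~$\Gamma$, bound the vertex count~$|V(G)|$, and then strip the annotations via \Cref{lemma:general:remove-ano} (for maximization) or \Cref{lemma:general:remove-ano-min} (for minimization).

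For Stage~1, I would design two reduction rules driven by the observation that, by \Cref{lemma:val}, any solution must achieve an average contribution of~$t/k$ per vertex. In the \borderFocusedS{} regime, \Cref{obs-val-is-submodular} provides sub-/supermodularity of~$\val$, which I exploit as follows. In the maximization variant with $\alpha\in(1/3,1]$, if a vertex~$v\notin T$ has $\contribution(v,T)$ exceeding~$t/k$ by at least a safety margin of order~$|(1-3\alpha)k|$, then submodularity forces~$v$ into every solution of value~$\ge t$, so I apply the Inclusion Rule (\Cref{rr:general:how-to-include}); conversely, if $\contribution(v,T)$ falls below~$t/k$ by the same margin, then~$v$ belongs to no solution, so I apply the Exclusion Rule (\Cref{rr:general:how-to-exclude}). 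The minimization variant is symmetric, using supermodularity instead. After exhaustive application, every remaining~$v\notin T$ satisfies $|\contribution(v,T)-t/k| = \Oh(k)$; combined with the identity $\contribution(v,T)=\alpha\degCounter(v)+(1-3\alpha)|N(v)\cap T|$ and $|T|\le k$, this yields $\Gamma = \Oh(t/(\alpha k)+\Delta+k)$. In the maximization variant I may assume $t\le\alpha\Delta k$ (otherwise the instance is a trivial no), giving $\Gamma = \Oh(\Delta)$; the minimization variant only allows $\Gamma = \Oh(\Delta+k)$, which accounts for the extra factor in the minimization bound.

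For Stage~2, I apply \Cref{rr:delta:better} exhaustively. Sorting $V(G)\setminus T$ by $\contribution(\cdot,T)$ in the direction appropriate to the variant, the worst vertex has every other member of $V(G)\setminus T$ better than itself, so the rule fires whenever $|V(G)\setminus T|\ge(\Delta_{\overline{T}}+1)(k-1)+1$. After exhaustion, $|V(G)|\le|T|+(\Delta+1)(k-1)=\Oh(\Delta k)$, and by \Cref{obs-rr-preserve-params} the parameter~$\Delta$ is never increased.

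For Stage~3, I plug $|V(G)|=\Oh(\Delta k)$, $|T|\le k$, and $\Gamma=\Oh(\Delta)$ into \Cref{lemma:general:remove-ano} to obtain an equivalent non-annotated maximization instance of size $\Oh((\Delta+\Gamma)|V(G)|+k|T|)=\Oh(\Delta^2 k)$. For minimization, plugging $|V(G)|=\Oh(\Delta k)$ and $\Gamma=\Oh(\Delta+k)$ into \Cref{lemma:general:remove-ano-min} yields an instance of size $\Oh((\Delta+k)^2+(\Delta+k)\cdot\Delta k)=\Oh(\Delta k(\Delta+k))$. I expect Stage~1 to be the main obstacle: the cutoffs must be calibrated large enough that the include/exclude decisions remain safe even as~$T$ grows during later reductions, yet tight enough that the resulting contribution bound really translates into the claimed polynomial counter bound. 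The sub-/supermodularity supplied by \Cref{obs-val-is-submodular} is the key tool ensuring this robustness in the \borderFocusedS{} regime.
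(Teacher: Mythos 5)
Your Stages~2 and~3 are exactly the paper's proof, but your Stage~1 is both unnecessary and, as written, flawed. The paper's argument is much shorter because for this particular proposition no counter-bounding machinery is needed at all: the input is a \emph{non-annotated} instance, so you start with $T=\emptyset$ and all counters zero, and the only rule ever applied is \Cref{rr:delta:better}, which invokes the Exclusion Rule (\Cref{rr:general:how-to-exclude}). Each application increments $\counter(u)$ by one only for neighbors $u$ of the deleted vertex, so every vertex's counter is bounded by the number of its deleted neighbors, i.e.\ $\Gamma\le\Delta$ automatically, and $T$ stays empty. Plugging $|V(G)|\le\Delta k+1$, $\Gamma\le\Delta$, $|T|=0$ into \Cref{lemma:general:remove-ano} resp.\ \Cref{lemma:general:remove-ano-min} immediately gives the claimed bounds. (The elaborate include-high/exclude-low rules you sketch are the paper's \Cref{rr:general:include-high,rr:general:exclude-low}, but they are reserved for \Cref{prop:general:kernel}, where one starts from an arbitrary \emph{annotated} instance whose counters are not a priori bounded.)

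The concrete problems with your Stage~1 are the following. First, the inclusion and exclusion thresholds cannot use ``the same margin'' of order $|(1-3\alpha)k|$: the safe exclusion margin must be of order $(3\alpha-1)(k-1)^2$ (cf.\ \Cref{def-border-focused-excluding}), because in the exchange argument the slack of each of the other $k'-1$ solution vertices accumulates. With the linear margin your exclusion rule is not safe; with the corrected quadratic margin, \Cref{lemma:general:counter-bound} only yields $\Gamma\in\Oh(\Delta+\alpha^{-1}k^2)$, which would give a kernel of size $\Oh(\Delta^2k+\Delta k^3)$ rather than $\Oh(\Delta^2k)$. Second, the justification ``submodularity forces $v$ into every solution'' is false -- a high-contribution vertex need not lie in every solution; the correct statement (and all the Inclusion Rule requires) is that \emph{some} solution contains it, proved by swapping $v$ for the least-contributing solution vertex. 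Third, once you apply the Inclusion Rule you may end with $|T|$ as large as $k$, and the resulting $\alpha^{-1}k\cdot|T|=\Oh(k^2)$ term in \Cref{lemma:general:remove-ano} is not dominated by $\Delta^2k$ when $k>\Delta^2$. All three issues disappear if you simply drop Stage~1 and observe that $\Gamma\le\Delta$ holds for free.
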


\begin{proof}
  Given an instance of \probBor{}, we transform it into an equivalent instance of \probAnoBor{} and apply \Cref{rr:delta:better} exhaustively.
  Observe that since \cref{rr:delta:better} is applied, we have~$|V(G)|  \le \Delta k + 1$.
  Moreover, we have~$T = \emptyset$ and~$\Gamma \le \Delta$ since each neighbor of a vertex can increase its counter by at most one.
  By \Cref{lemma:general:remove-ano} (maximization) or \Cref{lemma:general:remove-ano-min} (minimization), we obtain an equivalent instance of \prob{} of size $\Oh(\Delta^2  k)$ (maximization) or $\Oh(\Delta k(\Delta +k))$ (minimization).
\end{proof}

Note that \Cref{prop:general:delta} does not cover the case $\alpha = 0$ for minimization, which is also called \textsc{Sparsest~$k$-Subgraph}.
The kernel for this case will be shown in \Cref{sec:degeneracy}: \Cref{prop-minimization-kernel-d+k-alpha=0} provides a kernel of size~$\Oh(d^2k)$ for \textsc{Sparsest~$k$-Subgraph}; since~$d\le\Delta$, this implies also a kernel of size~$\Oh(\Delta^2k)$.

\Cref{prop:general:delta} shows that given an instance of \probBor{}, we can find in polynomial time an equivalent instance of \probBor{} of size $\Oh(\Delta + k)^{O(1)}$.
In the following, in \Cref{prop:general:kernel}, we will show that an equivalent instance of \probBor{} that has size $(\Delta + k)^{\Oh(1)}$ can be constructed even if an instance of \probAnoBor{} is given.
\Cref{prop:general:kernel} plays an important role in kernelizations in subsequent sections.
Essentially, the task of kernelization for~$k+c$ and~$k+d$ boils down to bound the maximum degree~$\Delta$ to apply \Cref{prop:general:kernel}.

As shown in the proof of \Cref{prop:general:delta}, the number of vertices becomes polynomial in $k + \Delta$ by exhaustively applying \Cref{rr:delta:better}.
Recall that in \Cref{sec:annotation}, we presented a polynomial-time procedure to remove annotations with an additional polynomial factor in $\Delta + \Gamma$ on the instance size, where $\Gamma$ denotes the maximum counter.
To prove \Cref{prop:general:kernel}, it remains to bound~$\Gamma$ for \probAnoBor{}.

\paragraph{Bounding the largest counter $\Gamma$.}
Throughout the section, let~$k' \coloneqq k - |T|$ and~$t' \coloneqq t - \val(T)$.
First, we identify some vertices which are contained in a solution, if one exists.

\begin{definition}
	\label{def-border-focused-including}
	Let $\mathcal{I}$ be a yes-instance of \probAnoBor{}.
	A vertex~$v \in V(G) \setminus T$ is called \emph{satisfactory} if
	\begin{itemize}
		\item (Maximization:) $\contribution(v, T) \ge t' / k' + (3 \alpha-1)(k - 1)$ and~$\alpha\in(1/3,1]$.
		\item (Minimization:) $\contribution(v, T) \le t' / k' + (3 \alpha-1)(k - 1)$ and~$\alpha\in(0,1/3)$.
	\end{itemize}
\end{definition}

\begin{rrule}
	\label{rr:general:include-high}
	Let~$\mathcal{I}$ be an instance of \probAnoBor{} with~$\alpha>0$ and let~$v\in V(G)\setminus T$ be a satisfactory vertex.
	Apply the Inclusion Rule (\Cref{rr:general:how-to-include}) on vertex~$v$.
\end{rrule}

{

\begin{lemma}
	\label{lemma:general:include-high}
	\Cref{rr:general:include-high} is correct.
\end{lemma}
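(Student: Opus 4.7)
The plan is to show that if $\mathcal{I}$ has a solution, then it has a solution containing $v$; this justifies adding $v$ to $T$, after which the decrement of $t$ by $\alpha\counter(v)$ and the reset of $\counter(v)$ to zero are immediate because any solution now contains $v$ and thus picks up exactly $\alpha\counter(v)$ from the $\alpha\counter(S)$ term of $\val$. I focus on the maximization case ($\alpha \in (1/3, 1]$); the minimization case ($\alpha \in (0, 1/3)$) is symmetric, using supermodularity of $\val$ and flipping inequalities. Fix a solution $S$; if $v \in S$ we are done, so assume $v \notin S$. Write $k' \coloneqq k - |T|$ and $t' \coloneqq t - \val(T)$, and let $u^* \in S \setminus T$ minimize $\contribution(u, S \setminus \{u\})$. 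The claim is that $S' \coloneqq (S \setminus \{u^*\}) \cup \{v\}$ is again a solution.

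I establish this by bounding the two marginals in $\val(S') - \val(S) = \contribution(v, S \setminus \{u^*\}) - \contribution(u^*, S \setminus \{u^*\})$. For the first, expanding
\[
\contribution(v, S \setminus \{u^*\}) = \contribution(v, T) + (1 - 3\alpha)\cdot|N(v) \cap (S \setminus T \setminus \{u^*\})|
\]
and using that the intersection has size at most $k - 1$, the satisfactory hypothesis of Definition~\ref{def-border-focused-including} yields $\contribution(v, S \setminus \{u^*\}) \ge t'/k'$. For the second, Lemma~\ref{lemma:val} applied to any ordering of $S \setminus T$ gives $\val(S) - \val(T) = \sum_i \contribution(u_i, T \cup \{u_1,\dots,u_{i-1}\})$, and submodularity of $\val$ (Observation~\ref{obs-val-is-submodular}) implies each summand is at least $\contribution(u_i, S \setminus \{u_i\})$; averaging therefore yields $\contribution(u^*, S \setminus \{u^*\}) \le (\val(S) - \val(T))/k'$.

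Combining the two bounds,
\[
\val(S') \ge \val(S) + \frac{t'}{k'} - \frac{\val(S) - \val(T)}{k'} = \val(S) + \frac{t - \val(S)}{k'} \ge t,
\]
where the last step uses $\val(S) \ge t$ and $k' \ge 1$, which together yield $(t - \val(S))/k' \ge t - \val(S)$. The main delicacy is this final slack inequality: averaging only delivers $\contribution(u^*, \cdot) \le (\val(S) - \val(T))/k'$ rather than the cleaner $t'/k'$ one might wish for, and the deficit of $(\val(S) - t)/k'$ is precisely absorbed by the non-positive residual $t - \val(S)$, so the argument does not need $S$ to have minimum value. The $(3\alpha - 1)(k - 1)$ offset baked into Definition~\ref{def-border-focused-including} is exactly what is needed to let $\contribution(v, T)$ survive the transition to $\contribution(v, S \setminus \{u^*\})$ across the submodular drop.
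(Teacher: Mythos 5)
Your proof is correct. The core exchange is the same as in the paper's argument---swap the satisfactory vertex $v$ for a ``worst'' vertex of $S \setminus T$ and use the $(3\alpha-1)(k-1)$ offset in \cref{def-border-focused-including} to show $\contribution(v, S \setminus \{u^*\}) \ge t'/k'$---but the two proofs account for the removed vertex differently. The paper orders $S \setminus T$ greedily so that the successive marginals $\ell_1 \ge \dots \ge \ell_{k'}$ are non-increasing (using degradingness), removes the last vertex $v_{k'}$, and splits into cases: if $\ell_{k'} \le t'/k'$ the exchange cannot decrease the value, and otherwise all $\ell_i > t'/k'$ gives $\val(S_{k'}) \ge \val(T) + (k'-1)t'/k'$ directly, without ever invoking $\val(S) \ge t$. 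You instead pick $u^*$ minimizing the last-position marginal $\contribution(u, S\setminus\{u\})$, bound it by the average $(\val(S)-\val(T))/k'$ via the telescoping identity of \cref{lemma:val} plus degradingness, and absorb the resulting deficit using the slack $t - \val(S) \le 0$ divided by $k' \ge 1$. This avoids the case split and yields a single chain of inequalities; the trade-off is that your final step genuinely needs $\val(S) \ge t$, whereas the paper's does not. Both arguments implicitly assume $k' \ge 1$, which is forced by the appearance of $t'/k'$ in the definition of satisfactory vertices, so this is not a gap.
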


\begin{proof}
Let~$\mathcal{I}'$ be the reduced instance. 
	Clearly, if~$\mathcal{I}'$ has a solution~$S'$, then~$S'$ is also a solution for~$\mathcal{I}$.
	Conversely, suppose that~$\mathcal{I}$ has a solution~$S$.
  The lemma clearly holds for~$v \in S$.
  So we will assume that~$v \notin S$.
  We start with an auxiliary claim:

  \begin{claim}
    There is an ordering $(v_1, \dots, v_{k'})$ of the vertices of $S \setminus T$ with
    \begin{itemize}
    \item $\ell_1 \ge \ell_2 \ge \ldots \ge \ell_{k'}$ for maximization and~$\alpha\in(1/3,1]$, and
    \item $\ell_1 \le \ell_2 \le \ldots \le \ell_{k'}$ for minimization and~$\alpha\in(0,1/3)$,
\end{itemize} 
where~$S_i\coloneqq T \cup \{ v_1, \dots, v_{i - 1} \}$ and $\ell_i \coloneqq \contribution{}(v_i, S_i)$ for every $i \in [k']$.
  \end{claim}
  \begin{claimproof}
    Consider an ordering of~$S \setminus T$, where the~$i$-th vertex~$v_i$ is chosen in such a way that~$\contribution(v_i, S_i)$ is maximized (minimized).
    Then, since the contribution is \borderFocusedS{}, we conclude that~$\contribution(v_i, S_i) \ge \contribution(v_{i + 1}, S_i) \ge \contribution(v_{i + 1}, S_{i + 1})$ for maximization, and~$\contribution(v_i, S_i) \le \contribution(v_{i + 1}, S_i) \le \contribution(v_{i + 1}, S_{i + 1})$ for minimization, respectively,  for every~$i \in [k' - 1]$.
  \end{claimproof}

Note that~$S_{k'}=S\setminus\{v_{k'}\}$.
  Now, consider the vertex set $S' \coloneqq S_{k'} \cup \{ v \}$.
  We show that~$\val(S') \ge t$ for maximization and that~$\val(S') \le t$ for minimization.
  By \Cref{lemma:val}, we have~$\val(S') = \val(S) - \ell_{k'} + \contribution(v, S_{k'})$.
  The definition of contribution yields that
  \begin{align*}
    \contribution(v, S_{k'}) &=\alpha\degCounter(v)+(1-3\alpha)|N(v)\cap S_{k'}|\\
    &= \contribution(v, T) + (1 - 3 \alpha) |N(v) \cap (S_{k'} \setminus T)|.
  \end{align*}
  Since~$|S_{k'} \setminus T| = k' - 1$ and~$\contribution(v, T) \ge t' / k' + (3 \alpha-1)(k - 1)$ for maximization and~$\contribution(v, T) \le t' / k' + (3 \alpha-1)(k - 1)$ for minimization, we have
  \begin{align*}
    \contribution(v, S_{k'})
    &\ge [t' / k' + (3 \alpha-1)(k - 1)] + (1 - 3 \alpha) (k' - 1) \ge t' / k'\\ &\hspace{1cm}\text{ for maximization and } \alpha\in(1/3,1], \text{ and} \\
    \contribution(v, S_{k'})
    &\le [t' / k' + ( 3 \alpha-1)(k - 1)] + (1 - 3 \alpha) (k' - 1) \le t' / k'\\  &\hspace{1cm}\text{ for minimization and } \alpha\in(0,1/3).
  \end{align*}
  
  For maximization, if~$\ell_{k'} \le t' / k'$, then we have~$\ell_{k'} \le \contribution(v, S_{k'})$ and thus $\val(S') \ge \val(S) \ge t$.
  Analogously, for minimization, if~$\ell_{k'} \ge t' / k'$, then we have~$\ell_{k'} \ge \contribution(v, S_{k'})$ and thus $\val(S') \le \val(S) \le t$.
  Thus, in the following, we assume that~$\ell_{k'} > t' / k'$ for maximization and that~$\ell_{k'} < t' / k'$ for minimization.
  We obtain that
  \begin{align*}
    \val(S_{k'})
    &= \val(T) + \sum_{i = 1}^{k' - 1} \ell_{i} \text{. Thus,} \\
    \val(S_{k'}) &\ge \val(T) + (k' -  1) \ell_{k'} 
    \ge \val(T) + \frac{(k' - 1)t'}{k'} \text{ for maximization, and} \\
    \val(S_{k'}) &\le \val(T) + (k' -  1) \ell_{k'}
    \le \val(T) + \frac{(k' - 1)t'}{k'} \text{ for minimization.}
  \end{align*}
  Hence, for maximization~$\val(S') = \val(S_{k'}) + \contribution(v, S_{k'}) \ge [\val(T) + (k' - 1)t' / k'] + t' / k' = t$ and for minimization~$\val(S') = \val(S_{k'}) + \contribution(v, S_{k'}) \le [\val(T) + (k' - 1)t' / k'] + t' / k' = t$.
\end{proof}
}

We henceforth assume that \Cref{rr:general:include-high} is exhaustively applied on every satisfactory vertex.
Next, we identify some vertices which are not contained in any solution.

\begin{definition}
	\label{def-border-focused-excluding}
	Let $\mathcal{I}$ be a yes-instance of \probAnoBor{}.
	A vertex~$v \in V(G) \setminus T$ is called \emph{needless} if
	\begin{itemize}
		\item (Maximization:) $\contribution(v, T) < t' / k' - (3 \alpha-1)(k - 1)^2$ for maximization and~$\alpha\in(1/3,1]$.
		\item (Minimization:) $\contribution(v, T) > t' / k' - (3 \alpha-1)(k - 1)^2$ for minimization and~$\alpha\in(0,1/3)$.
	\end{itemize}
\end{definition}

\begin{rrule}
	\label{rr:general:exclude-low}
	Let~$\mathcal{I}$ be an instance of \probAnoBor{} with~$\alpha>0$ and let~$v\in V(G)\setminus T$ be a needless vertex.
	Apply the Exclusion Rule (\Cref{rr:general:how-to-exclude}) on vertex~$v$.
\end{rrule}

{

\begin{lemma}
\label{lemma:general:exclude-low}
\Cref{rr:general:exclude-low} is correct.
\end{lemma}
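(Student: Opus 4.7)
The plan is to show, by contradiction, that no solution of $\mathcal{I}$ contains a needless vertex $v$; safety of the Exclusion Rule (\Cref{rr:general:how-to-exclude}) then follows immediately, since it only requires that some solution excludes $v$. I will detail the maximization case ($\alpha\in(1/3,1]$, so $3\alpha-1>0$); the minimization case is symmetric after reversing inequalities, flipping the sign of $3\alpha-1$, and replacing submodularity by supermodularity via \Cref{obs-val-is-submodular}.

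Suppose for contradiction that some solution $S$ contains $v$. Since $v\in V(G)\setminus T$ by the definition of needless, I would fix any ordering $(v_1,\dots,v_{k'})$ of $S\setminus T$ with $v_1=v$, set $S_i=T\cup\{v_1,\dots,v_{i-1}\}$, and expand $\val(S)=\val(T)+\sum_{i=1}^{k'}\contribution(v_i,S_i)$ via \Cref{lemma:val}. The argument then rests on two per-term bounds. For $i=1$, the needless hypothesis gives directly $\contribution(v,T)<t'/k'-(3\alpha-1)(k-1)^2$. For $i\ge 2$, submodularity (\Cref{obs-val-is-submodular}), combined with the identity $\contribution(v_i,S_i)=\val(S_i\cup\{v_i\})-\val(S_i)$ from \Cref{lemma:val}, yields $\contribution(v_i,S_i)\le\contribution(v_i,T)$; and because \Cref{rr:general:include-high} has been applied exhaustively, no vertex of $V(G)\setminus T$ is satisfactory, so $\contribution(v_i,T)<t'/k'+(3\alpha-1)(k-1)$.

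Summing the $k'$ contributions, the $t'/k'$ terms combine to $t'$, while the $(3\alpha-1)$-corrections telescope to $(3\alpha-1)(k-1)[(k'-1)-(k-1)] = (3\alpha-1)(k-1)(k'-k)$, which is $\le 0$ since $k'=k-|T|\le k$ and $3\alpha-1>0$. Hence $\val(S)<\val(T)+t'=t$, contradicting $\val(S)\ge t$.

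The one step that deserves genuine care is the matching of constants: the quadratic slack $(3\alpha-1)(k-1)^2$ in the definition of \emph{needless} is tailored precisely to absorb the cumulative slack of $(k'-1)\cdot(3\alpha-1)(k-1)$ coming from up to $k-1$ non-satisfactory vertices, so that after telescoping the residual term has the correct sign. For the minimization variant the same bookkeeping goes through verbatim after flipping all strict inequalities and using supermodularity; the sign of $3\alpha-1$ also flips, so the final step again yields the sign-correct contradiction $\val(S)>t$.
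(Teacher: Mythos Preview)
Your proposal is correct and follows essentially the same approach as the paper's proof: expand $\val(S)$ via \Cref{lemma:val}, bound the contribution of $v$ by the needless hypothesis and the remaining $k'-1$ contributions by the non-satisfactory bound (using the degrading property, which you phrase via submodularity/supermodularity), and check that the residual $(3\alpha-1)(k-1)(k'-k)$ has the correct sign. The paper leaves the step $\contribution(v_i,S_i)\le\contribution(v_i,T)$ implicit in its appeal to \Cref{lemma:val}, whereas you make it explicit via \Cref{obs-val-is-submodular}; otherwise the arguments are identical.
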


\begin{proof}
Let~$\mathcal{I}'$ be the reduced instance. 
	Clearly, if~$\mathcal{I}'$ has a solution~$S'$, then~$S'$ is also a solution for~$\mathcal{I}$.
	Conversely, suppose that~$\mathcal{I}$ has a solution~$S$.
  The lemma clearly holds for~$v \notin S$.
  So we will assume that~$v \in S$.
  Since \Cref{rr:general:include-high} is exhaustively  applied to each satisfactory vertex~$v'$, we obtain that~$\contribution(v', T) < t' / k' + (3 \alpha-1)(k - 1)$ for maximization and~$\contribution(v', T) > t' / k' + (3 \alpha-1)(k - 1)$ for minimization, for each vertex~$v'\in V(G)\setminus T$.
  Together with \Cref{lemma:val} we thus obtain for maximization that
  \begin{align*}
    \val(S) 
    &\le \val(T) + \contribution(v, T) + (k' - 1)[t' / k' + (3 \alpha-1)(k - 1)] \\ 
    &< \val(T) + [t' / k' -  (3 \alpha-1)(k - 1)^2] + (k' - 1)[t' / k' + (3 \alpha-1)(k - 1)]  \\
    &= \val(T)+t'-(3\alpha-1)(k-1)(k-k')\le t \text{ since } \alpha\in(1/3,1].
    \end{align*}
    Similarly, for minimization, we obtain that
    \begin{align*}
    \val(S) 
    &\ge \val(T) + \contribution(v, T) + (k' - 1)[t' / k' + ( 3 \alpha-1)(k - 1)] \\ 
    &> \val(T) + [t' / k' -  (3 \alpha-1)(k - 1)^2] + (k' - 1)[t' / k' + (3 \alpha-1)(k - 1)]  \\
    &= \val(T)+t'+(1-3\alpha)(k-1)(k-k')\ge t \text{ since } \alpha\in(0,1/3).
  \end{align*}
For maximization, this is a contradiction to~$\val(S)\ge t$, and for minimization this is a contradiction to~$\val(S)\le t$.
Thus,~$S$ cannot contain the needless vertex~$v$.
\end{proof}
}

We henceforth assume that \Cref{rr:general:exclude-low} is applied on every needless vertex.
The following reduction rule decreases the counter of each vertex in~$V(G)\setminus T$.
After this rule is exhaustively applied, we may assume that~$\counter(v) = 0$ for at least one vertex~$v \in V(G) \setminus T$.
Recall that we already have~$\counter(v)=0$ for every vertex in~$T$.

\begin{rrule}
  \label{rr:general:no-zero-counter}
  If $\counter(v) > 0$ for every vertex~$v \in V(G) \setminus T$, then decrease~$\counter(v)$ by~1 for every vertex~$v \in V(G) \setminus T$ and decrease~$t$ by~$\alpha k'$.
\end{rrule}

Next, we show that after the exhaustive application of \Cref{rr:general:no-zero-counter} the counter of each vertex is bounded polynomially in terms of~$\Delta$ and~$k$.

\begin{lemma}
	\label{lemma:general:counter-bound}
	Let~$\mathcal{I}$ be a reduced yes-instance of \probAnoBor{} with~$\alpha >0$.
	We have $\counter(v) \in \Oh(\Delta + \alpha^{-1} k^2)$ for every vertex $v \in V(G) \setminus T$.
\end{lemma}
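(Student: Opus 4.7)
The plan is to combine three pieces of information: (i) the upper bound on $\contribution(v,T)$ given by the non-satisfactory property (no application of \cref{rr:general:include-high} is possible), (ii) the lower bound on $t'/k'$ indirectly provided by the existence of a vertex $v^\ast \in V(G)\setminus T$ with $\counter(v^\ast)=0$ (guaranteed by exhaustive application of \cref{rr:general:no-zero-counter}), and (iii) the algebraic identity $\alpha\counter(v)=\contribution(v,T)-\alpha\deg(v)-(1-3\alpha)|N(v)\cap T|$ from \cref{def-contribution}.

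First, I would handle the two cases ($\alpha\in(1/3,1]$ for maximization and $\alpha\in(0,1/3)$ for minimization) in parallel. In both settings, exhaustive application of \cref{rr:general:include-high} and \cref{rr:general:exclude-low} pins $\contribution(v,T)$ into an interval of width $\Oh(|1-3\alpha|k^2)$ centered roughly at $t'/k'$; more precisely, each $v\in V(G)\setminus T$ satisfies $|\contribution(v,T)-t'/k'|\le |1-3\alpha|(k-1)^2$. I would state this inequality as a preliminary step, derived directly from the contrapositives of \cref{def-border-focused-including} and \cref{def-border-focused-excluding}.

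Second, I would bound $t'/k'$ by evaluating at the zero-counter vertex $v^\ast$. Since $\counter(v^\ast)=0$, we have $\contribution(v^\ast,T)=\alpha\deg(v^\ast)+(1-3\alpha)|N(v^\ast)\cap T|$, which is at most $\alpha\Delta+|1-3\alpha|\min(\Delta,k)$ using $|N(v^\ast)\cap T|\le\min(\deg(v^\ast),|T|)$. Combining with the interval bound of the previous step yields $t'/k'\le \alpha\Delta+|1-3\alpha|\min(\Delta,k)+\Oh(|1-3\alpha|k)$, and in turn $\contribution(v,T)\le \alpha\Delta+\Oh(\Delta+k^2)$ for every $v\in V(G)\setminus T$ (where the observation that $\min(\Delta,k)\le\Delta$ and also $\min(\Delta,k)\le k$ will be exploited to split into the cases $\Delta\ge k$ and $\Delta<k$).

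Third, I would plug into the identity for $\alpha\counter(v)$. In both the maximization and minimization cases the two correction terms $-\alpha\deg(v)$ and $-(1-3\alpha)|N(v)\cap T|$ together are bounded above by $\Oh(\Delta)$ in absolute value (either one or the other is non-positive, and the remaining term is at most $|1-3\alpha|\Delta\le 2\Delta$). Therefore $\alpha\counter(v)\le \alpha\Delta+\Oh(\Delta+k^2)$, and dividing by $\alpha$ and doing the case analysis $\Delta\ge k$ versus $\Delta<k$ gives the desired bound $\counter(v)\in\Oh(\Delta+\alpha^{-1}k^2)$. The only mildly subtle step is keeping the $\Delta$ factor outside of the $\alpha^{-1}$ in the final bound, which is why I would bound $|N(v^\ast)\cap T|$ using $\min(\Delta,k)$ rather than just $\Delta$; aside from that, the argument is a straightforward chaining of inequalities.
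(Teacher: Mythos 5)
Your proposal is correct and follows essentially the same route as the paper's proof: both use the zero-counter vertex guaranteed by \cref{rr:general:no-zero-counter} to bound $t'/k'$ via $\alpha\deg(v^\ast)\le\alpha\Delta$ plus an $\Oh(k^2)$ correction, sandwich every contribution near $t'/k'$ using the inapplicability of \cref{rr:general:include-high,rr:general:exclude-low}, and then extract $\alpha\counter(v)$ from the contribution identity, with the same key trick of bounding $|N(v^\ast)\cap T|$ by $|T|\le k$ so that $\Delta$ does not pick up an $\alpha^{-1}$ factor. The only cosmetic slip is writing the intermediate bound as $\alpha\Delta+\Oh(\Delta+k^2)$; the non-$\alpha$-weighted term is really $\Oh(k^2)$, which is what makes the final division by $\alpha$ work in the minimization case.
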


{
\begin{proof}
	First, observe that there exists at least one vertex~$u \in V(G) \setminus T$ with $\counter(u) = 0$, since otherwise \Cref{rr:general:no-zero-counter} is still applicable.
	Since \Cref{rr:general:exclude-low} is applied to each needless vertex, we conclude that every vertex in~$V(G)\setminus T$ has contribution at least~$t' / k' - (3\alpha-1)(k - 1)^2$ for maximization.
	Furthermore, since \Cref{rr:general:include-high} is applied to each satisfactory vertex, we conclude that every vertex~$v\in V(G)\setminus T$ has contribution at least~$t' / k' + ( 3 \alpha-1)(k - 1)$ for minimization.
	In particular, we have
	\begin{align*}
		\contribution(u, T)&\ge t' / k' - (3\alpha-1)(k - 1)^2 \text{ for maximization, and} \\
		\contribution(u, T)&\ge t' / k' + ( 3\alpha-1)(k - 1) \text{ for minimization}.
	\end{align*}
	Since also~$\contribution(u, T) = \alpha \cdot \deg(u) + (1 - 3 \alpha) |N(u) \cap T|$ we obtain that
	\begin{align}
		\label{eq:lb1}
		t' / k' &\le \alpha \cdot \deg(u) + (1 - 3 \alpha)|N(u) \cap T| + (3 \alpha-1)(k - 1)^2 \text{ for maximization, and} \\
		\label{eq:lb2}
		t' / k' &\ge \alpha \cdot \deg(u) + (1 - 3 \alpha)[(k - 1) + |N(u) \cap T|] \text{ for minimization}.
	\end{align}
	Moreover, since \Cref{rr:general:how-to-include} is applied to each satisfactory vertex, we conclude that every vertex~$v\in V(G)\setminus T$ has contribution at most~$t' / k' + (3 \alpha-1)(k - 1)$ for maximization.
	Furthermore, since \Cref{rr:general:how-to-exclude} is applied to each needless vertex, we conclude that every vertex in~$V(G)\setminus T$ has contribution at most~$t' / k'- (3\alpha-1)(k - 1)^2$ for minimization.
	This implies that in particular
	\begin{align}
		\label{eq:ub1}
		\alpha \cdot \counter(v) &\le t' / k' + (3 \alpha-1)(k - 1) \text{ for maximization, and}. \\
		\label{eq:ub2}
		\alpha \cdot \counter(v) &\le t' / k' - (3 \alpha-1)(k - 1)^2 \text{ for minimization}.
	\end{align}
	For maximization and~$\alpha\in(1/3,1]$ it then follows from \Cref{eq:lb1,eq:ub1} that
	\begin{align*}
		\counter(v) &\le \deg(u) + \frac{3 \alpha-1}{\alpha} [k(k - 1) - |N(v) \cap T|] \in \Oh(\Delta + \alpha^{-1}k^2).
		\end{align*}
		For minimization and~$\alpha\in(0,1/3)$ it then follows from \Cref{eq:lb2,eq:ub2} that
	\begin{align*}
		\counter(v) &\le \deg(u) + \frac{1-3 \alpha}{\alpha} [k(k - 1) + |N(v) \cap T|] \in \Oh(\Delta + \alpha^{-1}k^2).
	\end{align*}
	This concludes the proof.
\end{proof}
}

\paragraph{Putting everything together.}

For the kernels, we first transform the instance into an equivalent instance of \probAnoBor{}.
Second, we apply our reduction rules.
For the third step, we use the following proposition to reduce back to the non-annotated version.

\begin{proposition}
	\label{prop:general:remove-ano-n-bounded}
	Let~$\alpha > 0$.
	\begin{itemize}
	\item Given an instance~$(G, T, \counter, k, t)$ of \probAnoBorMax{}, we can compute in polynomial time an equivalent instance of \probBorMax{} of size $\Oh(|V(G)|^2 + \alpha^{-1}|V(G)|k^2) \subseteq \Oh(\alpha^{-1}|V(G)|^3)$.
	\item Given an instance~$(G, T, \counter, k, t)$ of \probAnoBorMin{}, we can compute in polynomial time an equivalent instance of \probBorMin{} of size $\Oh(\alpha^{-2}(|V(G)|+\alpha^{-1}k^2)^2) \subseteq \Oh(\alpha^{-4}|V(G)|^4)$.
	\end{itemize}
\end{proposition}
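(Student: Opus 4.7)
The plan is to apply the three data reduction rules of this subsection to tame the counters, and then invoke the generic de-annotation procedures \Cref{lemma:general:remove-ano} and \Cref{lemma:general:remove-ano-min} from \Cref{sec:annotation}.

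First, I would apply \Cref{rr:general:include-high}, \Cref{rr:general:exclude-low}, and \Cref{rr:general:no-zero-counter} exhaustively in polynomial time on the input annotated instance. None of these rules ever adds vertices to~$G$, so the resulting graph still has at most~$|V(G)|$ vertices.

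Second, I would invoke \Cref{lemma:general:counter-bound} on the reduced instance to obtain the bound~$\Gamma \in \Oh(\Delta + \alpha^{-1}k^2)$; combined with~$\Delta \le |V(G)|$ this gives~$\Gamma \in \Oh(|V(G)| + \alpha^{-1}k^2)$.

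Third, I would feed the reduced annotated instance into \Cref{lemma:general:remove-ano} in the maximization case and into \Cref{lemma:general:remove-ano-min} in the minimization case to obtain an equivalent non-annotated instance in polynomial time. Plugging the bounds~$\Gamma \in \Oh(|V(G)| + \alpha^{-1}k^2)$,~$\Delta_{\overline{T}} \le \Delta \le |V(G)|$, and~$|T|, k \le |V(G)|$ into the respective size formulae and simplifying yields exactly the two claimed bounds.

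I do not anticipate a substantive obstacle: the argument is a direct composition of the correctness of the three reduction rules, \Cref{lemma:general:counter-bound}, and the two de-annotation lemmas. The main care is to verify that the reduction phase runs in polynomial rather than pseudo-polynomial time, which can be done by applying \Cref{rr:general:no-zero-counter} in bulk (subtracting from every vertex of~$V(G)\setminus T$ the minimum counter value among those vertices in a single step, so that the rule becomes inapplicable immediately afterwards) so that each of the three rules is applied only polynomially often; the remainder is routine arithmetic in the size substitution.
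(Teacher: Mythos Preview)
Your proposal is correct and follows essentially the same route as the paper: apply the reduction rules of this subsection exhaustively to make \Cref{lemma:general:counter-bound} applicable, read off~$\Gamma \in \Oh(|V(G)|+\alpha^{-1}k^2)$, and then plug the bounds on~$\Delta$, $\Gamma$, $|T|$, and~$k$ into \Cref{lemma:general:remove-ano} and \Cref{lemma:general:remove-ano-min}. Your remark about applying \Cref{rr:general:no-zero-counter} in bulk to keep the running time polynomial is a nice extra detail that the paper leaves implicit.
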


{
\begin{proof}
 	Using \Cref{rr:general:how-to-exclude,rr:general:how-to-include,rr:general:no-zero-counter} exhaustively yields, by \Cref{lemma:general:counter-bound}, an instance where $\counter(v) \in \Oh(\Delta+\alpha^{-1}k^2)\subseteq \Oh(|V(G)|+\alpha^{-1}k^2)$.
 	For maximization, by \Cref{lemma:general:remove-ano} we get an equivalent instance of \probBorMax{} of size $$\Oh(|V(G)|^2 + |V(G)|k^2+k^2) \subseteq \Oh (\alpha^{-1}|V(G)|^3), \text{ and}$$
 	for minimization, by \Cref{lemma:general:remove-ano-min} we get an equivalent instance  of \probBorMin{} of size 
\begin{align*}
\Oh(\alpha^{-2}(|V(G)|+\alpha^{-1}k^2)^2+\alpha^{-1}(|V(G)|+\alpha^{-1}k^2)|V(G)|) &\subseteq \Oh(\alpha^{-2}(|V(G)|+\alpha^{-1}k^2)^2)\\ &\subseteq \Oh(\alpha^{-4}|V(G)|^4).
\end{align*} 	
 	Hence, the statement follows.
\end{proof}
}

\begin{proposition}
	\label{prop:general:kernel}
	Given an instance~$(G, T, \counter, k, t)$ of \probAnoBor{} with~$\alpha> 0$, we can compute in polynomial time an equivalent \probBor{} instance of size~$(\Delta + k)^{\Oh(1)}$.
\end{proposition}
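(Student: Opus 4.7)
The plan is to combine the vertex-count reduction enabled by \Cref{rr:delta:better} with the annotation-removal machinery from \Cref{prop:general:remove-ano-n-bounded}. At a high level, I would first reduce the annotated instance so that $|V(G)|$ is polynomial in $\Delta + k$, and then strip the annotations to obtain the promised \probBor{} instance whose size is polynomial in $\Delta + k$.

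For the first step I would apply \Cref{rr:delta:better} exhaustively. The key observation is that the ``better than with respect to $T$'' relation is a total preorder on $V(G)$, because it is defined via the numerical comparison of $\contribution(v,T)$. So if $v^* \in V(G) \setminus T$ is a vertex whose contribution attains the extremum value (minimum in the maximization variant, maximum in the minimization variant), then every vertex of $V(G) \setminus T$ is better than $v^*$ with respect to $T$. Since the rule is no longer applicable, the number of such witnesses must be strictly less than $(\Delta_{\overline{T}} + 1)(k-1) + 1$, and hence $|V(G) \setminus T| \le (\Delta_{\overline{T}} + 1)(k-1) \le (\Delta + 1)(k-1)$. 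Together with $|T| \le k$ this yields $|V(G)| \in \Oh(\Delta k)$. Note that $\Delta$ does not grow along the way by \Cref{obs-rr-preserve-params}.

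For the second step I would feed the reduced annotated instance into \Cref{prop:general:remove-ano-n-bounded}. Since $|V(G)| \in \Oh(\Delta k)$ and $\alpha$ is treated as a fixed constant, the proposition produces an equivalent \probBorMax{} instance of size $\Oh(\alpha^{-1}|V(G)|^3) \subseteq \Oh((\Delta k)^3)$ in the maximization case and an equivalent \probBorMin{} instance of size $\Oh(\alpha^{-4}|V(G)|^4) \subseteq \Oh((\Delta k)^4)$ in the minimization case. Both bounds are $(\Delta + k)^{\Oh(1)}$, as required.

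The only subtle point, and arguably the hardest part of the argument, is the $\Oh(\Delta k)$ vertex bound itself: because ``better'' is defined by a non-strict inequality, the relation is reflexive and total, which is precisely what makes a single extremal vertex of $V(G) \setminus T$ witness \emph{all} of $V(G) \setminus T$ as better, and thereby force the application of \Cref{rr:delta:better}. Everything else, including the counter bound $\Gamma \in \Oh(\Delta + \alpha^{-1} k^2)$ of \Cref{lemma:general:counter-bound} and the transformation to a non-annotated instance, is already folded into \Cref{prop:general:remove-ano-n-bounded} and requires no additional work.
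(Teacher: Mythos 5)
Your proof is correct and follows essentially the same route as the paper's: the paper likewise obtains the vertex bound from the exhaustive application of \cref{rr:delta:better} (via \cref{prop:general:delta}), bounds the counters via \cref{lemma:general:counter-bound}, and then removes annotations with \cref{lemma:general:remove-ano} or \cref{lemma:general:remove-ano-min} — the only difference is that you reorder the steps and invoke \cref{prop:general:remove-ano-n-bounded}, which packages the latter two ingredients. Your explicit justification of the $\Oh(\Delta k)$ vertex bound (a single extremal vertex of $V(G)\setminus T$ witnesses all others as better) fills in a detail the paper leaves implicit.
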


\begin{proof}
First, we bound~$\counter(v)$ by~$(\Delta+k)^{\Oh(1)}$ due to \Cref{lemma:general:counter-bound}.
Second, we obtain a kernel of size~$(\Delta+k)^{\Oh(1)}$ due to \Cref{prop:general:delta}.
Finally, we transform the resulting instance in an equivalent instance of \probBor{} of size~$(\Delta+k)^{\Oh(1)}$ due to \Cref{lemma:general:remove-ano} (maximization) or \Cref{lemma:general:remove-ano-min} (minimization).
\end{proof}

\subsection{No Polynomial Kernels in Non-Degrading Cases}

Note that if~$\alpha=0$, then \probMax{} corresponds to \textsc{Densest~$k$-Subgraph}.
It is already known that \textsc{Densest~$k$-Subgraph} does not admit a polynomial kernel for~$\Delta+k$~\cite{KS15}.
We strengthen and generalize this result: First, we observe that \textsc{Densest~$k$-Subgraph} does not admit a polynomial kernel for~$k$, even when~$\Delta = 3$. Second, we 
extend this negative result to \probIntMax{} and \probIntMin{} when~$\Delta$ is a constant.


\begin{theorem}
	\label{thm:maxdeg:no-poly-kernel-const-delta}
  Unless \PHC,
  \begin{enumerate}
  \item For each~$\alpha\in [0,1/3)$, \probIntMax{} on subcubic graphs does not admit a polynomial kernel for~$k$, and 
  \item For each~$\alpha\in (1/3,1]$, \probIntMin{}  on graphs with constant maximum degree does not
    admit a polynomial kernel for~$k$.
\end{enumerate}
\end{theorem}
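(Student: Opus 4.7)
The plan is to prove both statements via weak $q$-compositions from suitable NP-hard source problems, invoking \Cref{lemma:is:noq}. Concretely, a weak $2$-composition suffices to rule out polynomial kernels in $k$: we combine $t^{2}$ instances of size $n$ into one instance whose parameter $k'$ is bounded by $t \cdot n^{\Oh(1)}$.

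For Part (1), I would first establish (or cite) NP-hardness of \probIntMax{} on subcubic graphs, e.g.\ by reducing from \textsc{Clique} on regular graphs via a standard subdivision/sparsification chain, or from a subcubic 3-SAT variant. Given $t^{2}$ yes/no instances $(G_{1},k),\dots,(G_{t^{2}},k)$ of this source problem arranged in a $t\times t$ grid indexed by $(i,j)$, I would construct a single subcubic graph $G^{\star}$ containing disjoint copies of the $G_{\ell}$'s together with two families of subcubic \emph{selector gadgets} — one family $R_{1},\dots,R_{t}$ for rows and one $C_{1},\dots,C_{t}$ for columns — each implemented as a long path (or small binary tree truncated to degree $3$) linked to the instance copies so that any vertex of a copy $G_{\ell}$, where $\ell = (i-1)t+j$, is reachable only via $R_{i}$ and $C_{j}$. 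Setting the target size to $k' = k + \Theta(t)$ and the threshold $t'$ so that the budget pays for exactly one row selector, one column selector, and a densest $k$-subgraph of a single $G_{\ell}$, one verifies $(G^{\star},k',t') \in \probIntMax \iff \exists \ell: (G_{\ell},k) \in L_{1}$. The bound $k' \in \Oh(t \cdot n^{\Oh(1)})$ is then immediate, and \Cref{lemma:is:noq} gives the lower bound.

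To lift from $\alpha = 0$ to every $\alpha \in [0,1/3)$, I would exploit that in this range internal edges are strictly more valuable than crossing edges, so padding each instance with a uniform number of dummy structures (e.g.\ pendant paths of carefully chosen length, keeping the graph subcubic) and rebalancing $t'$ makes the composed instance $\alpha$-agnostic: the unique way to meet the threshold is still to choose a densest $k$-subgraph inside one $G_{\ell}$. The key invariant to check is that no solution can gain value by straddling two selectors or two instance copies, which follows from the degree-$3$ bottleneck between copies created by the selectors.

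Part (2) is symmetric. For $\alpha \in (1/3,1]$ in the minimization setting, \Cref{obs-val-is-submodular} reverses: crossing edges are penalized more than internal ones, so minimizing $\val$ again favours a single tightly packed cluster of $k$ vertices. I would compose from a constant-degree NP-hard minimization analogue (e.g.\ \textsc{Min $(k,n-k)$-Cut} or \textsc{Sparsest $k$-Subgraph} on bounded-degree graphs) using the same grid-plus-selector template, but now tuning gadgets so the only way to achieve value at most $t'$ is to keep the solution inside one copy; the bound on $\Delta$ is maintained by design.

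The main obstacle is designing the selector gadgets with \emph{bounded degree} while still correctly forcing the ``pick one copy'' semantics — the classical trick of attaching a global choice vertex to all copies is unavailable here. I expect the crux to be a pigeonhole/exchange argument on partial solutions: if a hypothetical solution intersects two copies $G_{\ell}, G_{\ell'}$, then the fact that every path between them passes through the long selector subgraphs forces the solution either to spend many of its $k'$ slots on selector vertices (making the remaining budget too small to reach the threshold) or to leave selector edges uncovered (breaking the $\val$-bound in the maximization case, or creating extra crossing edges in the minimization case). Making this pigeonhole quantitatively match the threshold $t'$ across the whole range of $\alpha$ simultaneously is the delicate part of the construction.
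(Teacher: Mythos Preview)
Your proposal has a fundamental gap: a weak $2$-composition does \emph{not} rule out all polynomial kernels. By \Cref{lemma:is:noq}, a weak $q$-composition only excludes compressions of size $\Oh(k^{q-\epsilon})$, so a weak $2$-composition merely rules out near-linear kernels and says nothing about, e.g., an $\Oh(k^{3})$ kernel. To exclude polynomial kernels of every degree you need either a cross-composition (OR-composition) or a polynomial-parameter transformation from a problem already known to lack polynomial kernels. Your grid-plus-selector construction, even if it could be made to work in bounded degree, would have to be carried out for every fixed $q$, and then each instance copy would need to interface with $q$ selector families while staying subcubic; you have not indicated how to do this, and for $q\ge 2$ the degree constraint becomes a genuine obstruction rather than just a nuisance.

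The paper avoids all of this by a different and much simpler route. It observes that \textsc{Clique} parameterized by the size of the largest connected component admits no polynomial kernel (trivial OR-composition: take the disjoint union of \textsc{Clique} instances with the same target clique size). It then gives a single polynomial-parameter transformation from this problem to \probIntMax{} on subcubic graphs, via the Feige--Seltser construction (vertex cycles of length $\hat n$, connector paths of length $\hat n^{2}+1$ between adjacent vertices, pendant leaves to make every non-pendant vertex cubic); the solution size $k'$ depends only on $\hat n$ and $\ell$, hence only on the component size. The passage from $\alpha=0$ to general $\alpha\in[0,1/3)$ is then a one-line recomputation of the threshold using that optimal solutions may be assumed to consist of degree-$3$ vertices. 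For Part~(2), the same \textsc{Densest-$k$-Subgraph} instance is reused, and pendant vertices are made unattractive by attaching large cliques (for $\alpha<1$) or constant-degree expanders (for $\alpha=1$). No selector gadgets are needed at any point, because the composition happens upstream on \textsc{Clique} where disjoint union is already a valid OR-composition.
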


{
\begin{proof}
  \newcommand{\nn}{\ensuremath{\hat{n}}}
  We present a polynomial-parameter transformation from  \textsc{Clique} parameterized by the size of the largest connected component (this parameterization does not admit a polynomial kernel which can be seen by an or-composition of disjoint \textsc{Clique} instances with the same clique size) to  \probIntMax{} (and \probIntMin{}) parameterized by the solution size~$k$.
  The reduction is based on a reduction of Feige and Seltser~\cite{Fei97} that shows NP-hardness of \textsc{Densest~$k$-Subgraph} on subcubic graphs with a subsequent reduction to    \probIntMax{} and \probIntMin{}, respectively.

  The reduction from \textsc{Clique} to \textsc{Densest~$k$-Subgraph} works as follows.
  Let~$(G,\ell)$ be an instance of \textsc{Clique}. Without loss of generality, we may
  assume that every connected component of~$G$ has exactly~$\nn$ vertices and more
  than~$\binom{\ell}{2}$ edges.  For each connected component~$H$ of~$G$ perform the
  following construction.  Let~$\{v_1, \ldots, v_{\nn}\}$ denote the vertex set
  of~$H$. For each vertex~$v_i$ of~$H$ create a cycle~$C_i\coloneqq (v_i^1,\ldots,v_i^{\nn})$ of
  length~\nn. This cycle is called the \emph{vertex cycle of}~$v_i$. 
  Then, add a path on~$\nn^2+1$ edges
  between~$v_r^q$ and~$v_q^r$ if~$v_r$ and~$v_q$ are adjacent in~$H$. 
In the following, this path is called \emph{connector path} of~$v^r_q$ and~$v^q_r$. 
  To each newly added vertex on
  this connector path, attach a degree-one vertex. 
  Furthermore, attach a degree-one vertex to each vertex on a vertex cycle which has no connector path.  
  We call these newly added degree-one vertices~\emph{pendant}. Observe
  that every non-pendant vertex has degree three.  Let~$G'$ denote the constructed graph.

  \begin{claim}\label{claim:clique-t-dks}
    The graph~$G$ contains a clique of size~$\ell$ if and only if~$G'$ contains a~$k'\coloneqq \ell\nn+\binom{\ell}{2}\nn^2$-vertex subgraph with at least~$\delta'\coloneqq \ell\nn + \binom{\ell}{2}(\nn^2+1)$ edges.
  \end{claim}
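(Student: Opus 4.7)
The plan is a direct edge-counting argument over a natural partition of $S$ according to the structural pieces of $G'$. The forward direction is immediate: given a clique $\{v_{i_1},\dots,v_{i_\ell}\}$ of $H$, I take $S$ to be the union of all vertices of the $\ell$ vertex cycles $C_{i_1},\dots,C_{i_\ell}$ together with every internal vertex of each of the $\binom{\ell}{2}$ connector paths joining two of these cycles. A direct count gives $|S|=\ell\nn+\binom{\ell}{2}\nn^2=k'$ vertices, while the induced edges are the $\ell\nn$ cycle edges plus $\binom{\ell}{2}(\nn^2+1)$ connector-path edges, totalling $\delta'$. The work is in the converse, where I would analyze an arbitrary $S\subseteq V(G')$ of size $k'$ with $m(G'[S])\geq\delta'$.

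For the converse, I partition $S$ into $V_C$ (cycle vertices), $V_P$ (connector-path internals), and $V_L$ (pendants). For each vertex cycle $C_i$ set $x_i=|V(C_i)\cap S|$, and call $C_i$ \emph{complete} if $x_i=\nn$ and \emph{partial} if $1\le x_i<\nn$, with counts $A$ and $A'$. Any proper non-empty subset of a cycle induces at most $|\text{subset}|-1$ edges, so $m(G'[V_C])\le V_C-A'$. For each connector path $P$, let $y_P$ be its internals in $S$, $z_P\in\{0,1,2\}$ its endpoints in $S$, and $\mathrm{ch}_P$ the number of maximal runs of consecutive vertices of $P$ contained in $S$; then the $P$-edges inside $G'[S]$ equal $y_P+z_P-\mathrm{ch}_P$. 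A short case analysis on $z_P$ shows $z_P-\mathrm{ch}_P\le 1$, with equality precisely when $z_P=2$ and $y_P=\nn^2$; call such paths \emph{fully complete} and let $B$ be their count. Since each pendant contributes at most one edge, summing these bounds yields $\delta'\le V_C+V_P+V_L+B-A'=k'+B-A'$, so $B\ge\binom{\ell}{2}+A'$.

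Counting vertices the other way, complete cycles contribute $\nn$ each to $V_C$ and partial ones at least $1$, while fully complete connectors contribute $\nn^2$ internals to $V_P$, yielding $k'\ge A\nn+A'+B\nn^2\ge A\nn+A'(1+\nn^2)+\binom{\ell}{2}\nn^2$. Comparing with $k'=\ell\nn+\binom{\ell}{2}\nn^2$ gives $A\nn+A'(1+\nn^2)\le\ell\nn$, and we may assume $\nn\ge\ell$ (otherwise $G$ trivially has no $K_\ell$), which forces $A'=0$ and $A\le\ell$. Since each fully complete connector corresponds to an edge of $H$ between two distinct complete cycles, $B\le\binom{A}{2}$; combined with $B\ge\binom{\ell}{2}$ this gives $A\ge\ell$, hence $A=\ell$ and $B=\binom{\ell}{2}$. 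The $\ell$ vertices of $H$ whose cycles are complete are then pairwise joined by edges of $H$, yielding the desired clique $K_\ell$ in $G$. The main obstacle will be the connector-path case analysis: the key identity $z_P-\mathrm{ch}_P\le 1$, tight only for fully complete paths, is what converts the surplus of $\binom{\ell}{2}$ edges in $G'[S]$ into genuinely global clique structure in $G$ rather than allowing it to be absorbed by small local substructures.
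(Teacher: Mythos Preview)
Your argument is correct and takes a genuinely different route from the paper's. The paper proceeds by iterated local modifications: it first shows that any optimal $S$ can be assumed to contain no pendant vertices (using that each non-pendant lies in a large non-pendant component), then introduces the notion of a \emph{good} degree-$3$ vertex (a degree-$3$ vertex of $G'[S]$ not associated with a degree-$1$ vertex), argues by a counting estimate that $G'[S]$ must contain at least $2\binom{\ell}{2}$ good vertices, deduces that at least $\binom{\ell}{2}$ connector paths are fully contained in $S$, and finally performs a further exchange argument to make every touched vertex cycle complete.

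Your approach replaces all of this with a single global edge count: partitioning $S$ into cycle, path-internal, and pendant vertices and proving the sharp path inequality $z_P-\mathrm{ch}_P\le 1$ (tight only for fully complete paths) yields directly $m(G'[S])\le k'+B-A'$, after which the vertex budget forces $A'=0$, $A=\ell$, and $B=\binom{\ell}{2}$. This is cleaner and more elementary; it avoids the exchange arguments entirely. The paper's approach has the side benefit that it literally produces a pendant-free optimal $S$ consisting of $\ell$ full cycles and $\binom{\ell}{2}$ full paths, which is exactly what the subsequent \Cref{claim:reg-sol} needs for the reductions to \probIntMax{} and \probIntMin{}; your proof recovers this only indirectly, by first extracting the clique in $G$ and then invoking the forward direction to rebuild such an $S$.

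One small point: your justification ``we may assume $\nn\ge\ell$ (otherwise $G$ trivially has no $K_\ell$)'' only handles one side of the biconditional. In the paper's setting this is harmless, because the standing assumption that every component of $G$ has more than $\binom{\ell}{2}$ edges already forces $\nn>\ell$; it would be cleaner to cite that assumption rather than the ``trivially no $K_\ell$'' remark.
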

  \begin{claimproof}
    Suppose that~$G$ contains a clique~$K$ of size~$\ell$. Then, consider the induced subgraph that
    contains the~$\ell$ vertex cycles of the clique vertices and all connector paths between them. This subgraph has the claimed number of vertices
    and edges since the number of cycles is~$\ell$ and since each pair of cycles is
    connected via a path because~$K$ is a clique.

    Conversely, assume that~$G'$ contains a~$k'$-vertex subgraph~$G'[S]$ with at
    least~$\delta'$ edges.  First, observe that we may assume
    that~$S$ contains no pendant vertices: By construction and the fact that each connected component of~$G'$ has more than $\binom{\ell}{2}$~edges, each non-pendant vertex
    of~$G'$ is connected to more than~$k'$ non-pendant vertices. Hence,
    every pendant vertex of~$S$ can be replaced by a non-pendant neighbor~$u\notin S$ of some
    non-pendant vertex~$v\in S$ without decreasing the number of
    edges in~$G'[S]$.

    After this observation, the correctness proof can be carried out in the same manner as
    in the proof of Feige and~Seltser~\cite{Fei97}. We sketch the details for sake of
    completeness. First, observe that every degree-one vertex~$u$ of~$G'[S]$ may reach at
    most one degree-3 vertex~$v$ of~$G'[S]$ via some path that contains only vertices that
    have degree 2 in~$G'[S]$. We say that the degree-one vertex~$u$ is \emph{associated}
    with the degree-3 vertex~$v$. We call a degree-3 vertex \emph{good} if it is not
    associated with any degree-one vertex. A simple calculation shows that~$G'[S]$ must
    contain at least~$2\binom{\ell}{2}$ good degree-3 vertices to achieve the claimed
    number of edges~(refer to Feige and Seltser~\cite{Fei97} for further details). If a
    degree-3 vertex is good, then~$G'[S]$ contains its connector path completely. By the choice
    of~$k'$ and the path length,~$G'[S]$ may contain at most~$\binom{\ell}{2}$ connector paths
    completely and thus both endpoints of a connector path that is contained completely in~$G'[S]$
    are good. Now, let~$\mathcal{P}$ denote the set of connector paths that are completely contained
    in~$G'[S]$ and let~$\mathcal{C}$ denote the collection of vertex cycles that contain a
    good vertex or, equivalently, that contain an endpoint of a connector path in~$\mathcal{P}$. 
    
    We
    show that we may assume that every vertex cycle of~$\mathcal{C}$ is completely contained
    in~$G'[S]$. Assume, towards a contradiction, that~$G'[S]$ does not contain all vertex cycles
    of~$\mathcal{C}$ completely and that~$G'[S]$ is a~$k'$-vertex and~$\delta'$-edge
    subgraph of~$G$ with the largest possible number of vertices of~$\mathcal{C}$. Choose
    some vertex cycle~$C_v$ of~$\mathcal{C}$ that is not contained completely
    in~$G'[S]$. By definition of~$\mathcal{C}$, this vertex cycle contains a good
    vertex~$u$. Since~$C_v$ is not contained completely in~$G'[S]$, some vertex~$y$
    of~$C_v$ has degree one in~$G'[S]$ and is connected to~$u$ via a path of vertices
    of~$C_v$. Since~$u$ is good, there must be some vertex~$w$ on this path which has
    degree 3 in~$G'[S]$ but which is not good. Thus,~$G'[S]$ contains some but not all
    vertices of the connector path starting at~$w$. Let~$x$ be the degree-one vertex on
    this connector path that is contained in~$G'[S]$ and reachable from~$w$ via the connector path
    vertices. Removing~$x$ from~$G'[S]$ and adding the missing cycle neighbor of~$y$ gives
    a graph~$\tilde{G}$ with the same number of edges and vertices but with more cycle
    vertices, a contradiction to the choice of~$G'[S]$.

    Consequently, we may assume that~$G'[S]$ contains all cycles of~$\mathcal{C}$
    completely. By the choice of~$k'$ we have that~$|\mathcal{C}|=\ell$. Thus,~$G'[S]$
    contains~$\ell$ vertex cycles with~$\binom{\ell}{2}$~connector paths between them. The
    vertex set of~$G$ corresponding to the vertex cycles is thus a clique of size~$\ell$.
  \end{claimproof}
  
  The proof of \Cref{claim:clique-t-dks} also implies the correctness of the following observation. 
  \begin{claim}\label{claim:reg-sol}
    If~$G'$ contains an~$\ell\nn+\binom{\ell}{2}\nn^2$-vertex subgraph with at least~$\ell\nn + \binom{\ell}{2}\big(\nn^2+1\big)$ edges, then it contains one such subgraph that has no pendant vertices. 
    In other words, such a size~$\ell\nn+\binom{\ell}{2}\nn^2$-vertex subgraph contains only vertices of degree~$3$.
  \end{claim}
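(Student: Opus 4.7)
The plan is to promote the ``no pendants'' observation already sketched inside the proof of \Cref{claim:clique-t-dks} into a standalone argument. Starting from an arbitrary $k'$-vertex subgraph $G'[S]$ with at least $\delta'$ edges, I would iteratively swap each pendant in $S$ for a non-pendant vertex outside $S$ without ever decreasing the edge count, until $S$ contains only non-pendant vertices; since every non-pendant vertex of $G'$ has degree exactly $3$, this gives both phrasings of the claim simultaneously.

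The key step is a swap lemma: given a pendant $p \in S$, exhibit a non-pendant $v \in S$ with a non-pendant neighbor $u \in V(G') \setminus S$. I would first note that the non-pendant subgraph of each connected component of $G'$ is connected, since pendants are leaves and their removal does not disconnect anything. Then I would verify by direct counting that each connected component of $G'$ contains strictly more than $k'$ non-pendant vertices: a component contributes $\nn^2$ non-pendants from its $\nn$ vertex cycles of length $\nn$ plus $\nn^2$ further non-pendants on the internal part of each of its more than $\binom{\ell}{2}$ connector paths, which comfortably exceeds $k' = \ell\nn + \binom{\ell}{2}\nn^2$ once $\nn \ge \ell$ (a WLOG assumption since components on fewer than $\ell$ vertices contain no $K_\ell$). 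If now every non-pendant neighbor of every non-pendant vertex of $S$ were itself in $S$, connectivity of the non-pendant subgraph would force $S$ to contain the entire non-pendant part of some component, contradicting $|S| = k'$; so the required $v$ and $u$ exist.

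I would then set $S' \coloneqq (S \setminus \{p\}) \cup \{u\}$. Since $p$ has degree one in $G'$, removing it loses at most one edge from $G'[S]$, while inserting $u$ gains at least the edge $uv$ with $v \in S \setminus \{p\}$; hence $|E(G'[S'])| \ge |E(G'[S])| \ge \delta'$. Iterating the swap over the remaining pendants yields the pendant-free subgraph required by the claim.

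The main obstacle I anticipate is purely bookkeeping: making sure that the component-wise non-pendant count really does strictly exceed $k'$, which hinges on the $\nn \ge \ell$ WLOG choice together with the given edge lower bound $|E(H)| > \binom{\ell}{2}$. The swap itself is immediate from the degree-one property of pendants, so no genuinely new idea beyond what already appears in the proof of \Cref{claim:clique-t-dks} is needed.
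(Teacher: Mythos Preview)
Your approach is essentially the paper's own: the paper simply declares that \Cref{claim:reg-sol} follows from the pendant-removal observation inside the proof of \Cref{claim:clique-t-dks}, and you have spelled out that observation as an explicit swap argument with the component-size count made precise. One small patch: your swap lemma silently assumes $S$ contains at least one non-pendant vertex, which you should justify by noting that pendants are pairwise non-adjacent (each pendant's unique neighbor is non-pendant), so an all-pendant $S$ would induce zero edges, contradicting $|E(G'[S])|\ge\delta'>0$.
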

  
\paragraph{No polynomial kernel for \probIntMax{}}  
  Observe that the size parameter~$k'$ of the \textsc{Densest-$k$-Subgraph} instance
  depends only on~$\nn$ and that the maximum degree of~$G'$ is~$3$. Moreover,
  \Cref{claim:clique-t-dks} shows the correctness of the reduction. We thus have
  shown the theorem statement for \textsc{Densest-$k$-Subgraph}, the special case of
  \probIntMax{} with~$\alpha=0$.  We now provide a polynomial-time transformation from the
  \textsc{Densest-$k$-Subgraph} instance~$(G',k',\delta')$ to an instance~$(G',k',t)$ of
  \probIntMax{} with arbitrary~$\alpha\in [0,1/3)$ by
  setting~$t\coloneqq  (1-\alpha)\delta'+\alpha (3k'-2\delta')$.  Since~$G'$ and~$k'$ are not
  changed by this transformation, it only remains to show its correctness.

  Assume that~$(G',k',\delta')$ is a yes-instance, and let~$S$ be a size-$k'$ vertex set such that~$G'[S]$ has at least~$\delta'$ edges. By \Cref{claim:reg-sol}, we may assume that~$S$ contains only vertices that have degree 3 in~$G'$. Thus, in~$G'$, we have~$\val(S) = (1-\alpha)m(S)+\alpha m(S,V(G')\setminus S)=(1-\alpha)m(S)+\alpha (3k'-2m(S))\ge (1-\alpha)\delta'+\alpha (3k'-2\delta')$ since~$m(S)\ge \delta'$ and~$(1-\alpha)>2\alpha$.
  
  Conversely, let~$S$ be a size-$k'$ set with~$\val(S)\ge t=(1-\alpha)\delta'+\alpha (3k'-2\delta')$ in~$G'$. 
  First, observe that we may assume that~$S$ does not contain pendant vertices: 
  Each such vertex has degree one in~$G'$ and thus degree at most one in~$G'[S]$. 
  Such a vertex exists since each connected component of~$G$ has at least~$\binom{\ell}{2}$ edges and hence each connected component of~$G'$ has at least~$k'$ vertices.
  Consequently, it can be replaced by some non-pendant neighbor of a vertex in~$S$. Consequently, $\val(S) =(1-\alpha)m(S)+\alpha (3k'-2m(S))$. As in the proof of the forward direction, it follows that~$m(S)\ge \delta'$.

\paragraph{No polynomial kernel for \probIntMin{}} 
  Finally, we provide a polynomial-parameter transformation from the
  \textsc{Densest-$k$-Subgraph} instance~$(G',k',\delta')$ to an equivalent instance~$(G'',k',t)$ of
  \probIntMin{}. 
  In the construction, we distinguish whether $\alpha\in(1/3,1)$ or~$\alpha=1$. 
  In both cases, we add a gadget to each pendant vertex of~$G'$ such that each size-$k'$ set~$S$ with minimal value~$\val(S)$ in~$G''$
  contains only vertices that have degree at most 3 in~$G''$, which will be exactly the vertices of~$G'$.
  Hence, for both cases it remains to show that any solution with minimal value~$\val(S)$ contains only non-pendant vertices of~$G'$.
  Afterwards, the proof of the equivalence of~$(G,\ell)$ and~$(G'',k',t)$ is the same as for \probIntMax{}.

  First, we show the statement for~$\alpha\in(1/3,1)$.
  Construct~$G''$ from~$G'$ by adding for each pendant vertex~$u$
  in~$G'$ a set~$K_u$ on~$\lceil 15/(1-\alpha)\rceil$ vertices and making~$K_u\cup \{u\}$ a
  clique. Set~$t\coloneqq (1-\alpha)\delta'+\alpha (3k'-2\delta')$. 
    Assume towards a
  contradiction that some set~$S$ of minimal value contains a vertex~$u$
  with~$\deg(u) > 3$ in~$G''$. 
  By construction~$u$ is part of some clique~$K_y$ where~$y$ is some pendant vertex of~$G'$.
  Let~$Y\coloneqq K_y\cap S$ and let~$r\coloneqq |Y|$.
  We show that for~$S'\coloneqq (S\setminus Y)\cup Z$ for any vertex set~$Z\subseteq V(G')\setminus S$ consisting of $r$~non-pendant vertices we have~$\val(S')<\val(S)$, contradicting the minimality of~$\val(S)$.
  Observe that such a vertex set~$Z$ exists since each connected component of~$G'$ has at least~$\binom{\ell}{2}$ edges and thus each connected component of~$G''$ has at least~$k''$ non-pendant vertices.
  Since~$|K_y|=\lceil 15/(1-\alpha)\rceil$, we obtain that~$\val(S\setminus Y)\le \val(S)-(1-\alpha)\binom{r}{2}-\alpha r\lceil 15/(1-\alpha)\rceil$.
  Furthermore, observe that adding the vertices in~$Z$ increases the objective value by at most~$3r$ since each non-pendant vertex has degree~$3$ in~$G''$.
  Also, note that removing~$Y$ from~$S$ may result in a new outgoing edge of~$S$ which is incident with the neighbor of~$y$ in~$V(G')$.
  Hence, we obtain that
  
  $$\val((S\setminus Y)\cup Z)\le \val(S)-(1-\alpha)\binom{r}{2}-\alpha r\lceil 15/(1-\alpha)\rceil +3r +\alpha.$$
  
  Now, if~$r\le \lceil 15/(1-\alpha)\rceil-4$, then~$-\alpha r\lceil 15/(1-\alpha)\rceil +3r +\alpha<0$ and thus~$\val((S\setminus Y)\cup Z)< \val(S)$, a contradiction to the minimality of~$S$.
  Otherwise, if~$r\ge \lceil 15/(1-\alpha)\rceil-3$, then~$(1-\alpha)\binom{r}{2}>(1-\alpha)\lceil 11/(1-\alpha)\rceil \lceil 10/(1-\alpha)\rceil/2>55/(1-\alpha)>3\cdot\lceil 15/(1-\alpha)\rceil+\alpha$ and thus~$\val((S\setminus Y)\cup Z)< \val(S)$, a contradiction to the minimality of~$S$.
  Hence,~$S$ contains only vertices of~$G'$.

  Second, we show the statement for~$\alpha=1$ which corresponds to \textsc{Max $(k,n-k)$-Cut}.
	Construct~$G''$ from~$G'$ by adding for each degree-one vertex~$u$
  in~$G'$ a graph~$H$ to~$G''$ and making an arbitrary vertex~$h\in V(H)$ adjacent to~$u$.
  The graph~$H$ is a~$(|V(G')|,q, \rho)$-edge expander, that is,~$H$ has~$|V(G')|$ vertices, is~$q$-regular for some constant~$q$ and every vertex set~$T\subseteq V(H)$ of size at most~$|V(G')|/2$ fulfills~$m(T,V(H)\setminus T)\ge \rho q|T|$.
  We choose the~$(|V(G')|,q, \rho)$-edge expander~$H$ in such a way that~$\rho q\ge 4$.
  Note that such a graph~$H$ exists and can be constructed in polynomial-time~\cite{AB09,HLW06}.
  Finally, we set~$t\coloneqq (1-\alpha)\delta'+\alpha (3k'-2\delta')$.
  Assume towards a
  contradiction that some set~$S$ of minimal value contains a vertex~$u$
  with~$\deg(u) > 3$ in~$G''$. 
  By construction,~$u$ is part of some copy of~$H$ which was attached to a pendant (degree-one) vertex~$y$ of~$G'$.
  Let~$Y\coloneqq H\cap S$ and let~$r\coloneqq |Y|$.
  
  Since~$H$ is a~$(|V(G')|,q, \rho)$-edge expander and since~$k'<|V(G'')|/2$ we obtain that~$m(Y,V(G'')\setminus S)\ge 4r$.
  Since each vertex in~$G'$ has degree at most~$3$, we thus obtain for any set~$Z\subseteq V(G')$ of size~$r$ that~$\val((S\setminus Y)\cup Z)\le \val(S\setminus Y)-4r+3r<\val(S)$, a contradiction to the minimality of~$S$.
  Hence,~$S$ contains only vertices of~$G'$.
\end{proof}
}

\section{Parameterization by \texorpdfstring{$c$}{c}-Closure}
\label{sec:closure}

Now, we show that the maximum degree~$\Delta$ can be bounded by~$k^{\Oh(c)}$ in the degrading variant.
This then gives us a kernel of size~$k^{\Oh(c)}$.
To prove this result we rely on a polynomial bound on the Ramsey bound in $c$-closed graphs.
Then, we show that these kernels cannot be improved under standard assumptions.
Finally, we provide W[1]-hardness for the non-degrading variant.

 \subsection{A Tight \texorpdfstring{$k^{\Oh(c)}$}{k O(c)}-size Kernel for the Degrading Case}
We develop a tight kernel of size $k^{\Oh(c)}$ for the \borderFocusedS{} case.

\paragraph{A~$k^{\Oh(c)}$-size kernel for the \borderFocusedS{} case.}

To this end, we apply a series of reduction rules to obtain an upper bound of~$k^{\Oh(c)}$ on the maximum degree.
Then, the kernel of size $k^{\Oh(c)}$ follows from \Cref{prop:general:kernel}.
In order to upper-bound the maximum degree, we rely on a polynomial Ramsey bound for~$c$-closed graphs~\cite{KKS20}.

\begin{lemma}[{\cite[Lemma~$3.1$]{KKS20}}]
  \label{lemma:closure:ramsey}
  Any~$c$-closed graph~$G$ on at least~$R_c(q, b) \coloneqq (c - 1) \cdot \binom{b - 1}{2} + (q - 1)(b - 1) + 1$ vertices contains a clique of size~$q$ or an independent set of size~$b$.
  Moreover, a clique of size $q$ or an independent set of size $b$ can be found in polynomial time.
\end{lemma}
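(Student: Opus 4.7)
My plan is to build an independent set of size $b$ via a local search, arguing that whenever the search gets stuck with $|I| < b$ the $c$-closure forces a $K_q$ to appear. First I would compute a maximal independent set $I$ greedily and then iterate the following two kinds of improvements until neither applies: (i) \emph{2-for-1 swap}: if there exist $v \in I$ and two non-adjacent vertices $w_1, w_2 \in V(G) \setminus I$ with $N(w_i) \cap I = \{v\}$ for $i \in \{1, 2\}$, replace $v$ in $I$ by $\{w_1, w_2\}$; (ii) \emph{greedy addition}: if some $u \in V(G) \setminus I$ has no neighbor in $I$, add $u$ to $I$. Both operations strictly increase $|I|$, so the loop terminates after at most $|V(G)|$ iterations; at termination, $I$ is maximal and admits no 2-for-1 swap.

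The key structural consequence is that for every $v \in I$ the \emph{private neighborhood}
\begin{align*}
N_v := \{ w \in V(G) \setminus I : N(w) \cap I = \{ v \} \}
\end{align*}
induces a clique in $G$, since any two non-adjacent vertices of $N_v$ would enable a further 2-for-1 swap. By maximality of $I$ every vertex of $V(G) \setminus I$ has a neighbor in $I$, so $V(G) \setminus I$ partitions into the sets $N_v$ together with $N^{\ge 2} := \{ w \in V(G) \setminus I : |N(w) \cap I| \ge 2 \}$. The $c$-closure hypothesis gives $|N^{\ge 2}| \le (c - 1) \binom{|I|}{2}$, since every vertex of $N^{\ge 2}$ is a common neighbor of some non-adjacent pair in $I$ and each such pair has at most $c - 1$ common neighbors. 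If $|I| \ge b$ I output $I$. Otherwise $|I| \le b - 1$; assuming for contradiction that $|N_v| \le q - 2$ for every $v \in I$ (equivalently, that no $\{v\} \cup N_v$ contains a $K_q$), summing the pieces yields
\begin{align*}
|V(G)| \le (b - 1) + (b - 1)(q - 2) + (c - 1) \binom{b - 1}{2} = R_c(q, b) - 1,
\end{align*}
contradicting $|V(G)| \ge R_c(q, b)$. Therefore some $v \in I$ has $|N_v| \ge q - 1$, and since $G[N_v]$ is a clique whose vertices are all adjacent to $v$, the set $\{v\}$ together with any $q - 1$ vertices of $N_v$ forms a $K_q$ that I output.

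The main obstacle is that the natural ``maximum independent set'' version of this argument would require an NP-hard computation; the key insight enabling polynomial time is that \textbf{local} optimality under 2-for-1 swaps (combined with maximality) already suffices for the counting bound, and such a local optimum is reached in at most $|V(G)|$ elementary improvements. A subtle point is that a 2-for-1 swap can temporarily destroy maximality (a vertex that previously had only $v$ as its $I$-neighbor may lose all its $I$-neighbors after the swap), which is precisely why the greedy addition step is included; this does not hurt the running time since every step strictly increases $|I|$.
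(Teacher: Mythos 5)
The paper does not prove this lemma itself; it imports it verbatim from the cited reference \cite{KKS20}, so there is no in-paper proof to compare against. Your argument is correct and is essentially the standard (and, as far as I can tell, the cited) proof of this bound: a maximal independent set $I$ that admits no 2-for-1 swap has clique private neighborhoods $N_v$, the vertices with at least two neighbors in $I$ are charged to non-adjacent pairs of $I$ at a rate of $c-1$ per pair, and the count $(b-1)+(b-1)(q-2)+(c-1)\binom{b-1}{2}=R_c(q,b)-1$ forces either $|I|\ge b$ or some $|N_v|\ge q-1$. You also correctly handle the two points where a naive version would fail, namely replacing the NP-hard maximum independent set by a polynomial-time local optimum and re-establishing maximality after each swap via greedy additions.
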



Using a similar approach as \cref{rr:delta:better} (but exploiting the~$c$-closure instead of the maximum degree) yields the following.

\begin{rrule}
	\label{rr:closure:better}
	Let~$\mathcal{I}$ be an instance of \probAnoBor{}.
	Let~$v \in V(G)$ be some vertex and let $X_v \subseteq N(v)$ be the set of vertices better than $v$.
	If $|X_v| > (c - 1) k$, then apply the Exclusion Rule (\Cref{rr:general:how-to-exclude}) to~$v$.
\end{rrule}

\begin{lemma}
	\label{lemma:closure:lowdeg}
	\cref{rr:closure:better} is correct.
\end{lemma}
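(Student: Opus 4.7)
The plan is to mirror the correctness proof of \Cref{rr:delta:better}, replacing the role of the maximum-degree bound by the $c$-closure property. The forward direction is immediate, so let $S$ be a solution of the original instance; if $v \notin S$ we are done, hence assume $v \in S$. The goal is to exhibit a vertex $v' \in X_v \setminus S$ such that $S' \coloneqq (S \setminus \{v\}) \cup \{v'\}$ is again a solution, which by \Cref{lemma:replace} reduces to verifying $\contribution(v', S \setminus \{v\}) \ge \contribution(v, S \setminus \{v\})$ for maximization (and the reversed inequality for minimization).

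The sufficient condition I would impose on $v'$ is that $N(v') \cap A = \emptyset$, where $A \coloneqq (S \setminus \{v\}) \setminus N(v)$ is the set of non-neighbors of $v$ inside $S \setminus \{v\}$. Under this condition every neighbor of $v'$ in $S \setminus \{v\}$ is also a neighbor of $v$, so $|N(v') \cap (S \setminus \{v\})| \le |N(v) \cap (S \setminus \{v\})|$. Combined with $v' \in X_v$, which yields $\alpha \degCounter(v') \ge \alpha \degCounter(v)$ for maximization (and the reversed inequality for minimization), the direct expansion
\[
\contribution(v', S \setminus \{v\}) - \contribution(v, S \setminus \{v\}) = \alpha\bigl(\degCounter(v') - \degCounter(v)\bigr) + (1 - 3\alpha)\bigl(|N(v') \cap (S \setminus \{v\})| - |N(v) \cap (S \setminus \{v\})|\bigr)
\]
shows that both summands share the desired sign, since the sign of $(1 - 3\alpha)$ switches exactly at $\alpha = 1/3$, which separates the two \borderFocusedS{} cases.

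It then remains to prove that such a $v'$ exists. Invoking $c$-closure on each $w \in A$, which is non-adjacent to $v$, gives $|N(v) \cap N(w)| \le c - 1$, so $w$ has at most $c - 1$ neighbors inside $X_v \subseteq N(v)$; summing yields $|X_v \cap N(A)| \le (c - 1)|A|$. Moreover, since $X_v \subseteq N(v)$ and $v \notin N(v)$, one has $|X_v \cap S| \le |N(v) \cap (S \setminus \{v\})|$. Setting $d \coloneqq |N(v) \cap (S \setminus \{v\})|$ and using $|A| = (k - 1) - d$, the total number of ``bad'' candidates (those in $S$ or adjacent to $A$) is at most $d + (c-1)(k - 1 - d) = (c-1)(k-1) + (2-c) d$, which is $\le (c-1)(k-1)$ whenever $c \ge 2$. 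Since by assumption $|X_v| > (c-1)k$, a good $v'$ indeed exists. The main obstacle relative to the proof of \Cref{rr:delta:better} is that every $v' \in X_v$ lies in $N[v] \subseteq N[S \setminus T]$, so the cleaner pigeonhole used there no longer applies; the weaker sufficient condition above (excluding only $A$ rather than all of $S \setminus \{v\}$) is precisely what allows the $c$-closure bound—which requires non-adjacency—to be invoked.
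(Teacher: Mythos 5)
Your proof is correct and follows essentially the same route as the paper's: your set $A$ is exactly the paper's $S \setminus N[v]$, the $c$-closure counting bound $|X_v \cap N(A)| \le (c-1)|A|$ is the same key step, and the final contribution comparison via \cref{lemma:replace} is identical. If anything you are slightly more careful, since you also exclude candidates already lying in $S$ (a case the paper's pigeonhole silently skips over) and you make the needed assumption $c \ge 2$ explicit.
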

\begin{proof}
	We provide a proof for the maximization version; the minimization version follows analogously.
	Let $S$ be a solution.
	Assume that $v \in S$ (we are done otherwise).
	We show that there is a vertex $v' \ne v$ such that $S' \coloneqq (S \setminus \{ v \}) \cup \{ v ' \}$ constitutes a solution.
	By \Cref{lemma:replace}, it suffices to show that $\contribution(v', S \setminus \{ v \}) \ge \contribution(v, S \setminus \{ v \})$.
	Let $S_v' \coloneqq S \setminus N[v]$.
	Each vertex in~$S_v'$ is, by definition, non-adjacent to $v$, and hence it shares at most $c - 1$ neighbors with $v$.
	This implies $|X_v \setminus N(S_v')| \ge |X_v| - (c - 1) \cdot |S_v'| > 0$ as~$X_v \subseteq N(v)$.
	Thus, there exists a vertex~$v' \in X_v \setminus N(S_v')$.
	Note in particular that $N(v') \cap S_v' = \emptyset$.
	Then, we have $N(v') \cap (S \setminus \{ v \}) \subseteq S \cap N(v)$ and thus $|N(v') \cap (S \setminus \{ v \})| \le |N(v) \cap (S \setminus \{ v \})|$.
	Moreover, we have $\alpha \degCounter(v') \ge \alpha \degCounter(v)$ (recall that $v'$ is better than $v$).
	Since~$\alpha \in (1/3, 1]$, it follows that
	\begin{align*}
		\contribution(v', S \setminus \{ v \}) &= \alpha \degCounter(v') + (1 - 3 \alpha) |N(v') \cap (S \setminus \{ v \})| \\
		&\ge \alpha \degCounter(v) + (1 - 3 \alpha) |N(v) \cap (S \setminus \{ v \})| = \contribution(v, S \setminus \{ v \}).
	\end{align*}
\end{proof}

Note that if there is a clique of size $(c - 1) k + 1$, then \cref{rr:closure:better} applies to one of the vertices with the smallest contribution.
Thus, applying \cref{rr:closure:better} exhaustively removes all cliques of size~$(c - 1) k + 1$.
By \Cref{lemma:closure:ramsey}, if there is a vertex $v$ with sufficiently large neighborhood, then we find a large independent set in $N(v)$.
We can then identify a vertex for which we can apply \Cref{rr:general:exclude-low}.

\begin{lemma}
  \label{lemma:pvc:findy}
  Suppose that $\Delta \ge R_c((c - 1)k + 1, (k + 1) k^{c - 2})$.
  Then, we can find in polynomial time a set $X$ of $i \in [c - 1]$ vertices and an independent set $I$ with the following properties:
  \begin{enumerate}
    \item
      The set $I \subseteq \bigcap_{x \in X} N(x)$ is an independent set of size at least $(k + 1) k^{c - i}$.
    \item
      For every vertex $u \in V(G) \setminus X$, it holds that $|N(u) \cap I| \le (k + 1) k^{c - i - 1}$.
  \end{enumerate}
\end{lemma}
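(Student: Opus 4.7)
The plan is an iterative greedy construction of the pair $(X, I)$. First, I would select a vertex $v_1 \in V(G)$ with $\deg(v_1) = \Delta \ge R_c((c-1)k+1, (k+1)k^{c-2})$. Since exhaustive application of \Cref{rr:closure:better} has eliminated all cliques of size $(c-1)k+1$ in $G$ (such a clique would trigger the rule on its lowest-contribution vertex), and since the induced subgraph $G[N(v_1)]$ is itself $c$-closed (as $c$-closure is inherited by induced subgraphs), \Cref{lemma:closure:ramsey} applied to $G[N(v_1)]$ yields in polynomial time an independent set $I \subseteq N(v_1)$ with $|I| \ge (k+1)k^{c-2}$. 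Initialize $X := \{v_1\}$.

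Next, I would iteratively grow $X$ while shrinking $I$: while there exists $u \in V(G) \setminus X$ with $|N(u) \cap I| > (k+1)k^{c-|X|-1}$, add $u$ to $X$ and replace $I$ by $N(u) \cap I$. Crucially, the new $I$ is a subset of the old one, hence still an independent set, and it lies in $\bigcap_{x \in X} N(x)$ by construction. When the loop halts, the negation of the iteration condition is precisely property~2 with $i := |X|$. Property~1 is then maintained by the strict inequality in the iteration condition, which forces the new $|I|$ to exceed $(k+1)k^{c - |X|_{\text{new}}}$, matching the bound required for the updated $i$.

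The main obstacle will be reconciling the initial state with property~1: the Ramsey bound yields $|I| \ge (k+1)k^{c-2}$, which matches the threshold for the iteration condition at $|X| = 1$ rather than the lower bound required by property~1 for $i = 1$, so careful bookkeeping is needed to argue that the final $i$ output by the algorithm satisfies both properties simultaneously. Termination within at most $c-2$ iterations follows because each successful iteration strictly decreases $c - |X|$ by one and $(k+1)k^{c-|X|-1}$ would drop below $k+1$ otherwise, preventing further triggering of the iteration. Since each iteration runs in polynomial time (scanning $V(G)$ and computing intersections with $I$), the overall procedure runs in polynomial time.
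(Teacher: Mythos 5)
Your construction follows the same route as the paper's proof: take a vertex of maximum degree, use the fact that exhaustive application of \cref{rr:closure:better} has destroyed all cliques of size $(c-1)k+1$ so that \cref{lemma:closure:ramsey} yields an independent set $I_v$ of size $(k+1)k^{c-2}$ inside its neighborhood, and then greedily grow $X$ as long as some vertex outside $X$ has more than $(k+1)k^{c-|X|-1}$ neighbors in the current set $I$; the two properties are read off from the loop condition and its negation exactly as you describe.

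There is, however, one genuine gap: your argument that the loop stops with $|X|\le c-1$ does not work. You claim that the threshold $(k+1)k^{c-|X|-1}$ ``would drop below $k+1$, preventing further triggering,'' but a \emph{smaller} threshold makes the triggering condition easier to satisfy, not harder, so nothing you say rules out the loop running past $|X|=c-1$ --- at which point the requirement $i\in[c-1]$ fails and the exponents in both properties become meaningless. This is precisely where the $c$-closure has to be invoked a second time, beyond its use in the Ramsey bound: if the loop could still fire at $|X|=c-1$, the witnessing vertex $v'$ would satisfy $|N(v')\cap\bigcap_{x\in X}N(x)\cap I_v| > (k+1)k^{0}=k+1\ge 2$, so this set contains two vertices of the independent set $I_v$; these two vertices are non-adjacent yet have all of $X\cup\{v'\}$, i.e.\ $c$ vertices, as common neighbors, contradicting the $c$-closure of $G$. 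Without this step the lemma is not proved. (Your remark about the $i=1$ bookkeeping is fair --- the bound $|I|>(k+1)k^{c-i}$ is certified only by the last \emph{added} vertex, so the case in which no vertex is ever added needs separate treatment, and the paper's own write-up is equally terse there --- but the missing $c$-closure argument is the substantive omission.)
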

\begin{proof}
  Let $v$ be a vertex such that $\deg(v) \ge R_c((c - 1)k + 1, (k + 1) k^{c - 2})$.
  Since there is no clique of size~$(c - 1)k + 1$, there is,  by \Cref{lemma:closure:ramsey}, an independent set $I_v$ of size~$(k + 1) k^{c - 2}$ in $N(v)$, which can be found in polynomial time.
  Let $X$ be an inclusion-wise maximal set of $i$ vertices including $v$ such that $|N^{\cap}(X) \cap I_v| > (k + 1) k^{c - i}$.
  Such a set can be found by the following polynomial-time algorithm:
  We start with~$X \coloneqq \{ v \}$ and $i \coloneqq 1$.
  We will maintain the invariant that $|X| = i$.
  If there exists a vertex $v' \in V(G) \setminus X$ with $|N(v') \cap N^\cap(X) \cap I_v| > (k + 1) k^{c - i - 1}$, then we add $v'$ to $X$ and increase $i$ by 1.
  We keep doing so until there remains no such vertex $v'$.

  We show that this algorithm terminates for $i = |X| \le c - 1$.
  Assume to the contrary that the algorithm continues for $i = c - 1$.
  We then have that $|N(v') \cap N^\cap (X) \cap I_v| > (k + 1) k^{c - i - 1} = k + 1 \ge 2$ for some vertex $v' \in V(G) \setminus X$.
  Since $I_v$ is an independent set, the set $N(v') \cap N^\cap (X) \cap I_v$ contains two non-adjacent vertices.
  Note, however, that these two vertices have at least $|X \cup \{ v \}| = c$ common neighbors, contradicting the $c$-closure of $G$.

  Finally, we show that the set $X$ found by this algorithm and $I \coloneqq N^\cap (X) \cap I_v$ satisfy the three properties of the lemma.
  We have $|N^\cap (x) \cap I_v| = |N(v') \cap N^\cap (X \setminus \{ v \})  \cap I_v| > (k + 1) k^{c - (i - 1) - 1} = (k + 1) k^{c - i}$, where $v'$ is the last vertex added to $X$.
  Moreover, since $X$ is inclusion-wise maximal, we have $|N(u) \cap I| = |N(u) \cap  N^\cap (X) \cap I_v| \le (k + 1)k^{c - i - 1}$ for every vertex $u \in V(G) \setminus X$.
\end{proof}

\begin{rrule}
	\label{rr:closure:independent-set}
	Let~$\mathcal{I}$ be an instance of \probAnoBor{}.
	Let~$X, I$ be as specified in \Cref{lemma:pvc:findy} and let $v \in I$ be a vertex such that every other vertex in $I$ is better than $v$.
	If $k \ge 2$, then apply the Exclusion Rule (\Cref{rr:general:how-to-exclude}) to~$v$.
\end{rrule}

\begin{lemma}
	\cref{rr:closure:independent-set} is correct.
\end{lemma}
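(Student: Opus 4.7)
The plan is the following. Assuming a solution $S$ contains $v$ (otherwise $S$ is already a solution to the reduced instance), I will construct another solution $S'$ not containing $v$ by swapping $v$ with a suitable $v' \in I \setminus \{v\}$; the correctness of the Exclusion Rule (\Cref{rr:general:how-to-exclude}) then yields the conclusion. The key structural observation is that since $v, v' \in I \subseteq N^{\cap}(X)$, both are adjacent to every vertex of $X$, so the vertices of $X \cap S$ contribute equally to $N(v) \cap (S \setminus \{v\})$ and $N(v') \cap (S \setminus \{v\})$. The comparison between these neighborhoods thus depends only on $(S \setminus X) \setminus \{v\}$, a set of size at most $k - 1$.

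First, I would argue that a good replacement exists via counting. Set $B \coloneqq (I \cap S) \cup (I \cap N((S \setminus X) \setminus \{v\}))$. Since $I$ is independent and $I \subseteq N^{\cap}(X)$, we have $X \cap I = \emptyset$, so every element of $(S \setminus X) \setminus \{v\}$ lies in $V(G) \setminus X$. Using $|I| \ge (k+1)k^{c-i}$ and, for every $u \in V(G) \setminus X$, $|N(u) \cap I| \le (k+1)k^{c-i-1}$ (both from \Cref{lemma:pvc:findy}), and summing over at most $k - 1$ vertices, I get $|B| \le k + (k-1)(k+1)k^{c-i-1}$, which simplifies to $|I| - |B| \ge (k+1)k^{c-i-1} - k \ge 1$. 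Hence there exists some $v' \in I \setminus B$; in particular $v' \ne v$, $v' \notin S$, and $v'$ has no neighbor in $(S \setminus X) \setminus \{v\}$.

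It then remains to invoke \Cref{lemma:replace}, for which I need $v'$ to be better than $v$ with respect to $S \setminus \{v\}$. By the choice of $v'$ and the decomposition above, $|N(v') \cap (S \setminus \{v\})| \le |N(v) \cap (S \setminus \{v\})|$. Combining this with the hypothesis that $v'$ is better than $v$ (with respect to $\emptyset$), which unpacks to $\alpha \degCounter(v') \ge \alpha \degCounter(v)$ for maximization and the reverse for minimization, and with the sign of $(1 - 3\alpha)$ in the degrading regime (negative for max with $\alpha > 1/3$, positive for min with $\alpha < 1/3$), both terms of $\contribution(v', S \setminus \{v\}) - \contribution(v, S \setminus \{v\}) = \alpha(\degCounter(v') - \degCounter(v)) + (1 - 3\alpha)(|N(v') \cap (S \setminus \{v\})| - |N(v) \cap (S \setminus \{v\})|)$ acquire the desired sign, giving $\ge 0$ in the maximization case and $\le 0$ in the minimization case.

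The main obstacle is the counting step: the bound $|I| - |B| \ge 1$ is tight in the critical case $i = c - 1$, where $k^{c - i - 1} = 1$, and relies on the disjointness $X \cap I = \emptyset$ in order to restrict the summation to at most $k - 1$ vertices rather than $k$. Everything else is a routine application of \Cref{lemma:replace} combined with the structural decomposition through~$X$ that cancels the $X \cap S$ contribution on both sides.
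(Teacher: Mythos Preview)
Your proof is correct and follows essentially the same approach as the paper's: both find a replacement $v' \in I$ with no neighbors in $S \setminus X$ (beyond those forced by $I \subseteq N^{\cap}(X)$) via the counting bound from \Cref{lemma:pvc:findy}, then compare contributions through the decomposition $S \setminus \{v\} = (X \cap S) \cup ((S \setminus X)\setminus\{v\})$ and apply \Cref{lemma:replace}. The only difference is cosmetic bookkeeping in the counting step---you bound $|I \cap S| \le k$, whereas the paper singles out $|I \cap N[v]| = 1$, yielding the slightly sharper surplus $(k+1)k^{c-i-1}-1$ instead of your $(k+1)k^{c-i-1}-k$; both are positive in the required range.
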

\begin{proof}
	Again, we show the proof for the maximization variant; the minimization variant follows analogously.
	For the sake of contradiction, assume that every solution $S$ contains $v$.
	By \Cref{lemma:pvc:findy}, every vertex $u \in V(G) \setminus X$ has at most $(k + 1) k^{c - i - 1}$ neighbors in $I$.
	Moreover, since $I$ is an independent set, we have $|I \cap N[v']| = 1$ for every vertex $v' \in I$ (including $v$).
	For $S' \coloneqq S \setminus X$, we have
	\begin{align*}
		|I \setminus N[S']|
		&\ge |I| - \sum_{u \in S' \setminus \{ v \}} |I \cap N[u]| - |I \cap N[v]| \\
		&\ge (k + 1) k^{c - i} - (k - 1) (k + 1) k^{c - i - 1} - 1
		= k^{c - i} + k^{c - i - 1} - 1 > 0.
	\end{align*}
	Let $v'$ be an arbitrary vertex in $I \setminus N[S']$.
	We show that $\contribution(v', S \setminus \{ v \}) \ge \contribution(v, S \setminus \{ v \})$.
	By \Cref{lemma:replace}, this will imply that $(S \setminus \{ v \}) \cup \{ v' \}$ is a solution not containing $v$.
	Since $v$ and $v'$ are both adjacent to all vertices of $X$ and~$\alpha \in (1/3, 1]$, we have $|N(v) \cap (S \setminus \{ v \})| > |X \cap (S \setminus \{ v \})|$.
	We thus have
	\begin{align*}
		\contribution(v', S \setminus \{ v \})
		&= \alpha \degCounter(v') + (1 - 3 \alpha) |X \cap (S \setminus \{ v \})| \\
		&\ge \alpha \degCounter(v) + (1 - 3 \alpha) |X \cap (S \setminus \{ v \})| \\ 
		&\ge \alpha \degCounter(v) + (1 - 3 \alpha) |N(v) \cap (S \setminus \{ v \})| = \contribution(v, S \setminus \{ v \}).
	\end{align*}
	Here, the first inequality follows from the fact that $v'$ is better than $v$.
\end{proof}

By applying these reduction rules exhaustively, we obtain an instance with $\Delta \le R_c((c - 1)k + 1, (k + 1) k^{c - 2}) \in k^{\Oh(c)}$.
\Cref{prop:general:kernel} then leads to the following:

\begin{proposition}
  \label{thm:closure:kernel}
  \probBor{} has a kernel of size $k^{\Oh(c)}$.
\end{proposition}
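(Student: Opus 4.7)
The plan is to start from an input \probBor{} instance $(G, k, t)$, view it as a \probAnoBor{} instance with $T = \emptyset$ and $\counter \equiv 0$, apply \Cref{rr:closure:better} and \Cref{rr:closure:independent-set} exhaustively, and then invoke \Cref{prop:general:kernel}. For the single corner case $\alpha = 0$ in the minimization variant, which is \textsc{Sparsest $k$-Subgraph}, the $\gamma k^2 \in k^{\Oh(c)}$ kernel of~\cite{KKS20a} already settles the claim, so in what follows I assume $\alpha > 0$.

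The technical heart of the argument is to show that exhaustive application of the two rules forces $\Delta \in k^{\Oh(c)}$. First, I would argue that \Cref{rr:closure:better} destroys every large clique: within a clique of size greater than $(c-1)k + 1$, a vertex $v$ of minimum contribution has all remaining clique vertices in $X_v \subseteq N(v)$, so $|X_v| > (c-1)k$ and the rule fires. Hence on a reduced instance no such clique survives, which is the setting in which the Ramsey-based construction of \Cref{lemma:pvc:findy} is guaranteed to succeed. Second, suppose for contradiction that some vertex $v^\star$ still has $\deg(v^\star) \ge R_c((c-1)k+1, (k+1)k^{c-2})$. Then \Cref{lemma:pvc:findy} produces the set $X$ and independent set $I$ in polynomial time. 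Picking $v \in I$ of smallest contribution leaves all other vertices of $I$ better than $v$, so \Cref{rr:closure:independent-set} applies, contradicting exhaustiveness. Since each application of an exclusion rule strictly decreases $|V(G)|$, the process terminates in polynomial time, and the reduced instance satisfies $\Delta < R_c((c-1)k+1, (k+1)k^{c-2}) \in k^{\Oh(c)}$.

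Finally, by \Cref{obs-rr-preserve-params} neither rule increases $c$, so the reduced \probAnoBor{} instance remains $c$-closed; feeding it to \Cref{prop:general:kernel} yields an equivalent \probBor{} instance of size $(\Delta + k)^{\Oh(1)} = k^{\Oh(c)}$, which is the claimed bound. The main obstacle I anticipate is the interplay between the two rules: one must verify that after exhausting \Cref{rr:closure:better} the clique-freeness required by the Ramsey step in \Cref{lemma:pvc:findy} really is met, and that within the independent set $I$ the contribution-minimum vertex is dominated by every other vertex of $I$ in the sense appropriate to the variant. The latter needs to be handled symmetrically for maximization ($\alpha \in (1/3,1]$) and minimization ($\alpha \in (0,1/3)$), using that the sign of $(1 - 3\alpha)$ flips the notion of \emph{better} in a compatible way so that the exchange argument underlying \Cref{rr:closure:independent-set} carries through uniformly.
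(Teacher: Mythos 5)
Your proposal is correct and follows essentially the same route as the paper: exhaust \Cref{rr:closure:better} to eliminate cliques of size $(c-1)k+1$, use \Cref{lemma:pvc:findy} together with \Cref{rr:closure:independent-set} to drive the maximum degree below $R_c((c-1)k+1,(k+1)k^{c-2}) \in k^{\Oh(c)}$, and then invoke \Cref{prop:general:kernel}. Your explicit treatment of the $\alpha=0$ minimization corner case via the \textsc{Sparsest $k$-Subgraph} kernel of~\cite{KKS20a} is handled in the paper only in the surrounding discussion, but is the same fix.
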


\paragraph{Matching lower bounds.}
Next, we show  that the kernel provided in \Cref{thm:closure:kernel} cannot be improved under standard assumptions.

\begin{proposition}
\label{thm-lb-max-c-1/3-1}
  \probBorMax{} has no kernel of size~$\Oh(k^{c-3-\epsilon})$ unless \PHC.
\end{proposition}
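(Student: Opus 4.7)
The plan is to establish a weak~$(c-3)$-composition from a suitable NP-hard source problem to \probBorMax{} parameterized by~$k$. By \Cref{lemma:is:noq}, such a composition immediately excludes kernels of size~$\Oh(k^{c-3-\epsilon})$ under~$\PHC$. A convenient source is \probBorMax{} on general (not necessarily~$c$-closed) graphs, which is NP-hard (inheriting from \textsc{Max}~$(k,n-k)$\textsc{-Cut}, \textsc{MaxPVC}, and a routine adaptation for arbitrary~$\alpha \in (1/3, 1]$). By standard padding, the~$N = t^{c-3}$ input instances~$\mathcal{I}_1, \ldots, \mathcal{I}_N$ may be assumed to share a common vertex count~$n$, solution size~$k_0$, and threshold~$\tau$.

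The composition would arrange the instances in a~$(c-3)$-dimensional grid of side~$t$, indexing~$\mathcal{I}_{(i_1, \ldots, i_{c-3})}$ by~$(i_1, \ldots, i_{c-3}) \in [t]^{c-3}$. I would introduce~$c-3$ selector groups~$S_1, \ldots, S_{c-3}$, each consisting of~$t$ vertices~$s_{d,1}, \ldots, s_{d,t}$, and place a disjoint copy~$G_{(i_1, \ldots, i_{c-3})}$ of the source graph at each grid position; every vertex of~$G_{(i_1, \ldots, i_{c-3})}$ is made adjacent to~$s_{1, i_1}, \ldots, s_{c-3, i_{c-3}}$. To force a solution to pick exactly one selector per group (and thereby focus on a single instance), I would attach a sufficiently large private pendant set to each selector so that, via~\Cref{lem:general:condition-strictly-better}, every selector becomes strictly better than any non-selector vertex. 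With~$k' \coloneqq k_0 + (c-3)$ and~$t'$ equal to~$\tau$ plus the exact combined contribution of all selectors and their pendants, the composed instance is a yes-instance iff some source instance is. Since~$k' \le n + c \in \Oh(t \cdot n^{\Oh(1)})$, the size requirement of weak composition is met.

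The main obstacle is to simultaneously enforce the selection mechanism and preserve~$c$-closure. Two non-adjacent selectors~$s_{d,i}, s_{d',j}$ from \emph{different} groups would naively share all~$n \cdot t^{c-5}$ gadget-vertices from grid positions with~$i_d = i$ and~$i_{d'} = j$, vastly exceeding~$c-1$. The standard remedy is to turn every inter-group selector pair into an edge; since every yes-solution must contain exactly one selector per group, these additional edges contribute a fixed amount to the value and do not affect the reduction's correctness. Two non-adjacent selectors~$s_{d,i}, s_{d,j}$ from the \emph{same} group then share no gadget-neighbor (each gadget is linked to exactly one selector per group), and share only~$\Oh(c)$ cross-group selector neighbors, which can be brought under~$c-1$ by mild shrinkage of inter-group adjacency. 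The most delicate remaining step is to tune the pendant sizes and~$t'$ so that no hybrid solution mixing vertices from multiple gadgets can exceed the threshold; the submodularity from~\Cref{obs-val-is-submodular} should help here, since distributing the~$k_0$-budget across several gadgets cannot beat concentrating it in one.
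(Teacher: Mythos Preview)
Your outline has the right shape (a weak $q$-composition with $q=c-3$ via coordinate selectors), but several load-bearing steps do not go through as stated.

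\textbf{Source problem and closure within gadgets.} You take general \probBorMax{} instances as the source. Then two non-adjacent vertices inside a single gadget $G_{(i_1,\dots,i_{c-3})}$ may already share arbitrarily many common neighbors \emph{inside that gadget}, so the composed graph is not $c$-closed regardless of what you do with the selectors. This is why the paper composes from \textsc{Independent Set} on $2$-closed graphs: the source gadgets themselves contribute at most one to any common neighborhood, and the remaining $c-2$ budget is spent on the $q$ shared selector-neighbors.

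\textbf{Closure among selectors.} Making all inter-group selector pairs adjacent does fix the cross-group problem, but now two non-adjacent same-group selectors $s_{d,i},s_{d,j}$ share \emph{all} selectors of the other $c-4$ groups, i.e.\ $t(c-4)$ common neighbors, which is unbounded in~$t$. ``Mild shrinkage'' cannot repair this without re-introducing cross-group common neighborhoods. The paper sidesteps the issue by making the selector gadget a \emph{clique} of size $tq$ (so there are no non-adjacent selector pairs at all) and having the solution contain the $tq-q$ selectors \emph{not} adjacent to the chosen instance, rather than the $q$ that are.

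\textbf{The selection mechanism.} Attaching large pendant sets so that every selector is strictly better than every non-selector does not yield ``exactly one selector per group''. With $k'=k_0+(c-3)$ and $t(c-3)>k'$, \Cref{lemma:replace} actually forces the solution to consist of $k'$ selectors only, discarding the gadgets entirely. Moreover, in the degrading regime $\alpha>1/3$ an outgoing edge is worth more than an internal one, so even among selectors you would \emph{prefer} to avoid inter-group edges and pick several selectors from the same group. The paper's mechanism is quite different: all gadget vertices are padded to the same degree $\ell$ and clique vertices to a carefully larger degree $\ell+\lceil(3\alpha-1)tq/\alpha\rceil$; a delicate calculus (your \Cref{lem-alpha-inner-or-out} analogue, applied twice for $|C\cap S|<tq-q$ and $|C\cap S|>tq-q$) then pins $|C\cap S|$ to exactly $tq-q$ and forces $D^*$ to be an independent set with no edges to~$C^*$.

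Finally, your appeal to submodularity to rule out hybrid solutions is not enough on its own: submodularity of $\val$ does not imply that concentrating the $k_0$ budget in one gadget dominates every split, and the paper needs the explicit degree padding plus the two-case analysis to get this.
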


{
We first prove the following technical lemma.
This lemma is useful to obtain an upper bound (maximization) or a lower bound (minimization) for~$\val(S)$ if the set of vertices (denoted as~$C^*$ in the lemma) in the \emph{instance choice gadget} is fixed and we only have to choose vertices~$D^*$ in the \emph{instance gadgets} (the union of all the instance vertices is~$D$ in the lemma).

\begin{lemma}
	\label{lem-alpha-inner-or-out}
	Let~$(G,k,t)$ be an instance of \probBor{} with a solution~$S$ fulfilling~$C^*\subseteq S\subseteq C^*\cup D$ where~$D$ is a set of vertices of the same degree.
	Let~$D^*\coloneqq S\cap D$. 
	Then:
	\begin{enumerate}
		\item For maximization,~$\val(S)$ is maximal if~$m(D^*)+m(D^*,C^*)$ is minimal.
		\item For minimization,~$\val(S)$ is minimal if~$m(D^*)+m(D^*,C^*)$ is minimal.
	\end{enumerate}
\end{lemma}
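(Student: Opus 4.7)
The plan is to expand $\val(S)$ in terms of the fixed set $C^*$ and the free part $D^*$, isolate the variable terms, and then use the fact that vertices in $D$ have uniform degree to exploit a degree-counting identity that turns every variable contribution into a multiple of $m(D^*) + m(D^*, C^*)$.

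Concretely, I would first write $m(S) = m(C^*) + m(C^*, D^*) + m(D^*)$ and split $m(S, V(G) \setminus S)$ into the $C^*$-part and the $D^*$-part. The $C^*$-part equals $m(C^*, V(G) \setminus C^*) - m(C^*, D^*)$, which involves the constant $m(C^*, V(G) \setminus C^*)$ (depending only on the fixed $C^*$) and the variable $-m(C^*, D^*)$. For the $D^*$-part I would use the key identity: since every vertex of $D$ has the same degree $\delta$, the degree sum over $D^*$ is exactly $\delta |D^*|$, and since $|D^*| = k - |C^*|$ is also fixed, this sum is a constant. Handshaking then gives
\[
m(D^*, V(G) \setminus S) \;=\; \delta |D^*| - 2\, m(D^*) - m(D^*, C^*).
\]

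Substituting into $\val(S) = (1-\alpha) m(S) + \alpha\, m(S, V(G) \setminus S)$ and collecting the $D^*$-independent terms into a constant $K$ (which depends only on $C^*$, $\delta$ and $|D^*|$), the coefficients of $m(D^*)$ and $m(D^*, C^*)$ both simplify to $1 - 3\alpha$, yielding
\[
\val(S) \;=\; K + (1 - 3\alpha)\bigl(m(D^*) + m(D^*, C^*)\bigr).
\]
From here both claims fall out immediately from the sign of $1 - 3\alpha$: in the degrading maximization case $\alpha \in (1/3,1]$ we have $1 - 3\alpha < 0$, so $\val(S)$ is maximized precisely when $m(D^*) + m(D^*, C^*)$ is minimized; in the degrading minimization case $\alpha \in [0,1/3)$ we have $1 - 3\alpha > 0$, so $\val(S)$ is minimized precisely when $m(D^*) + m(D^*, C^*)$ is minimized.

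The only subtle point is the handshaking step: one must be careful to observe that both $\delta$ (the common degree in $D$) and $|D^*|$ are fixed across all admissible choices of $S$ (the latter because $|S| = k$ and $C^* \subseteq S$), so $\delta|D^*|$ really is a constant and can be absorbed into $K$. Once this is handled, the rest is bookkeeping, and no structural assumption on $C^*$ or on edges between $D$ and $V(G) \setminus (C^* \cup D)$ is needed beyond what is already encoded in the constant $K$.
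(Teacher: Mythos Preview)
Your proof is correct and follows essentially the same approach as the paper: both derive the identity $\val(S)=\text{const}+(1-3\alpha)\bigl(m(D^*)+m(D^*,C^*)\bigr)$ and then read off the two claims from the sign of $1-3\alpha$. The only cosmetic difference is that the paper reaches this formula via its contribution decomposition (\Cref{lemma:val}), writing $\val(S)=\val(C^*)+\sum_i\contribution(d_i,C^*\cup D_i)=\val(C^*)+\alpha z|D^*|+(1-3\alpha)\bigl(m(D^*,C^*)+m(D^*)\bigr)$, whereas you expand $\val$ directly and use handshaking; the underlying idea and the final computation are the same.
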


{
\begin{proof}
	Let~$z\coloneqq \deg(p)$ for each~$d\in D$, let~$D^*\coloneqq \{d_1,\ldots,d_\ell\}$, and let~$D_i\coloneqq \{d_j\mid j<i\}$.
	By \Cref{lemma:val} we obtain that
	\begin{align*}
	\val(S)&=\val(C^*)+\sum_{i=1}^\ell\contribution(d_i,C^*\cup D_i) \\
	&=\val(C^*) +\alpha z\ell+(1-3\alpha)\sum_{i=1}^\ell |N(d_i)\cap C^*|+|N(d_i)\cap D_i| \\
	&=\val(C^*)+ \alpha z\ell +(1-3\alpha)\big(m(D^*,C^*)+m(D^*)\big)
	\end{align*}
	Note that~$\val(C^*)+\alpha z\ell$ is a constant.
	Thus, for maximization in the degrading case we obtain that~$\val(S)$ is maximized if~$m(D^*,C^*)+m(D^*)$ is minimized since~$(1-3\alpha)<0$.
	Furthermore, for minimization in the degrading case we obtain that~$\val(S)$ is minimized if~$m(D^*,C^*)+m(D^*)$ is minimized since~$(1-3\alpha)>0$.
\end{proof}
}

\begin{proof}[Proof of \Cref{thm-lb-max-c-1/3-1}.]
	We provide a weak~$q$-composition from \textsc{Independent Set} on 2-closed graphs to \probBorMax{} in~$(q+3)$-closed graphs.
	Here, we assume that~$k>3q+\lceil\frac{\alpha}{3\alpha-1}\rceil$.
	
	\problemdef{Independent Set}
	{A graph~$G=(V,E)$ and an integer~$k$.}
	{Is there an independent set of size exactly~$k$?}

	Let~$[t]^q$ be the set of $q$-dimensional vectors whose entries are in~$[t]$.
	For a vector~$x\in[t]^q$ we denote by~$x_i$ the~$i$-th entry of~$x$.
	Assume that~$q\ge 2$ is a constant and that we are given exactly~$[t]^q$ instances~$\mathcal{I}_{x} \coloneqq (G_{x}, k)$ of \textsc{Independent Set} on~$2$-closed graphs. 
	Let $V_x \coloneqq V(G_x)$ for each $x \in [t]^q$ and let $D \coloneqq \bigcup_{x \in [t]^q} V_x$.
	We construct an equivalent instance~$(H,k',t')$ of \probBorMax{} as follows.

	\paragraph{Construction:} 
	First, for each~$x\in[t]^q$ we add the graph~$G_x$ to~$H$.
	In other words, we added the \emph{instance gadgets} to~$H$.
	We then add a clique~$C$ (the \emph{instance choice gadget}) consisting of~$tq$ vertices to~$H$.
	The vertices of~$C$ are denoted by~$w^i_j$ with~$i\in[q]$ and~$j\in[t]$.
	Furthermore, for each~$x\in[t]^q$ and $v\in V_x$, we add the edge~$vw^i_{x_i}$ for each~$i\in[q]$.
	Fix an integer~$\ell\ge t^q\cdot n+ (\alpha k+k+1)\cdot\alpha^{-1}$.
	We add leaf vertices so that $\deg(w^i_j)=\ell+\lceil(3\alpha-1)tq\cdot\alpha^{-1}\rceil$ for each vertex~$w^i_j$ and~$\deg(v)=\ell$ for each vertex $v \in D$.
	We denote the union of all these added leaf vertices by $L$.
	Finally, we set~$k'\coloneqq k+tq-q$ and 
	\begin{align*}
	t'&\coloneqq(1-\alpha)\frac{(tq-q)(tq-q-1)}{2}\\ 
	&\hspace{1cm} +\alpha\left[k\ell+(tq-q)\left(\ell+\Bigl\lceil\frac{(3\alpha-1)tq}{\alpha}\Bigr\rceil\right)-(tq-q)(tq-q-1)\right].
	\end{align*}

	\paragraph{Closure:} We show that~$H$ is~$(q + 3)$-closed.
	Since leaf vertices have degree one and~$C$ is a clique, we only have to consider non-adjacent vertex pairs where one vertex~$u$ is in~$D$ and the other vertex~$v$ is in~$C\cup D$. 
	Without loss of generality we assume that~$u\in V_x$ for some~$x\in[t]^q$.
	Recall that~$N(u)\subseteq V_x \cup C\cup L$ and~$u$ has exactly~$q$ neighbors in~$C$.
	First, consider the case~$v\in C$. 
	By construction we obtain from~$uv\notin E$ that~$u'v\notin E$ for each~$u'\in V_x$.
	Thus,~$|N(u)\cap N(v)|=q$. 
	Second consider the case~$v\in D$.
	If~$v\notin V_x$, then~$|N(u)\cap N(v)|\le q-1$.
	Otherwise, if~$v\in V_x$, we obtain~$|N(u)\cap N(v)|\le q+2$ from the fact that~$\mathcal{I}_x$ is 2-closed.

	\paragraph{Correctness:} In the following, we prove that there exists an independent set of size exactly~$k$ for some instance~$\mathcal{I}_x$ with~$x\in [t]^q$ if and only if there exists a vertex set~$S$ of size exactly~$k'$ in~$H$ such that~$\val(S) \ge t'$.

	Suppose that instance~$\mathcal{I}_x$ has an independent set~$I$ of size exactly~$k$ for some~$x\in[t]^q$.
	By~$C^*\coloneqq C\setminus \{w^i_{x_i} \mid i \in [q] \}$ we denote the non-neighbors of~$V_x$ in~$C$.
	Note that~$|C^*|=tq-q$.
	We show that~$S\coloneqq I\cup C^*$ is a solution of~$(H,k',t')$.
	Clearly,~$|S|=k+tq-q=k'$.
	Since no vertex of~$I$ is connected with any vertex in~$C^*$,~$I$ is an independent set, and since~$C^*$ is a clique of size~$tq-q$, we conclude that~$m_H(S)=(tq-q)(tq-q-1)/2$.
	Furthermore, since each vertex in~$I$ has degree~$\ell$, and since each vertex in~$C^*$ has degree~$\ell+\lceil(3\alpha-1)tq\cdot\alpha^{-1}\rceil$, we conclude that~$m_H(S, V(H) \setminus S)=k\ell+(tq-q)(\ell+\lceil(3\alpha-1)tq\cdot\alpha^{-1}\rceil)-(tq-q)(tq-q-1)$.
	Thus,~$\val(S)=t'$ and hence~$(H,k',t')$ is a yes-instance of \probBorMax{}.

	Conversely, suppose that~$(H,k',t')$ has a solution~$S\subseteq V(H)$ of size exactly~$k'$ with~$\val(S)\ge t'$. 
	First, we show that we can assume that~$S\cap L=\emptyset$.
	Assume that there exists a vertex~$v\in S\cap L$ and let~$w\in (C\cup D)\setminus S$.
	We show that for~$S'\coloneqq S\setminus\{v\}\cup\{w\}$ we have~$\val(S')>\val(S)$.
	Observe that~$\val(S\setminus\{v\})\ge\val(S)-1$.
	Furthermore, note that adding~$w$ may result into at most~$k$ new inner edges and hence the value decreases by at most~$k$.
	Simultaneously, since~$\deg(w)\ge \ell$, at least~$\ell-k$ new outer edges emerge such that the value increases by at least~$\alpha(\ell-k)$. 
	Since~$\ell>(\alpha k+k+1)\cdot\alpha^{-1}$, we obtain~$\val(S')>\val(S)$.

	Thus, in the following we can assume that~$S\cap L=\emptyset$.
	Let~$C^*\coloneqq C\cap S$,~$|C^*|=z$, and~$D^*\coloneqq S\setminus C^*\subseteq D$.
	In the following, we show that~$z=tq-q$ and that there exists an~$x\in[t]^q$ such that~$N(V_x)\cap C^*=\emptyset$.
	For this, we consider the cases that~$z<tq-q$ and that~$z>tq-q$.
	In both cases we verify that~$\val(S)<t$ for each solution with exactly~$z$ vertices in~$C$.

	\paragraph{Case~$1$:~$z\le tq-q-1$.} 
	In other words,~$z=tq-q-p$ for some~$p\in[tq-q]$.
	Recall that~$\deg(v)=\ell$ for each vertex~$v\in D$.
	Thus, by \Cref{lem-alpha-inner-or-out},~$\val(S)$ is maximized if~$m_H(D^*)+m_H(C^*,D^*)$ is minimized.
	Since~$z=|C^*|<tq-q$, it is possible that no vertex of~$D^*$ is adjacent to any vertex in~$C^*$. 
	Thus,~$\val(S)$ is maximized if~$D^*$ is an independent set and if~$E_H(C^*,D^*)=\emptyset$.
	Hence, 
	\begin{align*}
	\val(S)&\le (1-\alpha)\frac{(tq-q-p)(tq-q-p-1)}{2}\\
	&\hspace{1cm} +\alpha\left[(k+p)\ell+(tq-q-p)\left(\ell+\Bigl\lceil\frac{(3\alpha-1)tq}{\alpha}\Bigr\rceil\right)\right]\\
	&\hspace{1cm} -\alpha(tq-q-p)(tq-q-p-1)\eqcolon f(p).
	\end{align*}
	
	Now, we obtain that the derivative~$f'$ of~$f$ is

	\begin{align*}
	f'(p)=(1-3\alpha)p+\frac{3\alpha-1}{2}(2tq-2q-1)-\alpha\Bigl\lceil\frac{(3\alpha-1)tq}{\alpha}\Bigr\rceil.
	\end{align*}

	Since~$(1-3\alpha)<0$ for~$\alpha\in(1/3,1]$, we obtain for~$p\le tq-q$ that
	\begin{align*}
	f'(p)&\le\frac{3\alpha-1}{2}(2tq-2q-1)-\alpha\Bigl\lceil\frac{(3\alpha-1)tq}{\alpha}\Bigr\rceil \\
	&\le \frac{3\alpha-1}{2}(2tq-2q-1)-(3\alpha-1)tq=(1-3\alpha)(q+1/2)<0.
	\end{align*}

	Since~$\alpha\in(1/3,1]$ we conclude that~$f(p)$ is a concave quadratic function.
	And since~$f'(p)<0$ for each~$p\le tq-q$, we thus conclude that~$f(0)>f(p)$ for each~$p\in[tq-q]$, a contradiction to the assumption that~$f(p)=\val(S)\ge t'$.

%

	\paragraph{Case~$2$:~$z\ge tq-q+1$.}
	Let~$z=tq-q+p$ for some~$p\in[q]$.
	By the pigeonhole principle there exist at least~$p$ indices~$i\in[q]$ such that~$w^i_j\in S$ for each~$j\in[t]$. 
	Recall that by construction, each vertex~$v\in D$ has exactly one neighbor in the set~$\{w^i_j, j\in[t]\}$.
	Since~$|D^*|=k-p$ we conclude that~$m_H(C^*,D^*)\ge (k-p)p$.
	Recall that~$\deg(v)=\ell$ for each vertex~$v\in D$.
	Thus, by \Cref{lem-alpha-inner-or-out},~$\val(S)$ is maximal if~$m_H(D^*)+m_H(C^*,D^*)$ is minimal.
Hence,~$D^*$ is an independent set and~$m_H(C^*,D^*)=(k-p)p$.
	Thus,  
	\begin{align*}
	\val(S)&\le (1-\alpha)\left[(k-p)p+\frac{(tq-q+p)(tq-q+p-1)}{2}\right]+\alpha(k-p)(\ell-p)   \\ 
	&\hspace{1cm} +\alpha p\left[\ell+\Bigl\lceil\frac{(3\alpha-1)tq}{\alpha}\Bigr\rceil-tq+q-k+1\right]\\ 
	&\hspace{1cm} +\alpha(t-1)q\left[\ell+\Bigl\lceil\frac{(3\alpha-1)tq}{\alpha}\Bigr\rceil-tq+q-p+1\right]\eqcolon f(p).
	\end{align*}

	Now, we obtain that the derivative~$f'$ of~$f$ is

	\begin{align*}
	f'(p)=(3\alpha-1)p+\Bigl(1-3\alpha\Bigr)\Bigl(k+tq-q-\frac{1}{2}\Bigr)+\alpha\Bigl\lceil\frac{(3\alpha-1)tq}{\alpha}\Bigr\rceil.
	\end{align*}

	Since~$\alpha\in(1/3,1]$ and since~$p\in[q]$, we obtain that

	\begin{align*}
	f'(p)&\le(3\alpha-1)q+(1-3\alpha)(k+tq-q-1/2)+\alpha\Big(\frac{(3\alpha-1)tq}{\alpha}+1\Big)\\
	&\le(3\alpha-1)\Big(2q+1/2+\frac{\alpha}{3\alpha-1}-k\Big).
	\end{align*}

	Since~$k>3q+\lceil\frac{\alpha}{3\alpha-1}\rceil$ and since~$\alpha\in(1/3,1]$, we obtain that~$f'(p)<0$ for~$p\le q$.
	Furthermore, since~$\alpha\in(1/3,1]$ we see that~$f(p)$ is a convex quadratic function.
	Thus, we conclude that~$f(0)>f(p)$ for each~$p\in[q]$, a contradiction to the assumption that~$f(p)=\val(S)\ge t'$.

%
%
%

	Hence,~$|C^*|=tq-q$ and thus~$|D^*|=k$.
	According to \Cref{lem-alpha-inner-or-out},~$\val(S)$ is maximal if~$m_H(D^*)+m_H(C^*,D^*)$ is minimal.
	Observe that if~$D^*$ is an independent set and if~$E_H(C^*,D^*)=\emptyset$, then~$\val(S)=t'$.
	Otherwise, if~$E_H(C^*,D^*)\ne\emptyset$ or if~$D^*$ is no independent set, then~$\val(S)<t'$.
	Thus,~$E_H(C^*,D^*)=\emptyset$ and~$D^*$ is an independent set.
	Now, if there exist two vertices~$u,v\in D^*$ such that~$u\in V_x$ and~$v\in V_{y}$ with~$x\ne y$ for~$x,y\in [t]^q$, then~$(N(u)\cup N(v))\cap C\ge q+1$.
	Since~$|C^*|=tq-q$ this implies that~$E_H(C^*,D^*)\ne\emptyset$, a contradiction.
	Hence,~$D^*\subseteq V_x$ for some~$x\in[t]^q$.
Since~$D^*$ is an independent set, we conclude that the instance~$x$ contains an independent set of size at least~$k$.

	Hence, we have a weak-$q$-composition from \textsc{Independent Set} to \probBorMax{} in~$(q+3)$-closed graphs.
	Now, the proposition follows by \Cref{lemma:is:noq}.
\end{proof}
}

\begin{proposition}
\label{thm-lb-min-c-0-1/3}
	For each~$\alpha\in (0,1/3)$, \probMin{} does not admit a kernel of size~$\Oh(k^{c-3-\epsilon})$ unless \PHC{}.
\end{proposition}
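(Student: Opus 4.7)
The plan is to adapt the weak $q$-composition from the proof of \Cref{thm-lb-max-c-1/3-1} almost verbatim, exploiting the fact that \Cref{lem-alpha-inner-or-out} treats maximization and minimization in the degrading case symmetrically: in both cases, $\val(S)$ is optimized (maximized for max, minimized for min) by minimizing $m(D^*) + m(D^*, C^*)$. Given $[t]^q$ instances $\mathcal{I}_x = (G_x, k)$ of \textsc{Independent Set} on $2$-closed graphs with $k > 3q + \lceil \alpha/(1-3\alpha) \rceil$, I build $H$ exactly as before: add all the $G_x$ as instance gadgets; add a clique $C$ of size $tq$ with vertices $w^i_j$; for every $v \in V_x$ and every $i \in [q]$, add the edge $vw^i_{x_i}$. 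This yields a $(q+3)$-closed graph by the same closure argument as in \Cref{thm-lb-max-c-1/3-1}, since that argument uses only the adjacency pattern between $D$ and $C$ and the assumption that each $G_x$ is $2$-closed.

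The only structural change is the degree equalization through leaves. Since $(3\alpha - 1) < 0$ for $\alpha \in (0, 1/3)$, the correction now has to go on the $D$-side rather than on $C$: I fix $\ell \ge t^q \cdot n + (\alpha k + k + 1) \cdot \alpha^{-1}$ and attach leaves so that $\deg(w^i_j) = \ell$ for every $w^i_j \in C$ and $\deg(v) = \ell + \lceil (1-3\alpha) tq / \alpha \rceil$ for every $v \in D$. The parameter is $k' \coloneqq k + tq - q$, and $t'$ is set to the exact value of the canonical solution $S = I \cup C^*$, where $I \subseteq V_x$ is an independent set of size $k$ for the chosen $x$ and $C^* = C \setminus \{w^i_{x_i} : i \in [q]\}$, namely
\[
t' \coloneqq (1-\alpha) \tfrac{(tq-q)(tq-q-1)}{2} + \alpha\Bigl[ k\Bigl(\ell + \Bigl\lceil \tfrac{(1-3\alpha)tq}{\alpha} \Bigr\rceil\Bigr) + (tq-q)\ell - (tq-q)(tq-q-1)\Bigr].
\]

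The correctness proof then follows the same three steps as in \Cref{thm-lb-max-c-1/3-1}. The forward direction is a direct calculation verifying $\val(S) = t'$ for the canonical $S$. For the reverse direction, I first show by an exchange argument that any solution $S$ of value at most $t'$ may be assumed to avoid the leaves $L$: swapping a leaf out for a missing vertex in $D \cup C$ changes the value by at most $1 - \alpha(\ell - k) < 0$ (using the choice of $\ell$) in favor of the swap, since $\alpha > 0$ and now all high-degree swap candidates lie in $D$. With $S \subseteq C \cup D$, I invoke \Cref{lem-alpha-inner-or-out}: for fixed $C^* = S \cap C$, the value is minimized when $D^*$ is chosen to minimize $m(D^*) + m(D^*, C^*)$.

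The main technical work, and the main obstacle, is the case analysis on $z = |C^*|$ (cases $z \le tq - q - 1$ and $z \ge tq - q + 1$) mirroring the maximization proof. Using the identity $\val(S) = \alpha \deg(S) + (1 - 3\alpha) m(S)$ together with the lower bound $m_H(C^*, D^*) \ge (k - p) p$ in the second case (from the pigeonhole argument on the $q$ coordinates), I obtain a lower bound $\val(S) \ge f(p)$ quadratic in $p$. Because $(1-3\alpha) > 0$ and the degree correction now sits on $D$, the derivative $f'(p)$ becomes strictly positive throughout the relevant range, yielding $f(p) > f(0) = t'$ and contradicting $\val(S) \le t'$. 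The delicate point is the sign bookkeeping: the chosen correction $\lceil (1-3\alpha) tq / \alpha \rceil$ must dominate the quadratic slack in exactly the same way that $\lceil (3\alpha-1) tq / \alpha \rceil$ does for maximization, and the lower bound $k > 3q + \lceil \alpha/(1-3\alpha) \rceil$ plays the analogous role to the lower bound on $k$ in the maximization proof. Once $z = tq - q$ is forced, $C^*$ must avoid some $\{w^i_{x_i} : i \in [q]\}$ and thus $D^*$ lies entirely inside a single $V_x$, where it is an independent set of size $k$. Applying \Cref{lemma:is:noq} completes the proof.
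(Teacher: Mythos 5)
Your overall plan (reuse the weak $q$-composition of \Cref{thm-lb-max-c-1/3-1}, flip the sign of the degree correction, and re-run the case analysis on $z=|C^*|$ via \Cref{lem-alpha-inner-or-out}) matches the paper's strategy, and the shifted degree assignment together with the correspondingly adjusted $t'$ is fine in itself. But there is a fatal gap in how you equalize degrees: you attach plain degree-one leaves, and for \emph{minimization} this destroys the reduction. A set of $k'$ leaves attached to distinct vertices is an independent set in which every vertex has degree one, so its value is $\alpha k'$, far below $t'\approx \alpha k'\ell$; hence every composed instance is a yes-instance regardless of the \textsc{Independent Set} inputs. Your exchange argument for ruling this out is backwards: swapping a leaf for a vertex of degree at least $\ell$ \emph{increases} $\val(S)$ by at least $\alpha(\ell-k')-1>0$ (you also wrote the net change with the wrong sign), and for minimization an increase in value is against the swap, not in favor of it. That exchange argument only works in the maximization proof, where a larger value is the goal.

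The paper anticipates exactly this obstacle: it notes that one cannot pad with leaf vertices here "since then an optimal solution would simply pick these leaf vertices," and instead attaches to every padding vertex $v$ a clique $C_v$ of size $\lceil 3t'\cdot\alpha^{-1}\rceil$. Then every vertex of $L$ has degree at least $\lceil 3t'\cdot\alpha^{-1}\rceil$, so any solution containing one of them already exceeds value $t'$ from its outgoing edges alone, which forces $S\cap L=\emptyset$. (These cliques must also be folded into the $(q+3)$-closure argument, which the paper handles by observing that $H[L]$ is a cluster graph and each vertex of $L$ has at most one neighbor outside $L$.) Without this modification, or an equivalent device making the padding vertices unattractive to a minimizer, your reduction does not establish the lower bound. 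The case analysis you sketch would additionally need its derivative computations redone for your placement of the correction term on the $D$-side, but that is only bookkeeping; the padding gadget is the missing idea.
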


{
\begin{proof}
The proof is similar to the proof of \Cref{thm-lb-max-c-1/3-1}: 
The main difference is that we now cannot add leaf vertices to ensure that all vertices in the instance gadgets (and also all vertices in the instance choice gadget) have the same degree since then an optimal solution would simply pick these leaf vertices.
Instead, we add large cliques to ensure that all vertices in the instance gadgets (and also all vertices in the instance choice gadget) have the same degree.
  We provide a weak~$q$-composition from \textsc{Independent Set} on 2-closed graphs to \probMin{} in~$(q+3)$-closed graphs.
Here, we assume that~$k>3q+\lceil\frac{\alpha}{1-3\alpha}\rceil$.
  
  \problemdef{Independent Set}
{A graph~$G=(V,E)$ and an integer~$k$.}
{Is there an independent set of size exactly~$k$?}

Let~$[t]^q$ be the set of $q$-dimensional vectors whose entries are in~$[t]$.
For a vector~$x\in[t]^q$ we denote by~$x_i$ the~$i$-th entry of~$x$.
Assume that~$q\ge 2$ is a constant and that we are given exactly~$[t]^q$ instances~$\mathcal{I}_{x} \coloneqq (G_{x}, k)$ of \textsc{Independent Set} on~$2$-closed graphs. 
Let $V_x \coloneqq V(G_x)$ for each $x \in [t]^q$ and let $D \coloneqq \bigcup_{x \in [t]^q} V_x$.
We construct an equivalent instance~$(H,k',t')$ of \probMin{} as follows.

\paragraph{Construction:} 
First, for each~$x\in[t]^q$ we add the graph~$G_x$ to~$H$.
In other words, we added the \emph{instance gadgets} to~$H$.
We then add a clique~$C$ (the \emph{instance choice gadget}) consisting of~$tq$ vertices to~$H$.
The vertices of~$C$ are denoted by~$w^i_j$ with~$i\in[q]$ and~$j\in[t]$.
Furthermore, for each~$x\in[t]^q$ and $v\in V_x$, we add the edge~$vw^i_{x_i}$ for each~$i\in[q]$.
Fix an integer~$\ell\ge t^q\cdot n+ \lceil(1-3\alpha)tq\cdot\alpha^{-1}\rceil$.
We add leaf vertices so that $\deg(w^i_j)=\ell-\lceil(1-3\alpha)tq\cdot\alpha^{-1}\rceil$ for each vertex~$w^i_j$ and~$\deg(v)=\ell$ for each vertex $v \in D$.
Next, for each added leaf vertex~$v$, we add a clique~$C_v$ of size~$\lceil 3t'\cdot\alpha^{-1}\rceil$ to~$H$.
By~$L$ we denote the set of newly added leaf vertices and vertices in the cliques~$C_v$ for the new leaf vertices.
Finally, we set~$k'\coloneqq k+tq-q$ and 
\begin{align*}
t'&\coloneqq (1-\alpha)\frac{(tq-q)(tq-q-1)}{2}\\ 
&\hspace{1cm} +\alpha\left[k\ell+(tq-q)\left(\ell-\Bigl\lceil\frac{(1-3\alpha)tq}{\alpha}\Bigr\rceil\right)-(tq-q)(tq-q-1)\right].
\end{align*}

\paragraph{Closure:} We show that~$H$ is~$(q + 3)$-closed.
Note that~$H[L]$ is a cluster graph, that is, a graph in which each connected component is a clique, and thus has closure one.
Furthermore, observe that each vertex in~$L$ has at most one neighbor in~$V(H)\setminus L$.
From now on the argumentation is completely analogous to the argumentation in \Cref{thm-lb-max-c-1/3-1}.
We thus omit it.


\paragraph{Correctness:} In the following, we prove that there exists an independent set of size exactly~$k$ for some instance~$\mathcal{I}_x$ with~$x\in [t]^q$ if and only if there exists a vertex set~$S$ of size exactly~$k'$ in~$H$ such that~$\val(S) \le t'$.

Suppose that instance~$\mathcal{I}_x$ has an independent set~$I$ of size exactly~$k$ for some~$x\in[t]^q$.
By~$C^*\coloneqq C\setminus \{w^i_{x_i} \mid i \in [q] \}$ we denote the non-neighbors of~$V_x$ in~$C$.
Note that~$|C^*|=(t-1)q$.
We show that~$S\coloneqq I\cup C^*$ is a solution of~$(H,k',t')$.
Clearly,~$|S|=k+tq-q=k'$.
Since no vertex of~$I$ is connected with any vertex in~$C^*$,~$I$ is an independent set, and since~$C^*$ is a clique of size~$(t-1)q$, we conclude that~$m_H(S)=(tq-q)(tq-q-1)/2$.
Furthermore, since each vertex in~$I$ has degree~$\ell$, and since each vertex in~$C^*$ has degree~$\ell-\lceil(1-3\alpha)tq\cdot\alpha^{-1}\rceil$, we conclude that~$m_H(S, V(H) \setminus S)=k\ell+(tq-q)(\ell-\lceil(1-3\alpha)tq\cdot\alpha^{-1}\rceil)-(tq-q)(tq-q-1)$.
Thus,~$\val(S)=t'$ and hence~$(H,k',t')$ is a yes-instance of \probMin{}.

Conversely, suppose that~$(H,k',t')$ has a solution~$S\subseteq V(H)$ of size exactly~$k'$ with~$\val(S)\le t'$. 
First, we show that we can assume that~$S\cap L=\emptyset$.
Assume that there exists a vertex~$v\in S\cap L$.
Since~$\deg(v)= \lceil 3t'\cdot\alpha^{-1}\rceil$ we conclude that~$m_H(\{v\},S\setminus\{v\})\ge 2t'\alpha$ and since~$\val(S)\ge \alpha\cdot m_H(\{v\},S\setminus\{v\})$ we obtain that~$\val(S)>t'$, a contradiction.

Thus, in the following we can assume that~$S\cap L=\emptyset$.
Let~$C^*\coloneqq C\cap S$,~$|C^*|=z$, and~$D^*\coloneqq S\setminus C^*\subseteq D$.
Next, we show that~$z=tq-q$ and that there exists an~$x\in[t]^q$ such that~$N(V_x)\cap C^*=\emptyset$.
To show that, we consider the cases that~$z<tq-q$ and that~$z>tq-q$.
In both cases we verify that~$\val(S)>t$ for each solution with exactly~$z$ vertices in~$C$.

\paragraph{Case~$1$:~$z\le tq-q-1$.} 
In other words,~$z=(t-1)q-p$ for some~$p\in[tq-q]$.
Recall that~$\deg(v)=\ell$ for each vertex~$v\in D$.
Thus, by \Cref{lem-alpha-inner-or-out},~$\val(S)$ is minimized if~$m_H(D^*)+m_H(C^*,D^*)$ is minimized. 
Since~$z=|C^*|<tq-q$, it is possible that no vertex of~$D^*$ is adjacent to any vertex in~$C^*$. 
Thus,~$\val(S)$ is minimized if~$D^*$ is an independent set and if~$E_H(C^*,D^*)=\emptyset$.
Hence, 
\begin{align*}
\val(S))&\ge (1-\alpha)\frac{(tq-q-p)(tq-q-p-1)}{2}\\
&\hspace{1cm} +\alpha\left[(k+p)\ell+(tq-q-p)\left(\ell-\Bigl\lceil\frac{(1-3\alpha)tq}{\alpha}\Bigr\rceil\right)\right]\\
&\hspace{1cm} -\alpha(tq-q-p)(tq-q-p-1)\eqcolon f(p).
\end{align*}

Now, we obtain that the derivative~$f'$ of~$f$ is

\begin{align*}
f'(p)=(1-3\alpha)p+\frac{3\alpha-1}{2}\Bigl(2tq-2q-1\Bigr)+\alpha\Bigl\lceil\frac{(1-3\alpha)tq}{\alpha}\Bigr\rceil.
\end{align*}

Since~$\alpha\in(0,1/3)$ and since~$p\le tq-q$, we obtain that

\begin{align*}
f'(p)\ge \frac{3\alpha-1}{2}\Bigl(2tq-2q-1\Bigr)+\alpha\frac{(1-3\alpha)tq}{\alpha}=(1-3\alpha)(q+1/2)>0.
\end{align*}

Since~$\alpha\in[0,1/3]$ we conclude that~$f(p)$ is a convex quadratic function.
And since~$f'(p)>0$ for each~$p\le q$, we thus conclude that~$f(0)<f(p)$ for each~$p\in[q]$, a contradiction to the assumption that~$f(p)=\val(S)\le t'$.

%

\paragraph{Case~$2$:~$z\ge tq-q+1$.}
Let~$z=tq-q+p$ for some~$p\in[q]$.
By the pigeonhole principle there exist at least~$p$ indices~$i\in[q]$ such that~$w^i_j\in S$ for each~$j\in[t]$. 
Recall that by construction, each vertex~$v\in D$ has exactly one neighbor in the set~$\{w^i_j, j\in[t]\}$.
Since~$|D^*|=k-p$ we conclude that~$m_H(C^*,D^*)\ge (k-p)p$.
Recall that~$\deg(v)=\ell$ for each vertex~$v\in D$.
Thus, by \Cref{lem-alpha-inner-or-out},~$\val(S)$ is minimal if~$m_H(D^*)+m_H(C^*,D^*)$ is minimal.
Hence,~$D^*$ is an independent set and~$m_H(C^*,D^*)=(k-p)p$.
Thus,  
\begin{align*}
\val(S)&\ge (1-\alpha)\left[(k-p)p+\frac{(tq-q+p)(tq-q+p-1)}{2}\right]+\alpha(k-p)(\ell-p)  \\ 
&\hspace{1cm} +\alpha p\left[\ell-\Bigl\lceil\frac{(1-3\alpha)tq}{\alpha}\Bigr\rceil-tq+q-k+1\right]\\
&\hspace{1cm} + \alpha(tq-q)\left[\ell-\Bigl\lceil\frac{(1-3\alpha)tq}{\alpha}\Bigr\rceil-tq+q-p+1\right]\eqcolon f(p).
\end{align*}

Now, we obtain that the derivative~$f'$ of~$f$ is

\begin{align*}
f'(p)=(3\alpha-1)p+\Bigl(1-3\alpha\Bigr)\Bigl(k+tq-q-\frac{1}{2}\Bigr)-\alpha\Bigl\lceil\frac{(1-3\alpha)tq}{\alpha}\Bigr\rceil.
\end{align*}

From~$p\le q$ and~$3\alpha-1<0$ for~$\alpha\in(0,1/3)$, we obtain that

\begin{align*}
f'(p)&\ge (3\alpha-1)q+(1-3\alpha)\Big(k+tq-q-1/2\Big)-\alpha\Big(\frac{(1-3\alpha)tq}{\alpha}+1\Big) \\
&=(1-3\alpha)\Big(k-2q-1/2-\frac{\alpha}{1-3\alpha}\Big).
\end{align*}

Since~$k>3q+\lceil\frac{\alpha}{1-3\alpha}\rceil$ and since~$\alpha\in(0,1/3)$, we obtain that~$f'(p)>0$ for~$p\le q$.
Furthermore, since~$\alpha\in(0,1/3)$ we see that~$f(p)$ is a concave quadratic function.
Thus, we conclude that~$f(0)<f(p)$ for each~$p\in[q]$, a contradiction to the assumption that~$f(p)=\val(S)\le t'$.

%
%
%

Hence,~$|C^*|=tq-q$ and thus~$|D^*|=k$.
According to \Cref{lem-alpha-inner-or-out},~$\val(S)$ is minimal if~$m_H(D^*)+m_H(C^*,D^*)$ is minimal.
Observe that if~$D^*$ is an independent set and if~$E_H(C^*,D^*)=\emptyset$, then~$\val(S)=t'$.
Otherwise, if~$E_H(C^*,D^*)\ne\emptyset$ or if~$D^*$ is no independent set, then~$\val(S)>t'$.
Thus,~$E_H(C^*,D^*)=\emptyset$ and~$D^*$ is an independent set.
Now, if there exist two vertices~$u,v\in D^*$ such that~$u\in V_x$ and~$v\in V_{y}$ with~$x\ne y$ for~$x,y\in [t]^q$, then~$(N(u)\cup N(v))\cap C\ge q+1$.
Since~$|C^*|=tq-q$ this implies that~$E_H(C^*,D^*)\ne\emptyset$, a contradiction.
Hence,~$D^*\subseteq V_x$ for some~$x\in[t]^q$.
Since~$D^*$ is an independent set, we conclude that the instance~$x$ contains an independent set of size at least~$k$.

Hence, we have a weak-$q$-composition from \textsc{Independent Set} to \probMin{} in~$(q+3)$-closed graphs.
Now, the proposition follows by \Cref{lemma:is:noq}.
\end{proof}
}

Recall that \probMin{} for~$\alpha=0$ is equivalent to \textsc{Sparest~$k$-Subgraph} which admits a kernel of size~$\Oh(c^2k^3)$~\cite{KKS20a}.

Now, \Cref{thm-lb-max-c-1/3-1,thm-lb-min-c-0-1/3,thm:closure:kernel} imply the following.

\begin{theorem}
	\label{thm:degrading:closure}
	\probBor{} admits a kernel of size~$k^{\Oh(c)}$.
	For~$\alpha > 0$, \probBor{} does not admit a kernel of size~$k^{o(c)}$ unless \PHC{}.
\end{theorem}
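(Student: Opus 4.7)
The plan is to derive the theorem by assembling three already-established results: the upper bound \Cref{thm:closure:kernel} and the two lower bounds \Cref{thm-lb-max-c-1/3-1} (maximization, $\alpha \in (1/3,1]$) and \Cref{thm-lb-min-c-0-1/3} (minimization, $\alpha \in (0,1/3)$). Since \probBor{} is by definition the union of the two degrading cases \probBorMax{} for $\alpha \in (1/3, 1]$ and \probBorMin{} for $\alpha \in [0, 1/3)$, the theorem is proved once each case is handled separately.

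For the upper bound, I would simply invoke \Cref{thm:closure:kernel}, which already asserts a kernel of size $k^{\Oh(c)}$ for all $\alpha$ in the degrading regime (and in particular covers both the maximization and minimization variants simultaneously, since its proof goes through \Cref{prop:general:kernel} together with the degree-reduction rules \Cref{rr:closure:better} and \Cref{rr:closure:independent-set}, which are insensitive to the choice of maximization versus minimization).

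For the lower bound, I split on the variant. For maximization in the \borderFocusedS{} range $\alpha \in (1/3, 1]$, \Cref{thm-lb-max-c-1/3-1} rules out a kernel of size $\Oh(k^{c - 3 - \epsilon})$ for every $\epsilon > 0$ under the assumption \PHC. In particular, if a kernel of size $k^{o(c)}$ existed for some infinite family of values of $c$, it would for large enough $c$ be of size $\Oh(k^{c - 3 - \epsilon})$ for some fixed $\epsilon > 0$, contradicting that proposition. The minimization case with $\alpha \in (0, 1/3)$ is handled analogously via \Cref{thm-lb-min-c-0-1/3}. The only value not covered by the lower-bound propositions is $\alpha = 0$ in the minimization case, but this is precisely the $\alpha > 0$ clause of the theorem statement, so there is nothing further to show.

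The assembly is essentially bookkeeping, so I do not anticipate a genuine obstacle; the only point that needs a brief justification is the standard derivation of a ``no $k^{o(c)}$ kernel'' statement from the two ``no $\Oh(k^{c-3-\epsilon})$ kernel'' statements, which follows by the argument above.
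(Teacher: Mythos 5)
Your proposal is correct and matches the paper's own (one-line) proof, which likewise assembles \Cref{thm:closure:kernel} for the upper bound with \Cref{thm-lb-max-c-1/3-1} and \Cref{thm-lb-min-c-0-1/3} for the two lower-bound cases. The extra care you take in excluding $\alpha=0$ for minimization and in converting the $\Oh(k^{c-3-\epsilon})$ lower bounds into the $k^{o(c)}$ form is exactly the right bookkeeping.
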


 \subsection{Hardness for the Non-Degrading Case}
In contrast to the degrading case, we show that the non-degrading case is intractable.

\begin{theorem}
	\label{thm-hardness-small-alpha-for-deg-and-closure}
	\probIntMax{} remains W[1]-hard with respect to the solution size~$k$ even on 2-closed and 2-degenerate graphs. 
\end{theorem}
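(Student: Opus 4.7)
The plan is to reduce from \textsc{$k$-Clique} (W[1]-hard with respect to $k$) to \probIntMax{} on a graph that is simultaneously 2-degenerate and 2-closed. Given an instance $(G,k)$ of \textsc{$k$-Clique}, I construct $G'$ by subdividing every edge once: replace each $uv \in E(G)$ with a path $u - z_{uv} - v$ through a new vertex $z_{uv}$. Every $z_{uv}$ has degree exactly $2$ in $G'$, so removing all subdivision vertices leaves the originals isolated; hence $G'$ is 2-degenerate. For 2-closure I would carry out a small case analysis: two non-adjacent original vertices share $z_{uv}$ or nothing; an original vertex $u$ and a subdivision vertex $z_{vw}$ with $u \notin \{v,w\}$ share no common neighbor, while $u$ and $z_{uv}$ are adjacent; and two different subdivision vertices share at most one original vertex as a common neighbor.

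I would set $k' := k + \binom{k}{2}$, which is a function of $k$ only. For a $k$-clique $C$ in $G$, define the \emph{clique gadget} $S_C^\star := C \cup \{z_{uv} : u,v \in C\}$ of size exactly $k'$, and put $t := \val(S_C^\star)$. In the forward direction, $S_C^\star$ immediately witnesses $\val(S_C^\star) \ge t$. For the backward direction, given $S$ with $|S| = k'$ and $\val(S) \ge t$, I parametrize by $a := |S \cap V(G)|$ and $X := |\{z_{uv} \in S : u,v \in S\}|$, and use the elementary identity
\[
m(S) \le X + (k'-a), \qquad X \le \min\!\bigl(\tbinom{a}{2},\,k'-a\bigr).
\]
For $\alpha = 0$ this convex-in-$a$ upper bound on $m(S)$ is maximized precisely at $a = k$ with $X = \binom{k}{2}$, so that $S \cap V(G)$ must be a $k$-clique in $G$. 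For general $\alpha \in (0, 1/3)$ I would rewrite $\val(S) = (1-2\alpha)\, m(S) + \alpha \sum_{v \in S} \deg_{G'}(v)$ and exploit the supermodularity of $\val$ from \Cref{obs-val-is-submodular} to show that concentrating the budget on a clique configuration beats any alternative with $a \ne k$.

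The main obstacle is handling the full range $\alpha \in (0, 1/3)$ in the backward direction: configurations with $a > k$ would gain outgoing-edge value (since $\sum_{v \in S} \deg_{G'}(v)$ grows linearly in $a$ when original vertices dominate subdivision vertices in degree), while configurations with $a < k$ can still pack many internal edges via ``one-endpoint'' subdivisions. To make the clique gadget \emph{strictly} optimal for every $\alpha \in (0, 1/3)$, I plan to either reduce from \textsc{$k$-Clique} on graphs of suitably bounded maximum degree (so that the outgoing term $\alpha \sum_{v \in S}\deg_{G'}(v)$ varies only mildly with $a$ and is dominated by the internal term $(1-2\alpha) m(S)$), or to attach pendant gadgets with $\alpha$-dependent multiplicities that equalize the outgoing contribution across choices of $a$. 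Either way, the delicate calibration is to ensure that the inequality between the clique value $t$ and the maximum value achievable at $a \ne k$ is strict for every $\alpha$ in the non-degrading range, which is where I expect the bulk of the technical work to lie.
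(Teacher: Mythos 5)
Your forward construction (subdivide every edge and take $S^\star_C = C \cup \{z_{uv} : u,v\in C\}$), the verification that the subdivision is 2-degenerate and 2-closed, and the $\alpha=0$ counting bound $m(S) \le (k'-a) + \min\bigl(k'-a,\binom{a}{2}\bigr)$ are all sound. But the proof is genuinely incomplete exactly where you flag it: for $\alpha\in(0,1/3)$ you can write $\val(S) = \alpha\sum_{v\in S}\deg_{G'}(v) + (1-3\alpha)\,m(S)$, and in your graph the degree-sum term is uncontrolled because original vertices have degree $\deg_G(v)$ (arbitrary, possibly large and non-uniform) while subdivision vertices have degree $2$; a solution can therefore trade internal edges for degree sum (e.g.\ grab $k'$ high-degree originals of a star-like instance), and the backward direction does not go through. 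Of your two proposed repairs, the first is a dead end: \textsc{Clique} on graphs of bounded maximum degree is trivially fixed-parameter tractable (any clique has at most $\Delta+1$ vertices), so you cannot source W[1]-hardness there. The appeal to the supermodularity of $\val$ (\cref{obs-val-is-submodular}) also does not close the gap, since supermodularity compares marginal gains over nested sets and says nothing about which of two incomparable size-$k'$ sets has larger value.

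Your second repair is the right one, and it requires no $\alpha$-dependent calibration: attach pendant (degree-one) vertices so that \emph{every} non-pendant vertex of $G'$ --- originals and subdivision vertices alike --- has the same degree $D$ (say $D = n(G)$). Adding pendants preserves 2-degeneracy and 2-closure; an exchange argument shows an optimal $S$ avoids pendants (a pendant has contribution at most $\max(\alpha,1-2\alpha)<1$, while any non-pendant vertex has contribution at least $\alpha D>1$); and then $\val(S) = \alpha k' D + (1-3\alpha)\,m(S)$ is a strictly increasing affine function of $m(S)$ for all $\alpha<1/3$ simultaneously, so your $\alpha=0$ counting argument finishes the proof. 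This degree-equalization step is precisely the trick in the paper's own proof, which takes a shorter route: it reduces from \textsc{Densest $k$-Subgraph} on 2-closed, 2-degenerate graphs (W[1]-hard by Raman and Saurabh), attaches $n(G)-\deg_G(v)$ leaves to each vertex $v$, and sets $t'=\alpha(kn(G)-2t)+(1-\alpha)t$, with no subdivision step. Until you actually carry out the equalization and the pendant-exchange argument, the backward direction remains open for every $\alpha\in(0,1/3)$.
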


{
\begin{proof}
	We reduce from the W[1]-hard \textsc{Densest-$k$-Subgraph} problem in $2$-degenerate and 2-closed graphs~\cite{RS08}. 
	Recall that \textsc{Densest~$k$-Subgraph} is the special case of \probMax{} with~$\alpha=0$.
	Hence, it remains to show the theorem for~$\alpha\in(0,1/3)$.
	Let~$(G,k,t)$ be an instance of \textsc{Densest~$k$-Subgraph}. 
	We construct an equivalent instance $(G',k',t')$ of \probMax{} as follows: 
	Initially, graph~$G'$ consists of a copy of~$G$. 
	Let~$Z$ denote the set of all vertices which are a copy of a vertex in~$G$. 
	We add exactly~$n(G)-\deg_G(v)$ many leaf-vertices to each vertex~$v\in Z$.
	We denote these vertices by~$I_v$.
	By~$I\coloneqq\bigcup_{v\in V(G)} I_v$ we denote the set of all these leaf-vertices. 
	Finally, we set~$k'\coloneqq k$ and~$t'\coloneqq\alpha(kn(G)-2t)+(1-\alpha)t$.

	By construction, each vertex in~$I$ has degree~$1$.
	Since~$G$ is~$2$-closed and~$2$-degenerate, we conclude that also~$G'$ is~$2$-closed and~$2$-degenerate.

	Let~$S\subseteq V(G)$ be such that~$|S|= k$ and that~$m(G[S])\ge t$.  
	Since
\begin{itemize}
\item each vertex~$v\in V(G')\cap Z$ has degree~$n(G)$,
\item $N(x)\cap N(y)\subseteq Z$ for each two vertices~$x,y\in Z$, and
\item $m(G[S])\ge t$,
\end{itemize}	
 we conclude that exactly~$x\ge t$ edges have both endpoints in~$S$ and that exactly~$kn(G)-2x\le kn(G)-2t$ edges have exactly one endpoint in~$S$.
	Thus,~$\val(S)\ge\alpha(kn(G)-2t)+(1-\alpha)t=t'$ since~$\alpha\in(0,1/3)$.

	Conversely, suppose that~$S' \subseteq V(G')$ is set of exactly~$k$ vertices with~$\val(S')\ge t'$. 
	By construction, each vertex in~$I$ has degree~$1$. 
	Hence,~$\contribution(v)\le \max(\alpha,1-\alpha)<1$ for each vertex~$v\in I$.
	Furthermore, observe that for each vertex~$z\in Z$ we have~$\contribution(w)\ge \alpha n(G)>1$ since~$\alpha$ is a constant.
	Thus,~$\contribution(z)>\contribution(v)$ for each vertex~$z\in Z$ and each~$v\in I$.
	Hence, we can assume that~$S'\subseteq Z$.
	Let~$x$ be the number of edges with both endpoints in~$S'$.
	Since each vertex in~$S'$ has degree~$n(G)$, we conclude that exactly~$kn(G)-2x$ edges have exactly one endpoint in~$S'$.
	Hence, $\val(S')=\alpha(kn(G)-2x)+(1-\alpha)x$.
	Since~$\val(S')\ge t'$ and~$\alpha\in(0,1/3)$, we conclude that~$x\ge t$.
	Hence, $G[S']$ contains at least~$t$ edges.
\end{proof}
}


\begin{theorem}
	\label{thm:nondeg:max:c}
	\probIntMin{} remains W[1]-hard with respect to the solution size~$k$ even on 2-closed graphs. 
\end{theorem}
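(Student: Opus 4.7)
The plan is to reduce from \textsc{Densest $k$-Subgraph} on 2-closed graphs, which is W[1]-hard with respect to $k$~\cite{RS08}, adapting the construction from the proof of \Cref{thm-hardness-small-alpha-for-deg-and-closure}. The new ingredient over that proof is a clique gadget attached to each leaf vertex, analogous to the construction used in the proof of \Cref{thm-lb-min-c-0-1/3}. These gadgets are necessary because in the minimization variant with $\alpha>1/3$, the leaves used in the proof of \Cref{thm-hardness-small-alpha-for-deg-and-closure} would otherwise be very cheap to include in a solution, so a direct adaptation of that proof fails.

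Concretely, given an instance $(G,k,t)$ of \textsc{Densest $k$-Subgraph} with $G$ being 2-closed on $n\coloneqq |V(G)|$ vertices, I would construct an equivalent instance $(G',k,t')$ of \probIntMin{} as follows: take a copy $Z$ of $G$; for every $v\in Z$ attach $n-\deg_G(v)$ private leaves so that each vertex of $Z$ has degree exactly $n$ in $G'$; and for every leaf $\ell$ attach a private clique $C_\ell$ of size $M$ making $\ell$ adjacent to all vertices of $C_\ell$. Here $M$ is any integer polynomial in $n$ and $k$ that exceeds $kn + (3\alpha-1)\binom{k}{2}/\alpha$. Set $t'\coloneqq\alpha k n+(1-3\alpha)t$.

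I would then carry out three tasks. First, a case analysis on non-adjacent vertex pairs shows that $G'$ is 2-closed; the privacy of the attached leaves and cliques ensures that no two non-adjacent vertices share more than one common neighbor. Second, for the forward direction, if $S\subseteq V(G)$ with $|S|=k$ has $m_G(S)\ge t$, then $\val(S)=\alpha k n+(1-3\alpha)m_G(S)\le t'$, using $1-3\alpha<0$. Third and most importantly, for the backward direction I would show that any $k$-subset $S'\subseteq V(G')$ with $\val(S')\le t'$ must lie entirely in $Z$. Rewriting $\val(S')=\alpha\sum_{v\in S'}\deg_{G'}(v)-(3\alpha-1)m_{G'}(S')$ and bounding $m_{G'}(S')\le\binom{k}{2}$, the presence of any single vertex of degree at least $M$ in $S'$ already forces $\val(S')\ge \alpha M-(3\alpha-1)\binom{k}{2}>t'$ by the choice of $M$. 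Since all leaves and all clique vertices have degree at least $M$ in $G'$ while all $Z$-vertices have degree $n<M$, this rules out any solution containing a gadget vertex. The identity $\val(S')=\alpha k n+(1-3\alpha)m_G(S')$ for $S'\subseteq Z$ then immediately yields $m_G(S')\ge t$.

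The main obstacle is calibrating $M$: it must be large enough to exclude every leaf and clique vertex from any solution of value at most $t'$, yet be polynomially bounded in $n$ and $k$ (for fixed $\alpha$) so that the reduction runs in polynomial time and preserves the parameter $k$. The explicit threshold above satisfies both requirements, which completes the parameterized reduction.
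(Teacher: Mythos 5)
Your proposal is correct and follows essentially the same route as the paper's proof: a reduction from \textsc{Densest $k$-Subgraph} on 2-closed graphs that pads every vertex of the original graph to degree $n$ with private pendant vertices, attaches a large (polynomial-size) private clique to each pendant to make all gadget vertices too expensive for any solution of value at most $t' = \alpha k n + (1-3\alpha)t$, and then reads off the edge count inside $Z$. The only differences are cosmetic (you derive the exclusion bound via the degree-sum identity $\val(S')=\alpha\sum_{v\in S'}\deg(v)-(3\alpha-1)m(S')$ with an explicit threshold $M$, whereas the paper uses cliques of size $4t'$ and counts outgoing edges directly).
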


{
\begin{proof}
	We reduce from the W[1]-hard \textsc{Densest-$k$-Subgraph} problem in 2-closed graphs~\cite{RS08}. 
	Let~$(G,k,t)$ be an instance of \textsc{Densest~$k$-Subgraph}. 
	We construct an equivalent instance $(G',k',t')$ of \probMin{} as follows: 
	Initially, graph~$G'$ consists of a copy of~$G$. 
	Let~$Z$ denote the set of all vertices which are a copy of a vertex in~$G$. 
	For each vertex~$v\in Z$ we add a set~$W_v$ of exactly~$n(G)-\deg_G(v)$ many vertices to~$G'$ such that each vertex of~$W_v$ is adjacent with vertex~$v$. 
	By~$W\coloneqq\bigcup_{v\in V(G)}W_v$ we denote the set of all these vertices. 
	Next, for each vertex~$w\in W$ we add a clique~$C_w$ of size~$4t'$ to~$G'$ such that each vertex of~$C_w$ is adjacent with~$w$. 
	By~$C\coloneqq\bigcup_{w\in W}C_w$ we denote the set of all these vertices. 
	Finally, we set~$k'\coloneqq k$ and~$t'\coloneqq\alpha(kn(G)-2t)+(1-\alpha)t$.

	Observe that~$G'[W\cup C]$ is a cluster graph, that is, a disjoint union of cliques. 
	Furthermore, observe that each vertex in~$W\cup C$ has at most one neighbor in~$Z$. 
	Since~$G[Z]$ is 2-closed, we conclude that~$G'$ is 2-closed.

	Let~$S\subseteq V(G)$ be such that~$|S|= k$ and that~$m(G[S])\ge t$.  
	Since
	\begin{itemize}
	\item each vertex~$v\in V(G')\cap Z$ has degree~$n(G)$,
	\item $N(x)\cap N(y)\subseteq Z$ for each two vertices~$x,y\in Z$, and
	\item $m(G[S])\ge t$,
	\end{itemize}
we conclude that exactly~$x\ge t$ edges have both endpoints in~$S$ and that exactly~$kn(G)-2x\le kn(G)-2t$ edges have exactly one endpoint in~$S$.	
	Thus, $\val(S)\le\alpha(kn(G)-2t)+(1-\alpha)t=t'$ since~$\alpha\in(1/3,1]$.

	Conversely, suppose that~$S' \subseteq V(G')$ is set of exactly~$k$ vertices with~$\val(S') \le t'$. 
	By construction, each vertex in~$W\cup C$ has degree at least~$2t'$. 
	If~$S'$ contains a vertex~$v$ of~$W\cup C$, then~$v$ has at least~$3t'$ neighbors which are not in~$S'$ since~$t'>k$. 
	Thus, $\val(S')> \alpha 3t'>t'$, a contradiction to the assumption~$\val(S')\le t'$.
	Hence, $S'\cap (W\cup C)=\emptyset$. 
	Let~$x$ be the number of edges with both endpoints in~$S'$.
	Since each vertex in~$S'$ has degree~$n(G)$, we conclude that exactly~$kn(G)-2x$ edges have exactly one endpoint in~$S'$.
	Hence, $\val(S')=\alpha(kn(G)-2x)+(1-\alpha)x$.
	Since~$\val(S')\le t$ and~$\alpha\in(1/3,1]$, we conclude that~$x\ge t$.
	Hence, $G[S']$ contains at least~$t$ edges.
\end{proof}
}

\section{Parameterization by Degeneracy}\label{sec:degeneracy}

We show that in the minimization variant we obtain an FPT-algorithm for each~$\alpha$.
For minimization in the degrading variant we even obtain polynomial kernels for~$d+k$.
In contrast, for maximization in the degrading variant we  provide a tight kernel of size~$k^{\Oh(d)}$.
To prove this result we again rely on a Ramsey bound.

 \subsection{Minimization Variant}
We start with the minimization variant which turns out to be easier than the maximization variant. 
This is most likely because of the following bound on~$t$.

\begin{lemma}
	\label{lem-minimization-delta-bounded-by-d+k}
	Let~$(G,k,t)$ be an instance of \probMin{}. 
	If~$t\ge dk$, then~$(G,k,t)$ is a trivial yes-instance..
\end{lemma}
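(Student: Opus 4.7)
The plan is to construct a concrete size-$k$ vertex set $S$ whose value $\val(S)$ does not exceed $dk$, from which the claim follows immediately by the hypothesis $t \ge dk$. (We may of course assume $|V(G)| \ge k$, as otherwise no size-$k$ vertex set exists and the instance would be a no-instance, which is not what the lemma claims; the lemma is meaningful in the regime $|V(G)| \ge k$.)

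The main ingredient is a degeneracy ordering $v_1, v_2, \ldots, v_n$ of $G$, obtained by iteratively removing a vertex of minimum degree in the remaining graph. By definition of degeneracy, for every $i \in [n]$ the vertex $v_i$ has at most $d$ neighbors in the set $\{v_{i+1}, \ldots, v_n\}$; we call these its \emph{forward neighbors}. I would set $S \coloneqq \{v_1, \ldots, v_k\}$.

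The key observation is that every edge with at least one endpoint in $S$ is a forward edge from some vertex of $S$: indeed, any such edge has the form $v_i v_j$ with $v_i \in S$ and $j \neq i$, and if $j < i$ then $v_j \in S$ too, so writing the edge with its smaller index first, both endpoints of edges inside $S$ and the $S$-endpoints of edges leaving $S$ are counted among forward edges of vertices of $S$. Summing the forward-degree bound over the $k$ vertices of $S$ yields
\[
   m(S) + m(S, V(G)\setminus S) \;=\; \sum_{i=1}^{k} |N(v_i) \cap \{v_{i+1}, \ldots, v_n\}| \;\le\; dk .
\]

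Since $\alpha \in [0,1]$, both coefficients $1-\alpha$ and $\alpha$ lie in $[0,1]$, and hence
\[
   \val(S) \;=\; (1-\alpha)\, m(S) + \alpha\, m(S, V(G)\setminus S) \;\le\; m(S) + m(S, V(G)\setminus S) \;\le\; dk \;\le\; t,
\]
so $S$ witnesses that $(G,k,t)$ is a yes-instance. There is no real obstacle here; the only thing to be careful about is to pick $S$ as a \emph{prefix} of the degeneracy ordering (rather than, say, an arbitrary set), so that every edge touching $S$ can be charged to a forward-neighbor slot of a vertex of $S$, thereby bounding both $m(S)$ and $m(S,V(G)\setminus S)$ simultaneously by $dk$.
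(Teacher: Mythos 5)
Your proof is correct and follows essentially the same route as the paper's: take the first $k$ vertices of a degeneracy ordering and charge every edge incident to this prefix to a forward-neighbor slot, giving $m(S)+m(S,V(G)\setminus S)\le dk$ and hence $\val(S)\le dk\le t$. The only (harmless) difference is that you spell out the final step bounding $\val(S)$ by the total edge count explicitly.
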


{
\begin{proof}
If~$|V(G)|<k$, then we have a trivial no-instance and the statement follows immediately.
Hence, in the following we assume that~$|V(G)|\ge k$.
Next, we show that there exists a solution~$S$ such that~$|F|\le dk$ where~$F$ is the set of edges with at least one endpoint in~$S$.
Then, it follows that~$\val(S)\le t$ in such instances of \probMin{} and the statement is proven.
Let~$\sigma$ be a degeneracy ordering of~$G$ and let~$S\coloneqq\{v_1,
\ldots, v_k\}$ consist of the first~$k$ vertices of~$\sigma$.
Each edge in~$F$ is of the form~$xy$ where~$x<y$ with respect to~$\sigma$.
Since~$S$ consists of the first~$k$ vertices of~$\sigma$ and~$G$ is~$d$-degenerate, we conclude that for~$x=v_i$ with~$i\in[k]$ there are at most~$d$ edges in~$F$ of the form~$xy$.
Since~$|S|=k$,~$F$ contains at most~$dk$ edges.
\end{proof}
}

Shachnai and Zehavi~\cite{SZ17} showed that \probMin{} with~$\alpha\in(0,1]$ admits an FPT-algorithm with respect to~$k+t$.
Hence, we obtain the following.

\begin{corollary}
\label{lem-minimization-fpt-for-d}
\probMin{} for~$\alpha>0$ is FPT parameterized by~$d+k$. 
\end{corollary}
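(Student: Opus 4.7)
The plan is to combine \Cref{lem-minimization-delta-bounded-by-d+k} with the FPT algorithm of Shachnai and Zehavi~\cite{SZ17} in a straightforward two-step case distinction. Given an instance $(G,k,t)$ of \probMin{} with $\alpha>0$, I would first compute the degeneracy $d$ of $G$ in polynomial time. If $t \ge dk$, then by \Cref{lem-minimization-delta-bounded-by-d+k} the instance is a trivial yes-instance and we simply output yes (together with the solution obtained by taking the first $k$ vertices in a degeneracy ordering).

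Otherwise we have $t < dk$, so $k+t < k + dk = k(d+1)$, meaning that $k+t$ is bounded by a function of $d+k$. Hence invoking the FPT algorithm of Shachnai and Zehavi for \probMin{} parameterized by $k+t$ yields a running time that is also FPT in $d+k$, which proves the corollary.

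There is essentially no obstacle here: the work has already been done by \Cref{lem-minimization-delta-bounded-by-d+k}, which provides the bound $t < dk$ in all non-trivial cases, and by the existing $k+t$ FPT result. The only care needed is to note that $\alpha > 0$ is required for the Shachnai–Zehavi result to apply, which matches the hypothesis of the corollary. For the remaining case $\alpha = 0$ (i.e.\ \textsc{Sparsest $k$-Subgraph}), a separate argument is given elsewhere in this section.
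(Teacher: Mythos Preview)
Your proposal is correct and matches the paper's approach exactly: the paper likewise invokes \Cref{lem-minimization-delta-bounded-by-d+k} to bound $t<dk$ in the non-trivial case and then applies the Shachnai--Zehavi FPT algorithm for the parameter $k+t$.
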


Naturally, we may now ask whether this FPT result can be strengthened to a polynomial kernel.
As shown by~\Cref{thm:maxdeg:no-poly-kernel-const-delta}, the \internalFocusedS{} case of \probMin{} does not admit a polynomial kernel even on graphs with constant maximum degree which implies constant degeneracy. 
In contrast, the \borderFocusedS{} variant has a kernel whose size is polynomial in~$d+k$.

\begin{theorem}
	\label{thm-minimization-kernel-d+k}
	\probBorMin{} admits a kernel of size~$(d+k)^{\Oh(1)}$.
\end{theorem}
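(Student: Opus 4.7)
The plan is to invoke \Cref{prop:general:kernel}, which already yields a kernel of size $(\Delta + k)^{\Oh(1)}$ from any \probAnoBor{} instance with $\alpha > 0$. Hence, for $\alpha \in (0, 1/3)$ the entire task reduces to bounding the maximum degree $\Delta$ by a polynomial in $d + k$. The edge case $\alpha = 0$ corresponds to \textsc{Sparsest~$k$-Subgraph} and will be handled separately by the forthcoming \Cref{prop-minimization-kernel-d+k-alpha=0}, which the proof can cite without re-proving.

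The first step is to apply \Cref{lem-minimization-delta-bounded-by-d+k}: if $t \ge dk$ output a trivial yes-instance, so from now on assume $t < dk$. Then convert the input into an \probAnoBorMin{} instance with $T = \emptyset$ and $\counter \equiv 0$. The crucial observation is that for any solution $S$ and any $v \in S$, the bound $\val(S) \ge \alpha \degCounter(v)$ holds. To see this, split each internal edge equally between its endpoints: for $\alpha \in (0, 1/3)$ both the weight $\alpha$ (placed on each external edge and on each unit of $\counter(v)$) and the weight $(1-\alpha)/2$ (half of an internal edge, seen from one endpoint) satisfy $(1-\alpha)/2 \ge \alpha$. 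Explicitly,
\[
\val(S) = \sum_{v \in S} \Big[\alpha\,|N(v) \setminus S| + \frac{1-\alpha}{2}\,|N(v) \cap S| + \alpha\,\counter(v)\Big] \ge \alpha \sum_{v \in S} \degCounter(v).
\]

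With this bound in hand, introduce the following reduction rule: whenever $\alpha \degCounter(v) > t$, apply the Exclusion Rule (\Cref{rr:general:how-to-exclude}) to $v$, since $v$ cannot belong to any solution without forcing $\val(S) > t$. A key technical point is that $\degCounter$ is invariant when a neighbor is excluded---the neighbor's removal decreases $\deg(v)$ by $1$ and increases $\counter(v)$ by $1$---so no vertex's $\degCounter$-value grows during the reduction and the rule terminates in polynomial time. After exhaustive application every remaining vertex satisfies $\degCounter(v) \le t/\alpha < dk/\alpha$, and in particular $\Delta \le dk/\alpha$. Invoking \Cref{prop:general:kernel} then produces an equivalent \probBorMin{} instance of size $(\Delta + k)^{\Oh(1)} = (d + k)^{\Oh(1)}$, with the factor polynomial in $\alpha^{-1}$ absorbed into the $\Oh$-notation as agreed in \Cref{sec:annotation}. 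The main obstacle is establishing the clean lower bound $\val(S) \ge \alpha \degCounter(v)$ uniformly over the degrading range; once this is available, the rest is routine bookkeeping on top of the framework of \Cref{sec:annotation,sec:maxdeg}.
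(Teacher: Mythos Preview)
Your proposal is correct and follows essentially the same approach as the paper. Both split off $\alpha=0$ via \Cref{prop-minimization-kernel-d+k-alpha=0}, use \Cref{lem-minimization-delta-bounded-by-d+k} to assume $t<dk$, and exclude every vertex with $\alpha\degCounter(v)$ exceeding (roughly) $t$ to force $\Delta\le dk/\alpha$; your derivation of the bound $\val(S)\ge\alpha\degCounter(v)$ is slightly cleaner, and you invoke \Cref{prop:general:kernel} directly whereas the paper unpacks those steps explicitly, but the argument is the same.
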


{
First, we consider the case~$\alpha=0$.
Recall that \textsc{Sparsest~$k$-Subgraph} is the special case of \prob{} for minimization and~$\alpha=0$.}

\begin{proposition}
\label{prop-minimization-kernel-d+k-alpha=0}
{\normalfont\textsc{Sparsest~$k$-Subgraph}} admits a kernel with $\Oh(dk)$~vertices and of size~$\Oh(d^2k)$.
\end{proposition}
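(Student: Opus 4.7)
The plan is to reduce to a trivial instance whenever $|V(G)|$ is large compared to $d$ and $k$, exploiting the fact that $d$-degenerate graphs contain proportionally large independent sets. For $\alpha=0$ we have $\val(S)=m(S)$, so a solution is simply a $k$-vertex set inducing at most $t$ edges; in particular, any independent set of size $k$ is a solution whenever $t\ge 0$.

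First I would handle the edge cases: if $t<0$ or $|V(G)|<k$, output a trivial no-instance. This ensures we may assume $t\ge 0$ and $|V(G)|\ge k$. (Optionally, if $t\ge dk$, \cref{lem-minimization-delta-bounded-by-d+k} already yields a trivial yes-instance, although this is not needed for the argument.)

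The core step exploits the standard fact that every $d$-degenerate graph admits a proper $(d+1)$-coloring. To see this, let $v_1,\ldots,v_n$ be a degeneracy ordering, so each $v_i$ has at most $d$ neighbors in $\{v_{i+1},\ldots,v_n\}$. Processing the ordering from right to left and assigning to $v_i$ the smallest color in $\{1,\ldots,d+1\}$ not used by its already-colored (later) neighbors produces a proper $(d+1)$-coloring in polynomial time. By pigeonhole, some color class is an independent set of size at least $\lceil n/(d+1)\rceil$. Thus, whenever $|V(G)|\ge (d+1)k$, I can extract in polynomial time an independent set $S$ of size exactly $k$ which satisfies $m(S)=0\le t$, certifying a yes-instance; in this case I simply output a trivial yes-instance (e.g.~$k$ isolated vertices with~$t=0$).

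In the remaining case $|V(G)|<(d+1)k$, the graph already has $\Oh(dk)$ vertices; and since $G$ is $d$-degenerate, $|E(G)|\le d\cdot |V(G)|\in\Oh(d^2k)$, so the total instance size is $\Oh(d^2k)$, as claimed. I do not anticipate any real obstacle here: the only insight needed is that sparsity forces a large independent set, which is automatically a zero-edge solution, after which the kernel bounds on vertices and edges follow immediately from the $(d+1)k$ threshold and degeneracy.
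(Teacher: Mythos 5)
Your proposal is correct and follows essentially the same route as the paper: if the graph has more than roughly $dk$ vertices, $d$-degeneracy yields an independent set of size $k$, which is a zero-value solution, and otherwise the vertex and edge counts are $\Oh(dk)$ and $\Oh(d^2k)$ respectively. You are in fact slightly more careful than the paper, using the precise threshold $(d+1)k$ and explicitly handling the cases $t<0$ and $|V(G)|<k$.
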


{
\begin{proof}
Let~$(G,k,t)$ be an instance of \textsc{Sparsest~$k$-Subgraph}.
Assume that~$|V(G)|\ge dk$.
Now, since~$G$ is~$d$-degenerate, we obtain an independent set~$I$ of size at least~$k$ in~$G$.
Observe that~$\val(S)=0$ and thus~$(G,k,t)$ is a trivial yes-instance.
Hence,~$|V(G)|<dk$.
Since the number of edges in a~$d$-degenerate graph is bounded by~$d\cdot|V(G)|$ the statement follows.
\end{proof}

Now, we consider~$\alpha\in(0,1/3)$.

\begin{lemma}
\label{lem-minimization-kernel-d+k-alpha>0}
\probMin{} for~$\alpha\in(0,1/3)$ admits a kernel of size~$\Oh(d^4k^5)$.
\end{lemma}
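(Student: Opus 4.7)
The plan is to reduce the given \probBorMin{} instance $(G, k, t)$ to an annotated instance (initially setting $T \coloneqq \emptyset$ and $\counter \equiv 0$), use two simple reduction rules to bound $\Delta$, $\Gamma$, and $|V(G)|$ polynomially in $d + k$, and then invoke \Cref{lemma:general:remove-ano-min} to return to the non-annotated problem. By \Cref{lem-minimization-delta-bounded-by-d+k} I may assume $t < dk$; otherwise the instance is a trivial yes-instance and we output any hard-coded trivial yes-instance.

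The key new ingredient is a contribution-based reduction rule that caps $\degCounter$. Since $\val$ is supermodular for $\alpha \in [0, 1/3)$ by \Cref{obs-val-is-submodular} and, for $\alpha \in (0, 1/3)$, every summand in \Cref{lemma:val} is nonnegative, ordering a solution $S$ so that $v$ appears last gives
\[\val(S) \;\ge\; \contribution(v, S \setminus \{v\}) \;=\; \alpha \degCounter(v) + (1 - 3\alpha)|N(v) \cap S| \;\ge\; \alpha \degCounter(v).\]
Hence if $\alpha \degCounter(v) > t$, then $v$ is in no feasible solution: I apply the Exclusion Rule (\cref{rr:general:how-to-exclude}) when $v \notin T$ and declare a no-instance when $v \in T$. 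After this rule is exhausted, every vertex satisfies $\degCounter(v) \le t/\alpha \in \Oh(dk)$, so both $\Delta \in \Oh(dk)$ and $\Gamma \in \Oh(dk)$.

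Next I would apply \cref{rr:delta:better} exhaustively. The worst vertex in $V(G) \setminus T$ is (non-strictly) dominated by every other vertex there, so the rule fires until $|V(G) \setminus T| \le (\Delta_{\overline T}+1)(k-1)+1$, i.e.\ $|V(G)| \in \Oh(dk^2)$. Feeding these bounds into \Cref{lemma:general:remove-ano-min} yields an equivalent \probMin{} instance of size $\Oh((\Delta + \Gamma + k)^2 + (\Delta + \Gamma + k)|V(G)|) = \Oh((dk)^2 + dk \cdot dk^2) = \Oh(d^2 k^3)$, which comfortably fits the claimed $\Oh(d^4 k^5)$ bound.

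The main subtlety is ensuring that $\Delta$, and not merely $\Delta_{\overline T}$, is controlled: \cref{rr:delta:better} is inapplicable to vertices of $T$, so the degree bound on those vertices must come from the contribution-based rule above (which indeed catches them, either by triviality or by flagging a no-instance). A secondary technical point is that the Exclusion Rule inflates counters of surviving neighbors, which is precisely why the contribution-based rule is phrased in terms of $\degCounter$ rather than $\deg$; the two rules must be interleaved in a loop, but each firing strictly decreases $|V(G) \setminus T|$ or a single $\counter$-value, so the process terminates in polynomial time.
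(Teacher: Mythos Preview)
Your proof is correct and follows essentially the same approach as the paper: pass to the annotated variant, use \Cref{lem-minimization-delta-bounded-by-d+k} to assume $t < dk$, exclude any vertex with $\alpha\degCounter(v)$ exceeding (roughly) $t$ to bound both $\Delta$ and $\Gamma$ by $\Oh(dk)$, apply \cref{rr:delta:better} to bound $|V(G)|$, and then invoke \Cref{lemma:general:remove-ano-min}. The paper uses the slightly looser threshold $\alpha\degCounter(v) \ge t+k$ and records the final size as $\Oh(d^4k^5)$, whereas your tighter bookkeeping yields $\Oh(d^2k^3)$; both are fine for the stated lemma. Two cosmetic remarks: since $T=\emptyset$ throughout (no rule you invoke enlarges $T$), the discussion of vertices in $T$ is vacuous; and your termination clause ``or a single $\counter$-value'' is off (Exclusion only raises counters), but termination is immediate anyway because every rule application deletes a vertex.
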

\begin{proof}
Let~$I$ be an instance of \probMin{} for a fixed~$\alpha\in(0,1/3)$.
First, we transform~$I$ into an equivalent instance~$I'\coloneqq (G,\emptyset,\counter,k,t)$ of \probAnoMin{} with parameter~$\alpha$.
Note that the upper bound for~$t$ from \Cref{lem-minimization-fpt-for-d} still holds for~$I'$.
Observe that if~$G$ contains a vertex~$v$ such that~$\alpha\degCounter(v)\ge t+k$, then we have for each solution~$S$ containing~$v$ that~$\val(S)>t$.
Hence, if there is a solution~$S$ for~$I'$, then~$v\notin I$.
Thus, it is safe to apply the Exclusion Rule (\Cref{rr:general:how-to-exclude}) to vertex~$v$.
In the following, we assume that the Exclusion Rule (\Cref{rr:general:how-to-exclude}) is applied exhaustively.

Note that now we have~$\Delta(G)<(t+k)\cdot\alpha^{-1}$ and also~$\Gamma(G)<(t+k)\cdot\alpha^{-1}$.
Next, we can apply \Cref{rr:delta:better} exhaustively.
Analogously to the proof of \Cref{prop:general:delta}, after the exhaustive application of \Cref{rr:delta:better} we have~$|V(G)|\le\Delta k$.
Afterwards, with the reduction described in \Cref{lemma:general:remove-ano-min}, we can construct an equivalent instance~$(G',k,t'')$ of \probMin{} of size~$\Oh((\Delta(G)+\Gamma(G))^3|V(G)|)=\Oh(d^4k^5)$.
\end{proof}

Now, \Cref{prop-minimization-kernel-d+k-alpha=0} and \Cref{lem-minimization-kernel-d+k-alpha>0} lead to \Cref{thm-minimization-kernel-d+k}.
}

 \subsection{Maximization Variant}
Recall that \textsc{MaxPVC} is the special case of \probMax{} with~$\alpha=1/2$.
Amini et al.~\cite{AFS11} showed that \textsc{MaxPVC} can be solved in $\Oh^*((dk)^k)$~time. 
Adapting this algorithm leads to an FPT-algorithm for \prob{} with respect to~$d+k$ for $\alpha \ne 0$.
The main distinction is that in our adaptation it is important to use the annotated variant to keep track of vertices being contained in a partial solution.

\begin{proposition}
	\label{prop-fpt-d+k-degrading}
	 \probBor{} can be solved in $\Oh^*((dk)^k)$~time for~$\alpha \ne 0$.
\end{proposition}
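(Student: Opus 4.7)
The plan is to adapt the branching algorithm of Amini, Fomin, and Saurabh for \textsc{MaxPVC} on $d$-degenerate graphs to the general degrading setting. First, I would convert the given \probBor{} instance into an equivalent \probAnoBor{} instance by setting $T = \emptyset$ and $\counter \equiv 0$, so that during the recursion we can explicitly record vertices already committed to the solution; this extra bookkeeping is precisely the adaptation mentioned before the proposition. Then I would run an Amini-Fomin-Saurabh style branching recursion that enlarges $T$ one vertex at a time, repeatedly appealing to the degeneracy of $G[V(G) \setminus T]$.

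The recursion works as follows. At each call with a partial solution $T$ of size less than $k$, use the degeneracy ordering of $G[V(G) \setminus T]$ (which is still $d$-degenerate since $d$-degeneracy is hereditary, and \cref{obs-rr-preserve-params} additionally ensures the parameter is preserved by our reduction rules) to enumerate a candidate set of at most $dk$ vertices that must contain the next solution vertex of every optimal extension of $T$. Branch by applying the Inclusion Rule (\cref{rr:general:how-to-include}) with each candidate in turn and recurse with $|T|$ incremented. When $|T| = k$, compute $\val_G(T)$ via \cref{lemma:val} and compare with $t$. The branching tree has depth $k$ and branching factor at most $dk$, with polynomial work per node, yielding total running time $\Oh^*((dk)^k)$.

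The main obstacle is verifying that a candidate set of size at most $dk$ drawn from the degeneracy ordering really captures the next solution vertex of every optimal extension. The Amini-Fomin-Saurabh exchange argument exploits the specific form of the PVC objective ($\alpha = 1/2$), whereas here the contribution $\contribution(v, T)$ weights degree, counter, and $|N(v) \cap T|$ differently (see \cref{def-contribution}). The correct generalization is to combine \cref{lemma:replace} with \cref{obs-val-is-submodular}: $\val_G$ is submodular for $\alpha \in (1/3, 1]$ and supermodular for $\alpha \in [0, 1/3)$, so in both degrading cases a locally better vertex from the candidate set can replace any ``out-of-set'' solution vertex without loss of value, and \cref{lem:general:condition-strictly-better} supplies strictly better substitutes whenever contributions differ by a large additive margin. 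Correctness of the Inclusion Rule (and, in dead branches, the Exclusion Rule) then ensures that the annotated bookkeeping correctly tracks $\val_G(T)$ throughout the recursion, so the $(dk)^k$ recurrence immediately yields the claimed bound.
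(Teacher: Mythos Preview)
Your high-level plan (convert to the annotated problem, branch with depth $k$ and fan-out at most $dk$) matches the paper, but the crucial step is missing: you never say what the candidate set actually is, and the exchange mechanism you sketch does not produce one of size $\Oh(dk)$. The paper defines the candidate set by a \emph{contribution threshold}. With $k' \coloneqq k - |T|$ and $t' \coloneqq t - \val(T)$, it sets $L \coloneqq \{v \in V(G)\setminus T : \contribution(v,T) \ge t'/k'\}$ for maximization (the inequality is reversed for minimization). If $|L|$ is at least roughly $dk$, then $d$-degeneracy yields an independent set $I \subseteq L$ of size $k'$; because $I$ is independent, each $v \in I$ retains $\contribution(v, T \cup I') = \contribution(v,T) \ge t'/k'$ when added, so $T \cup I$ is already a solution and the branch terminates. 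Otherwise $|L| < dk$, and \emph{every} solution extending $T$ must pick some vertex of $L$: if all $k'$ added vertices lay outside $L$, each would contribute strictly below $t'/k'$ (above, for minimization) with respect to $T$, and by \cref{lemma:val} the total would fall short of $t$. This is a threshold argument, not an exchange argument.

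Your proposed route via \cref{lemma:replace} and sub\-/super\-modularity does not yield a $dk$ bound. To replace a solution vertex $v$ by a candidate $v'$ you need $\contribution(v', S\setminus\{v\}) \ge \contribution(v, S\setminus\{v\})$; submodularity only gives $\contribution(v',T) \ge \contribution(v', S\setminus\{v\})$, which points the wrong way. The paper's \cref{rr:delta:better} closes this gap by pigeonholing over $N[S\setminus T]$ to find a $v'$ with no neighbours in $S\setminus T$, but that costs a factor of~$\Delta$, not~$d$. Getting the branching factor down to $\Oh(dk)$ genuinely requires the threshold-and-independent-set argument above; invoking \cref{lem:general:condition-strictly-better} does not help either, since it too is a degree-gap (hence $\Delta$-type) statement.
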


\begin{proof}
Observe that it is sufficient to present an algorithm with running time~$\Oh^*((dk)^k)$ for \probAnoBor{} for~$\alpha\ne 0$ since each instance~$(G,k,T)$ of \probBor{} can be transformed into an equivalent instance~$(G,\emptyset,\counter,k,t)$ of \probAno{}.
We present a search-tree algorithm with depth at most~$k$ such that each node has at most $dk$~children.
The root of the search-tree corresponds to the instance~$(G,\emptyset,\counter,k,t)$.

Now, we consider the instance~$I\coloneq (G,T,\counter,k',t')$ instance of \probAnoBor{} with~$\alpha\ne 0$.
Let~$L\coloneqq \{v\in V(G)\setminus T\mid \contribution(v,T)\ge t'/k'\}$ (maximization) and~$L\coloneqq \{v\in V(G)\setminus T\mid \contribution(v,T)\le t'/k'\}$ (minimization).
Clearly, if~$|L|\ge dk$, then there exists a subset~$I\subseteq L$ such that~$I$ induces an independent set in~$G$. 
For maximization we conclude from~$\contribution(v)\ge t'/k'$ for each vertex~$v\in I$ we that~$\val(I)\ge t'$.
Otherwise,~$|L|< dk$ and for minimization we conclude from~$\contribution(v)\le t'/k'$ for each vertex~$v\in I$ we that~$\val(I)\le t'$.
Otherwise,~$|L|< dk$.  
By definition of~$L$ we have~$\contribution(u)<t'/k'$ (maximization) and~$\contribution(u)>t'/k'$ (minimization) for each~$u\in V(G)\setminus (L\cup T)$.
Thus,~$k$ vertices from~$V(G)\setminus (L\cup T)$ are not sufficient to obtain value at least (maximization) or at most (minimization)~$t'$, 
Hence, each solution contains at least one vertex of~$L$.
In other words,~$I$ is a yes-instance if and only if at least one of the instance~$(G,T\cup\{v\},\counter,k'-1,t'-\contribution(v)$ for some~$v\in L$ is a yes-instance.
Hence, each node in the enumeration tree has at most $dk$~children.
Furthermore, since in each child of~$I$ the parameter~$k'$ is reduced by~$1$, the depth is bounded by~$k$.
Hence, we obtain an~$\Oh^*((dk)^k)$ for \probAno{}.
%
\end{proof}


The rest of this section is devoted to the proof of the next theorem.

\begin{theorem}
	\label{thm:degeneracy:max-border-case}
	\probBorMax{} admits a kernel of size~$k^{\Oh(d)}$ but, unless \PHC, no kernel of size $\Oh(k^{d-2-\epsilon})$.
\end{theorem}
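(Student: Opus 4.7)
The proof has two independent parts: a kernel of size $k^{\Oh(d)}$ and a matching weak-composition lower bound excluding $\Oh(k^{d-2-\epsilon})$ kernels.

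For the upper bound, I would mirror the $c$-closure proof of \Cref{thm:closure:kernel}: bound $\Delta$ by $k^{\Oh(d)}$ via reduction rules, then invoke \Cref{prop:general:kernel}. The new ingredient is a constructive polynomial Ramsey bound for $K_{i,j}$-free graphs, namely that every $K_{i,j}$-free graph on $N^{\Oh(j)}$ vertices contains an independent set of size $N$, locatable in polynomial time. Since no $d$-degenerate graph contains $K_{d+1,d+1}$ as a subgraph (every vertex there would have degree at least $d+1$, contradicting degeneracy $d$), applying the Ramsey bound with $i=j=d+1$ gives what we need. I would prove the Ramsey bound by induction on $j$: pick a maximum-degree vertex $v$; either $\deg(v)$ is large, in which case $N(v)$ is $K_{i,j-1}$-free and the induction hypothesis gives a large independent set inside~$N(v)$; or the maximum degree is bounded, and a greedy procedure yields an independent set of the required size.

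Equipped with this Ramsey bound, the reduction rules from \Cref{rr:closure:better} and \Cref{rr:closure:independent-set} transplant almost verbatim. If $\deg(v) \ge k^{\Omega(d)}$, then either too many neighbors of $v$ are better than $v$ and the Exclusion Rule applies, or by iterating common-neighborhood extraction (as in \Cref{lemma:pvc:findy}) we identify a small set $X$ and an independent set $I \subseteq N^{\cap}(X)$ so large that any solution $S \ni v$ admits a swap to a strictly better $v' \in I \setminus N[S\setminus\{v\}]$. Exhaustive application bounds $\Delta$ by $k^{\Oh(d)}$, and \Cref{prop:general:kernel} delivers the $k^{\Oh(d)}$ kernel.

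For the lower bound I would construct a weak $(d-2)$-composition from \textsc{Independent Set} on $2$-degenerate instances to \probBorMax{} on $d$-degenerate graphs and invoke \Cref{lemma:is:noq}. The template is \Cref{thm-lb-max-c-1/3-1}, but the clique instance-choice gadget $C$ of size $tq$ cannot be reused because its degeneracy grows with $t$. I would replace $C$ by $q$ independent groups of $t$ vertices each, wired to the instance gadgets exactly as before so that picking one vertex per group encodes a vector $x \in [t]^q$, and compensate for the missing intra-clique edges with carefully tuned leaf-vertex attachments. Since the instance gadgets are $2$-degenerate and the selection gadget contributes at most $q$ to the degeneracy, the resulting graph is $(q+2)$-degenerate, so that setting $d = q+2$ yields exactly the composition dimension $q = d-2$ we need.

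The main obstacle is the lower bound: the original composition exploits the $\Theta((tq)^2)$ intra-clique edges of $C$ to penalize selections that pick too few or too many choice-vertices, and without these edges the delicate two-case quadratic-function analysis on $|C^*|$ has to be redone with carefully calibrated leaf padding so that both cases $|C^*| < tq-q$ and $|C^*| > tq-q$ still strictly violate the target value $t'$. Getting the thresholds right while maintaining $d$-degeneracy and the composed-parameter bound $k' \le t \cdot n^{\Oh(1)}$ is the delicate step.
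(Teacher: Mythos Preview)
Your lower-bound plan is essentially the paper's: replace the clique $C$ of the $c$-closure composition by an independent set $J$ of $tq$ vertices, wire $J$ to the instance gadgets as before, pad with leaves so that $\deg(w^i_j)=\ell+1$ and $\deg(v)=\ell$ for instance vertices, and take $k'=k+tq-q$. The two-case analysis ($|J^*|<tq-q$ versus $|J^*|>tq-q$) does go through with these weights; the difficulty you flag is real but manageable. One small slip: the intended selection is ``all but one per group'' (so $tq-q$ gadget vertices in the solution), not ``one per group''.

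Your upper-bound sketch has a quantitative gap. Your $K_{i,j}$-free Ramsey bound is correct (the paper proves an analogous statement by a different argument, analyzing a maximum independent set rather than inducting on $j$), and with $i=j=d+1$ it gives $R(N)\le N^{\Oh(d)}$. But in the common-neighborhood iteration analogous to \Cref{lemma:pvc:findy} you must start with an independent set of size roughly $k^{d}$ inside $N(v)$, so that after up to $d$ rounds of shrinking by a factor $k$ a swap vertex survives. Feeding $N=k^{d}$ into $R(N)\le N^{\Oh(d)}$ only yields $\Delta\le k^{\Oh(d^{2})}$, hence a $k^{\Oh(d^{2})}$ kernel, not $k^{\Oh(d)}$. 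The paper's fix is to note that for $d$-degenerate graphs the Ramsey bound is \emph{linear}: any $d$-degenerate graph on $(d+1)\ell$ vertices has an independent set of size $\ell$ (greedily delete a minimum-degree vertex). Using this linear bound for the initial step gives $\Delta\le (d+1)^{2}k^{d}=k^{\Oh(d)}$; the $K_{d+1,d+1}$-freeness is then used only to bound the depth of the common-neighborhood iteration by $d$, not via a Ramsey inequality. Also, \Cref{rr:closure:better} genuinely relies on $c$-closure in its correctness proof and does not transplant to degeneracy; the paper simply omits it here, since the linear Ramsey bound already delivers a large independent set in $N(v)$ with no clique-killing step needed.
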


In particular, this implies that \textsc{MaxPVC} admits a kernel of size $k^{\Oh(d)}$.
We remark that a compression of size $(dk)^{\Oh(d)}$ was obtained independently by Panolan and Yaghoubizade~\cite{PY22}.

\paragraph{A kernel for biclique-free graphs in the degrading case.}
\label{sec:bc-free}
We next develop a kernelization algorithm with the size bound $k^{\Oh(d)}$.
In fact, our algorithm works for \emph{biclique-free graphs}---graphs that do not have a biclique $K_{a, b}$ as a subgraph for $a \le b \in \mathds{N}$.
Note that a $d$-degenerate graph has no $K_{d + 1, d + 1}$ as a subgraph, since every vertex in a $K_{d + 1, d + 1}$ has degree $d + 1$.

Note that a clique of size $a + b$ contains $K_{a, b}$ as a subgraph.
So given a graph $G$ with no occurrence of $K_{a, b}$ on at least $\binom{a + b + k - 2}{k - 1} \in k^{\Oh(a + b)}$ vertices, one can find an independent set of size $k$ in polynomial time (see \Cref{sec:prelim}).
We show that this upper bound on the number of vertices can be improved: the sum $a + b$ in the exponent can be replaced by $\min \{ a, b \}$.

\begin{lemma}
  \label{lemma:bc-free:ramsey}
  For $a \le b \in \mathds{N}$, let $G$ be a graph that contains no $K_{a, b}$ as a subgraph.
  If $G$ has at least $R(k)$ vertices, then we can find in polynomial time an independent set of size $k$, where $R(k) \in (a + b)^{\Oh(a)} \cdot k^a$
\end{lemma}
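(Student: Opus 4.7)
The plan is to prove the bound by induction on $a$, reducing the problem on $K_{a,b}$-free graphs to the problem on $K_{a-1,b}$-free graphs via the neighborhood of a high-degree vertex. Let $R_{a,b}(k)$ denote the quantity to bound. I will in fact show the slightly stronger claim that $R_{a,b}(k) \le b \cdot k^a$, which immediately lies in $(a+b)^{\Oh(a)} \cdot k^a$ since $b \le (a+b)^a$ for $a \ge 1$.

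For the base case $a = 1$, being $K_{1,b}$-free simply means that the maximum degree of $G$ is at most $b-1$. A straightforward greedy algorithm---repeatedly pick any remaining vertex, add it to the independent set $I$, and delete its closed neighborhood---removes at most $b$ vertices per iteration and so produces an independent set of size $\lceil n/b \rceil$ in polynomial time, establishing $R_{1,b}(k) \le bk$.

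For the inductive step with $a \ge 2$, given a $K_{a,b}$-free graph $G$, I will select a vertex $v$ of maximum degree and split into two cases. If $\deg(v) \ge R_{a-1,b}(k)$, observe that $G[N(v)]$ is $K_{a-1,b}$-free, since any copy of $K_{a-1,b}$ inside $N(v)$ together with $v$ would yield a $K_{a,b}$ in $G$; the induction hypothesis then yields an independent set of size $k$ in $G[N(v)]$ in polynomial time, and this set remains independent in~$G$. Otherwise, every vertex has degree less than $R_{a-1,b}(k) \le b k^{a-1}$, and the greedy algorithm from the base case produces an independent set of size at least $\lceil n / (b k^{a-1}) \rceil$, which is at least $k$ whenever $n \ge b k^a$. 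Combining both cases yields the recurrence $R_{a,b}(k) \le \max\{R_{a-1,b}(k), b k^a\} = b k^a$ for $a \ge 2$.

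The main obstacle is to ensure the recursion remains polynomial-time and that the case analysis is actually tight enough to give the claimed bound. The depth of recursion in the first parameter is at most $a$, and each recursive call performs only polynomial work (computing degrees, selecting a max-degree vertex, and invoking the base-case greedy), so the total running time is polynomial. Verifying that $G[N(v)]$ is $K_{a-1,b}$-free relies only on the definition of a biclique. The bound $b k^a$ is actually stronger than the claimed $(a+b)^{\Oh(a)} k^a$, so even a looser Kővári--Sós--Turán-style analysis (bounding the minimum degree in a $K_{a,b}$-free graph by $\Oh((b-1)^{1/a} n^{1-1/a})$ and iterating a min-degree greedy) would suffice; the inductive structure above is chosen primarily for cleanliness.
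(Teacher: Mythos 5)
Your proof is correct, and it takes a genuinely different route from the paper. The paper argues via a maximum independent set $I$ with $|I|<k$: vertices with at least $a$ neighbors in $I$ number at most $b\binom{k}{a}$ by $K_{a,b}$-freeness, while for each trace $X\subseteq I$ with $|X|=\ell\le a-1$ the class $V_X$ has no independent set of size $\ell+1$ (a swap would enlarge $I$) and no clique of size $a+b$, so $|V_X|<R(a+b,\ell+1)$ by the classical Ramsey bound; the algorithm then iteratively augments an independent set. You instead induct on $a$ through a max-degree vertex $v$: if $\deg(v)\ge bk^{a-1}$ then $G[N(v)]$ is $K_{a-1,b}$-free (any $K_{a-1,b}$ in $N(v)$ extends by $v$ to a $K_{a,b}$) and the induction hypothesis applies inside $N(v)$; otherwise the maximum degree is below $bk^{a-1}$ and the greedy bound $n/(\Delta+1)$ finishes, with the base case $a=1$ being the bounded-degree greedy. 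This yields the recurrence $R_{a,b}(k)\le\max\{R_{a-1,b}(k),\,bk^a\}$ and hence $R_{a,b}(k)\le bk^a$, which satisfies the stated $(a+b)^{\Oh(a)}k^a$ and avoids Ramsey numbers and the $(a+b)^{\Oh(a)}$ factor entirely (though for large $k$ the paper's dominant term $b\binom{k}{a}\approx bk^a/a!$ is smaller by a factor of $a!$). Your recursion has depth at most $a-1$ with polynomial work per level and exponent independent of $a$ and $b$, so it also supports the paper's subsequent remark about an $\Oh(n^{1-1/a})$-approximation in time $n^{\ell}$ for a constant $\ell$ not depending on $a$ or $b$. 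Both arguments are sound; yours is the more elementary and self-contained of the two.
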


\begin{proof}
  We first show that if $G$ has at least $k + b \binom{k}{a} + \sum_{\ell \in [a - 1]} R(a + b, \ell + 1) \binom{k}{\ell}$ vertices, then it contains an independent set of size $k$.
  Afterwards, we give a polynomial-time algorithm to find such an independent set of size $k$.
  Let $I$ be a maximum independent set in $G$.
  We assume for contradiction that $|I| < k$.
  We prove that there are at most $t \binom{k}{a}$ vertices that have at least $a$ neighbors in $I$ and that there are at most $\sum_{\ell \in [a - 1]} R(a + b, \ell + 1)$ vertices that have at most $a - 1$ neighbors in $I$.

  For each subset $X \subseteq I$ of size exactly $a$, note that there are at most $b$ vertices $v$ such that~$N(v) \supseteq X$, since otherwise there is a $K_{a, b}$ in $G$.
  It follows that the number of vertices with at least $a$ neighbors in $I$ is at most $t \binom{|I|}{a} \le b \binom{k}{a}$.
  Consider a set $X \subseteq I$ of size $\ell \in [a - 1]$.
  Let~$V_X \coloneqq \{ v \in V(G) \setminus I \mid N(v) \cap I = X \}$.
  Then, there is no independent set $I'$ of size $\ell + 1$ in~$V_X$, since otherwise $(I \setminus X) \cup I'$ is an independent set of size at least $|I| + 1$, contradicting the fact that $I$ is an independent set of maximum size.
  Moreover, there is no clique of size~$a + b$ in $V_X$.
  Thus, $|V_X| < R(a + b, \ell + 1)$.
  The number of vertices with at most $a - 1$ neighbors in~$I$ is then at most $\sum_{X \subseteq I, |X| = \ell \in [a - 1]} R(a + b, \ell + 1) \binom{|I|}{\ell} \le \sum_{\ell \in [a - 1]} R(a + b, \ell + 1) \binom{k}{\ell}$.

  We turn the argument above into a polynomial-time algorithm as follows.
  Suppose that we have an independent set $I'$ of size smaller than $k$.
  As discussed above, there are at most~$b \cdot \binom{k}{a}$ vertices that have at least $s$ neighbors in $I'$.
  Hence, there is a vertex set $X \subseteq I'$ of size $\ell$ such that $|V_X| > R(a + b, \ell + 1)$.
  Note that $X$ can be found in polynomial time, for instance, by counting the number of vertices $v'$ such that $N(v') \cap I' = N(v) \cap I'$ for each vertex $v \in V(G)$.
  We can then find an independent set $I''$ of size $\ell + 1$ in $X$ (this can be done in polynomial time as discussed in \Cref{sec:prelim}).
  This way, we end up with an independent set~$(I' \setminus X) \cup I''$ of size at least $|I'| + 1$.
  Note that this procedure of finding an independent set of greater size is repeated at most $k$ times, and thus the overall running time is polynomial.
\end{proof}

We remark that for fixed $a \le b \in \mathbb{N}$, \Cref{lemma:bc-free:ramsey} gives us an $\Oh(n^{1 - 1 / a})$-approximation algorithm for \textsc{Independent Set} that runs in $n^{\ell}$ time for some constant $\ell$ not depending on $a$ or $b$.
An $\Oh(n^{1 - 1 / a})$-approximation algorithm is known on graphs where $K_{a, b}$ is excluded as an \emph{induced subgraph} \cite{BTTW20,DFRR20}.
However, these algorithms have running time $n^{\Omega(a)}$.

We now apply \Cref{lemma:bc-free:ramsey} to obtain a lemma analogous to \Cref{lemma:pvc:findy}.

\begin{lemma}
  \label{lemma:bc-free:largedeg}
  Suppose that $\Delta \ge R(b k^{a - 1})$.
  Then, we can find in polynomial time a set $X$ of $i \in [a - 1]$ vertices and an independent set $I$ with the following properties:
  \begin{enumerate}
    \item
      The set $I \subseteq \bigcap_{x \in X} N(x)$ is an independent set of size at least $b k^{a - i} + 1$.
    \item
      For every vertex $u \in V(G) \setminus X$, it holds that $|N(u) \cap I| \le bk^{a - i - 1}$.
  \end{enumerate}
\end{lemma}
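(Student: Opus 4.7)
The plan is to mirror the greedy-growing argument used in \Cref{lemma:pvc:findy}, replacing the $c$-closure constraint with the $K_{a,b}$-free hypothesis. Since $\Delta \ge R(bk^{a-1})$, there is a vertex $v$ with at least $R(bk^{a-1})$ neighbors. The induced subgraph $G[N(v)]$ is itself $K_{a,b}$-free, so I would apply \Cref{lemma:bc-free:ramsey} to it to extract in polynomial time an independent set $I_v \subseteq N(v)$ of size at least $bk^{a-1}$. This $I_v$ will act as the ``reservoir'' from which the final independent set $I$ is carved out.

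Starting from $X \coloneqq \{v\}$ and $i \coloneqq 1$, the procedure grows $X$ greedily: while there exists a vertex $v' \in V(G) \setminus X$ with $|N(v') \cap N^{\cap}(X) \cap I_v| > bk^{a-i-1}$, add such a $v'$ to $X$ and increment $i$. Such a $v'$ can be found in polynomial time by iterating over all candidates and counting. By construction, whenever $v'$ is added one has $N^\cap(X \cup \{v'\}) \cap I_v = N(v') \cap N^\cap(X) \cap I_v$, so the invariant $|N^\cap(X) \cap I_v| > bk^{a-i}$ is preserved.

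The critical step, analogous to the $c$-closure contradiction in \Cref{lemma:pvc:findy}, is to show the loop terminates with $i \le a - 1$. Suppose for contradiction that the loop continues when $|X| = a-1$. Then one can extend $X$ by some $v'$ for which $|N(v') \cap N^\cap(X) \cap I_v| > bk^0 = b$. But then $X \cup \{v'\}$ consists of $a$ vertices sharing at least $b+1$ common neighbors inside $I_v$; picking any $b$ of these common neighbors yields a $K_{a,b}$ subgraph, contradicting the $K_{a,b}$-free assumption. (Note that the two sides of this biclique are automatically vertex-disjoint because the $b+1$ common neighbors lie in $I_v \subseteq N(v) \subseteq V(G) \setminus \{v\}$, and they are distinct from the vertices subsequently added to $X$ by the greedy choice.) Hence the loop halts with $i \in [a-1]$.

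Finally, I would set $I \coloneqq N^\cap(X) \cap I_v$ and read off the two desired properties: property~(1) is precisely the maintained invariant $|I| > bk^{a-i}$, i.e.\ $|I| \ge bk^{a-i}+1$; property~(2) is exactly the termination condition, since the loop would not have stopped had some $u \in V(G) \setminus X$ satisfied $|N(u) \cap I| > bk^{a-i-1}$. Independence of $I$ is inherited from $I_v$. The main subtlety I expect to handle carefully is the off-by-one tracking between the Ramsey bound $bk^{a-1}$ used to seed $I_v$ and the strict inequalities driving the invariant; beyond this, the argument is essentially the $K_{a,b}$-free analogue of \Cref{lemma:pvc:findy}.
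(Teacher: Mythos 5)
Your proposal is correct and follows essentially the same route as the paper's proof: seed with $I_v$ from \Cref{lemma:bc-free:ramsey} applied to $N(v)$, grow $X$ greedily while the common neighborhood inside $I_v$ stays large, and derive the termination bound $i \le a-1$ from the $K_{a,b}$-free hypothesis. The off-by-one point you flag (needing $|I_v| > bk^{a-1}$ to start the invariant, versus the size $bk^{a-1}$ guaranteed by the Ramsey lemma) is present in the paper's own write-up as well and is harmless up to adjusting $R$ by one.
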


{
\begin{proof}
  Let $v$ be a vertex with $\deg(v) \ge R(b k^{a - 1})$.
  By \Cref{lemma:bc-free:ramsey}, there is an independent set $I_v$ of size $b k^{b - 1}$ in $N(v)$ (which can be found in polynomial time).
  Let $X$ be an inclusion-wise maximal set of $i$ vertices containing $v$ with $|\bigcap_{x \in X} N(x) \cap I_v| > bk^{a - i}$.
  Such a set can be found by the following polynomial-time algorithm:
  We start with $X = \{ v \}$ and $i = 1$.
  We will maintain the invariant that $|X| = i$.
  If there exists a vertex $v' \in V(G) \setminus X$ with $|N(v') \cap \bigcap_{x \in X} N(x) \cap I_v| > b k^{a - i - 1}$, then we add $v'$ to $X$ and increase $i$ by 1.
  We keep doing so until there remains no such vertex $v'$.

  We show that this algorithm terminates for $i = |X| \le a - 1$.
  Assume to the contrary that the algorithm continues for $i = a - 1$.
  We then have that $|N(v') \cap \bigcap_{x \in X} N(x) \cap I_v| > b k^{a - i - 1}$ for some vertex $v' \in V(G) \setminus X$.
  It follows that the set $X \cup \{ v' \}$ (which is of size $a$) has more than $b$ common neighbors, contradicting the fact that $G$ has no $K_{a, b}$ as a subgraph.

  Finally, we show that the set $X$ found by this algorithm and $I \coloneqq \bigcap_{x \in X} N(x) \cap I_v$ satisfy the three properties of the lemma.
  We have $|I| = |\bigcap_{x \in X} N(x) \cap I_v| = |N(v') \cap \bigcap_{x \in X \setminus \{ v' \}} N(x) \cap I_v| > b k^{a - (i - 1) - 1} = b k^{a - i}$, where $v'$ is the last vertex added to $X$.
  Moreover, since $X$ is inclusion-wise maximal, we have $|N(u) \cap I| = |N(u) \cap  \bigcap_{x \in X} N(x) \cap I_v| \le b k^{a - i - 1}$ for every vertex $u \in V(G) \setminus X$.
\end{proof}
}

\begin{rrule}
	\label{rr:bc-free:independent-set}
	Let~$\mathcal{I}$ be an instance of \probAnoBor{}.
	Let~$X, I$ be as specified in \Cref{lemma:bc-free:largedeg} and let $v \in I$ be a vertex such that every other vertex in $I$ is better than $v$.
	Then, apply the Exclusion Rule (\Cref{rr:general:how-to-exclude}) to~$v$.
\end{rrule}

{
\begin{lemma}
  \Cref{rr:bc-free:independent-set} is correct.
\end{lemma}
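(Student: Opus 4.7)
The plan is to mimic the proof of \Cref{rr:closure:independent-set}, replacing the $c$-closure Ramsey analysis by \Cref{lemma:bc-free:largedeg}. I focus on the maximization variant; the minimization variant follows by reversing inequalities in the contribution comparison. First, I would assume for contradiction that every solution $S$ of the \probAnoBor{} instance contains $v$, and set $S' \coloneqq S \setminus X$. Note that $v \in S'$ since $I \cap X = \emptyset$ (because $I \subseteq \bigcap_{x \in X} N(x)$).

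Second, I would exhibit a candidate replacement $v' \in I \setminus N[S']$. To lower-bound $|I \setminus N[S']|$, I use three facts: $|I| \ge bk^{a-i}+1$; that $|I \cap N[v]| = 1$ because $v \in I$ and $I$ is independent; and for every $u \in S' \setminus \{v\} \subseteq V(G) \setminus X$, the bound $|I \cap N[u]| \le bk^{a-i-1}$ (from $|N(u) \cap I| \le bk^{a-i-1}$ via \Cref{lemma:bc-free:largedeg}, noting that when $u \in I$ independence forces $|I \cap N[u]|=1$). A direct computation then gives
\[
|I \setminus N[S']| \;\ge\; (bk^{a-i}+1) - 1 - (k-1)\,bk^{a-i-1} \;=\; bk^{a-i-1} \;>\; 0,
\]
so some $v' \in I \setminus N[S']$ exists.

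Third, I would show $\contribution(v', S \setminus \{v\}) \ge \contribution(v, S \setminus \{v\})$ and invoke \Cref{lemma:replace} to conclude that $(S \setminus \{v\}) \cup \{v'\}$ is a solution avoiding $v$, contradicting the assumption. For the inequality, observe that $v, v' \in I$ implies $X \subseteq N(v) \cap N(v')$, while $v' \notin N[S']$ means $v'$ has no neighbor in $S \setminus X$; hence $N(v') \cap (S \setminus \{v\}) = X \cap S \subseteq N(v) \cap (S \setminus \{v\})$. Combining this inclusion with $\alpha \degCounter(v') \ge \alpha \degCounter(v)$ (since $v'$ is better than $v$ with respect to $\emptyset$) and the degrading-case sign $1-3\alpha < 0$ yields the required comparison. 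The only mildly delicate point is the case $u \in I \cap (S' \setminus \{v\})$ in the second step: there $|I \cap N[u]|=1$, so we need $bk^{a-i-1} \ge 1$ for the uniform bound to subsume it, which holds since $a \ge i+1$ and $b,k \ge 1$.
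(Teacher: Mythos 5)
Your proposal is correct and follows essentially the same route as the paper's proof: contradiction assumption, the counting bound $|I\setminus N[S']|\ge bk^{a-i-1}>0$ via \Cref{lemma:bc-free:largedeg}, and the contribution comparison $N(v')\cap(S\setminus\{v\})=X\cap(S\setminus\{v\})\subseteq N(v)\cap(S\setminus\{v\})$ combined with $1-3\alpha<0$ and \Cref{lemma:replace}. The extra remark about vertices $u\in I\cap S'$ is a minor point the paper leaves implicit, but your handling of it is fine.
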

\begin{proof}
  We show the proof for the maximization variant; the minimization variant follows analogously.
  For the sake of contradiction, assume that every solution $S$ contains $v$.
  By \Cref{lemma:bc-free:largedeg}, every vertex $u \in V(G) \setminus X$ has at most $b k^{a - i}$ neighbors in $I$.
  Moreover, since $I$ is an independent set, we have $|I \cap N[v']| = 1$ for every vertex $v' \in I$ (including $v$).
  For $S' \coloneqq S \setminus X$, we have
  \begin{align*}
    |I \setminus N[S']|
    &\ge |I| - |I \cap N[v]| - |I \cap N[S' \setminus \{ v \}]| \\
    &\ge (b k^{a - i} + 1) - (k - 1) b k^{a - i - 1} - 1
    = b k^{a - i - 1} > 0.
  \end{align*}
  Let $v'$ be an arbitrary vertex in $I \setminus N[S']$.
  We show that $\contribution(v', S \setminus \{ v \}) \ge \contribution(v, S \setminus \{ v \})$.
  By \Cref{lemma:replace}, this would imply that $(S \setminus \{ v \}) \cup \{ v' \}$ is a solution not containing $v$.
  Since $v$ and $v'$ are both adjacent to all vertices of $X$ and $\alpha \in (1/ 3, 1]$, we have
  \begin{align*}
    \contribution(v', S \setminus \{ v \})
    &= \alpha \degCounter(v') + (1 - 3 \alpha) |X \cap (S \setminus \{ v \})| \\
    &\ge \alpha \degCounter(v) + (1 - 3 \alpha) |X \cap (S \setminus \{ v \})| \\ 
    &\ge \alpha \degCounter(v) + (1 - 3 \alpha) |N(v) \cap (S \setminus \{ v \})| = \contribution(v, S \setminus \{ v \}).
  \end{align*}
  Here, the first inequality follows from the fact that $v'$ is better than $v$.
\end{proof}
}

By applying \Cref{rr:bc-free:independent-set} exhaustively, we end up with an instance with maximum degree $\Delta \le R(b k^{a - 1})$.
The following proposition then follows from \Cref{prop:general:kernel} using the bound in \Cref{lemma:bc-free:ramsey}:

\begin{proposition}
  \label{thm:bc-free:kernel}
  For any $a \le b \in \mathbb{N}$, \probBor{} on graphs that do not contain~$K_{a, b}$ as a subgraph has a kernel of size $(R(b k^{a - 1}) + k)^{\Oh(1)} \in b^{\Oh(a)} k^{\Oh(a^2)}$.
\end{proposition}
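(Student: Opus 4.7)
The plan is to chain together the machinery already developed in the section. Given an instance of \probBor{} on a $K_{a,b}$-free graph, I would first invoke the trivial conversion into an equivalent \probAnoBor{} instance (empty partial solution $T$, zero counters). The core task then reduces to bounding the maximum degree $\Delta$ of the annotated instance, at which point \Cref{prop:general:kernel} immediately delivers a kernel of size $(\Delta + k)^{\Oh(1)}$ for the non-annotated problem. Since the graphs of interest are $K_{a,b}$-free and this property is preserved under vertex deletion (and hence under the Exclusion Rule, which only removes vertices), we stay in the same graph class throughout.

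Next, I would apply \Cref{rr:bc-free:independent-set} exhaustively. The termination argument is that each application removes a vertex, so at most $|V(G)|$ applications suffice. The key invariant to verify is that whenever $\Delta \ge R(bk^{a-1})$, the preconditions of \Cref{lemma:bc-free:largedeg} are satisfied and thus we can locate, in polynomial time, the set $X$ and the independent set $I$ of the appropriate size. Within such an independent set $I$ we can pick a vertex $v$ that is worst (i.e., every other vertex of $I$ is better than $v$ with respect to $T$) simply by computing contributions, so \Cref{rr:bc-free:independent-set} is indeed applicable. Once the rule is no longer triggered, we must have $\Delta \le R(bk^{a-1})$.

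At this point, \Cref{prop:general:kernel} yields an equivalent \probBor{} instance of size $(\Delta + k)^{\Oh(1)} \le (R(bk^{a-1}) + k)^{\Oh(1)}$. The final step is a routine calculation: plugging the bound from \Cref{lemma:bc-free:ramsey}, namely $R(m) \in (a+b)^{\Oh(a)} \cdot m^a$, into $m = bk^{a-1}$ gives
\[
R(bk^{a-1}) \in (a+b)^{\Oh(a)} \cdot b^a \cdot k^{a(a-1)} \subseteq b^{\Oh(a)} \cdot k^{\Oh(a^2)},
\]
using $a \le b$ to absorb the $(a+b)^{\Oh(a)}$ factor into $b^{\Oh(a)}$. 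Adding $k$ and raising to an $\Oh(1)$ power preserves this bound.

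The main obstacle I anticipate is not in the structure of the argument but in making sure that the hypotheses of \Cref{lemma:bc-free:largedeg} and of \Cref{rr:bc-free:independent-set} remain valid through repeated applications: one needs to confirm that the $K_{a,b}$-free property, the $\alpha \in (1/3, 1]$ (or the symmetric minimization) regime, and the bookkeeping of $\counter$ and $T$ are preserved by the Exclusion Rule, so that the rule can be reapplied until $\Delta$ drops below $R(bk^{a-1})$. Once this is nailed down, the size bound follows by direct substitution, and no further work is required.
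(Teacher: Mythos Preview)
Your proposal is correct and follows essentially the same approach as the paper: exhaustively apply \Cref{rr:bc-free:independent-set} (which is enabled by \Cref{lemma:bc-free:largedeg} whenever $\Delta \ge R(bk^{a-1})$) to bound the maximum degree, then invoke \Cref{prop:general:kernel} and substitute the Ramsey bound from \Cref{lemma:bc-free:ramsey}. The paper's own argument is just the terse one-line version of exactly this; your additional remarks about $K_{a,b}$-freeness being preserved under vertex deletion and the size calculation are accurate elaborations.
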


Note that a $d$-degenerate graph contains no $K_{d + 1, d + 1}$ as a subgraph.
Thus, we obtain a kernel of size $k^{\Oh(d^2)}$ for fixed $d$.
In fact, we obtain a smaller kernel using the folklore fact that any $d$-degenerate graph on at least $(d + 1)k$ vertices has an independent set of size $k$.

\begin{lemma}
  \label{cor:degeneracy:kernel}
  \probBor{} admits a kernel of size $k^{\Oh(d)}$.
\end{lemma}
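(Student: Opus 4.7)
The plan is to derive this corollary from the biclique-free machinery already developed for \Cref{thm:bc-free:kernel}, but to sharpen the degree bound obtained in \Cref{lemma:bc-free:largedeg} by exploiting degeneracy directly. First I would observe that any $d$-degenerate graph has no $K_{d+1,d+1}$ as a subgraph (each vertex of $K_{d+1,d+1}$ would need $d+1$ back-neighbors in any degeneracy ordering). So \Cref{thm:bc-free:kernel} already yields a kernel of size $(d+1)^{\Oh(d)} k^{\Oh(d^2)}$. The extra factor of $d$ in the exponent comes solely from invoking \Cref{lemma:bc-free:ramsey} on a set of $bk^{a-1}$ vertices inside the proof of \Cref{lemma:bc-free:largedeg}, and this is exactly the step I would improve.

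Concretely, the Ramsey bound is used only to extract a large independent set $I_v \subseteq N(v)$ when $v$ is a vertex of very high degree. Since $G[N(v)]$ is an induced subgraph of $G$, it remains $d$-degenerate. Hence the folklore fact that any $d$-degenerate graph on at least $(d+1)\ell$ vertices contains an independent set of size $\ell$ (obtainable in polynomial time by repeatedly picking a minimum-degree vertex from a degeneracy ordering) allows us to replace the requirement $\deg(v) \ge R(bk^{a-1})$ in \Cref{lemma:bc-free:largedeg} by the much weaker $\deg(v) \ge (d+1) \cdot bk^{a-1}$. Substituting $a = b = d+1$ gives a maximum-degree bound of $(d+1)^2 k^{d} \in k^{\Oh(d)}$ once \Cref{rr:bc-free:independent-set} is applied exhaustively.

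From that point the proof runs on autopilot: the Inclusion and Exclusion Rules do not increase degeneracy (\Cref{obs-rr-preserve-params}), so the modified reduction procedure stays within the $d$-degenerate class throughout, and \Cref{prop:general:kernel} then transforms the resulting annotated instance into an equivalent \probBor{} instance of size $(\Delta + k)^{\Oh(1)} = k^{\Oh(d)}$.

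The only real obstacle is sanity-checking the modified variant of \Cref{lemma:bc-free:largedeg}: I need to verify that the iterative construction of the set $X$ still terminates at some $i \le d$ even though we used a weaker seed bound for $|I_v|$. This is automatic because the termination argument relied on the no-$K_{d+1,d+1}$ property (yielding a contradiction if $|X|$ reaches $d+1$ with a common-neighbor set still exceeding $b$), not on the size of $I_v$ itself; the size of $I_v$ only needs to be $bk^{a-1}$ so that the telescoping lower bound $|I| > bk^{a-i}$ on the common neighborhood is maintained, which is exactly what the folklore fact delivers. Once that is verified, the correctness of \Cref{rr:bc-free:independent-set} and the soundness of the degree reduction carry over verbatim.
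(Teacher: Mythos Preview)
Your proposal is correct and takes essentially the same approach as the paper. The paper's proof is simply more terse: rather than re-opening \Cref{lemma:bc-free:largedeg} and re-running the termination argument, it observes directly that the quantity $R(\ell)$ appearing in the kernel size from \Cref{thm:bc-free:kernel} satisfies $R(\ell)\le (d+1)\ell$ for $d$-degenerate graphs, and plugs this into $(R(bk^{a-1})+k)^{\Oh(1)}$ with $a=b=d+1$ to get $k^{\Oh(d)}$; your more hands-on derivation unpacks exactly this substitution.
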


{
\begin{proof}
	Since a $d$-degenerate graph has no $K_{d + 1, d + 1}$ as a subgraph, we conclude that there is a kernel of size $(R(d k^{d - 1}) + k)^{O(1)}$ by \Cref{thm:bc-free:kernel}.
	Recall that for $\ell \in \mathds{N}$, $R(\ell)$ denotes an integer such that any graph on $R(\ell)$ vertices has an independent set of size $\ell$.
	Since $R(\ell) \le (d + 1) \ell$ for any $d$-degenerate graphs, there is a kernel of size $(R(d k^{d - 1}) + k)^{O(1)} \in k^{\Oh(d)}$.
\end{proof}
}

 \paragraph{A matching Lower Bound.} 
Now, we show that significant improvement in \Cref{cor:degeneracy:kernel} is unlikely.
This, together with \Cref{cor:degeneracy:kernel}, implies \Cref{thm:degeneracy:max-border-case}.

\begin{proposition}
	\label{thm:degenercy:nokernel}
	\probBorMax{} admits no kernel of size $\Oh(k^{d-2-\epsilon})$ unless \PHC{}.
\end{proposition}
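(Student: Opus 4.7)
The plan is to prove \Cref{thm:degenercy:nokernel} by constructing a weak $q$-composition from Independent Set on a suitable NP-hard class of graphs of degeneracy $2$ to \probBorMax{}, producing composed instances of degeneracy at most $q + 2$. Applying \Cref{lemma:is:noq} then yields the claimed $\Oh(k^{q - \epsilon}) = \Oh(k^{d - 2 - \epsilon})$ lower bound.

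The construction closely parallels that of \Cref{thm-lb-max-c-1/3-1}. Given $t^q$ input instances $\mathcal{I}_x = (G_x, k)$, each on a graph of degeneracy~$2$, I would build a composed graph $H$ consisting of disjoint copies of each $G_x$ together with a choice gadget $C = \{ w^i_j : i \in [q], j \in [t] \}$, and edges $v w^i_{x_i}$ for each $x \in [t]^q$, $v \in V_x$, and $i \in [q]$. Unlike in \Cref{thm-lb-max-c-1/3-1}, where $C$ is a clique of size $tq$ (degeneracy $tq - 1$ but $c$-closure $1$), here $C$ is taken to be edgeless so that its contribution to the degeneracy is zero. Leaves are attached to equalize degrees of all non-leaf vertices as before, and the parameters are set to $k' \coloneqq k + q(t - 1)$ together with an appropriate target value~$t'$.

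To bound $d_H \le q + 2$, I would argue that in any induced subgraph $S$ of $H$ containing some $v \in V_x$, the $2$-degeneracy of $G_x$ guarantees a vertex in $S \cap V_x$ with at most $2$ neighbors in $S \cap V_x$; combined with its at most $q$ choice-edges, this vertex has degree at most $q + 2$ in $S$. Any induced subgraph containing no instance-gadget vertex consists only of edgeless choice vertices and their attached leaves, and therefore has degeneracy at most $1$. For correctness, the forward direction takes $S \coloneqq I \cup (C \setminus \{ w^i_{x_i} : i \in [q]\})$ for an independent set $I$ of size $k$ in some $G_x$: this set has size $k'$, is independent in $H$ (so $m(S) = 0$), and attains $\val(S) = t'$ by the degree-equalization. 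The reverse direction splits on $|S \cap C|$ as in \Cref{thm-lb-max-c-1/3-1} and uses its pigeonhole argument of Case~$2$ to force $|S \cap C| = tq - q$ and $S \cap V$ to sit in a single $V_x$ as an independent set.

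The main technical obstacle is that, because $C$ is now edgeless, $m_H(C \cap S)$ no longer grows quadratically in $|C \cap S|$, so the concave-quadratic function argument of Case~$1$ from \Cref{thm-lb-max-c-1/3-1} must be replaced by a linear-function argument: I would assign instance-gadget vertices a strictly smaller equalized target degree than choice vertices, so that $\val(S)$ becomes strictly decreasing in the number of excluded choice vertices and therefore pins down $|S \cap C| = tq - q$ in any near-optimal solution. A secondary obstacle is identifying a source problem (such as Independent Set on a suitable class of degeneracy-$2$ graphs) that is genuinely NP-hard; if such a source is not available off the shelf, the construction can be relaxed to use subcubic graphs (degeneracy $3$) together with an edge-subdivision trick on the choice edges to offset the extra degeneracy and still achieve $d_H \le q + 2$.
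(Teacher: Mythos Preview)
Your proposal is correct and follows essentially the same approach as the paper's proof: a weak $q$-composition from \textsc{Independent Set} on $2$-degenerate graphs, with the choice gadget taken as an independent set (rather than a clique) to keep the degeneracy at $q+2$, and with choice-gadget vertices given degree $\ell+1$ versus $\ell$ for instance-gadget vertices so that Case~1 becomes the linear argument you describe. Your secondary obstacle is not a real issue---\textsc{Independent Set} is NP-hard on $2$-degenerate graphs (e.g., subdivide every edge twice), so no fallback to subcubic graphs or subdivision tricks on the choice edges is needed.
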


\begin{proof}
	We provide a weak~$q$-composition from \textsc{Independent Set} on 2-degenerate graphs to \probBorMax{} in~$(q+2)$-degenerate graphs.
	Here, we assume that~$k>|c(\alpha)|/(3\alpha-1)$ for some constant~$c(\alpha)>0$ which will be specified below.
	

	Let~$[t]^q$ be the set of $q$-dimensional vectors whose entries are in~$[t]$.
	For a vector~$x\in[t]^q$ we denote by~$x_i$ the~$i$th entry of~$x$.
	Next, assume that~$q\ge 2$ is a constant and that we are given exactly~$t^q$ instances~$\mathcal{I}_{x} \coloneqq (G_{x}, k)$ of \textsc{Independent Set} on~$2$-degenerate graphs. 
	We construct an equivalent instance~$(H,k',t')$ of \probBorMax{} as follows.

	\paragraph{Construction:} We add an independent set~$J$ (the \emph{instance choice gadget}) consisting of~$tq$ vertices to~$H$.
	The vertices of~$J$ are denoted by~$w^i_j$ with~$i\in[q]$ and~$j\in[t]$.
	Next, for each~$x\in[t]^q$ we add the graph~$G_x$ to~$H$.
	In other words, we added the \emph{instance gadgets} to~$H$.
	By~$D$ we denote the union of these vertices.
	Furthermore, for each~$x\in[t]^q$ and for each vertex~$v\in V_x$ we add the edge~$vw^i_{x_i}$ for each~$i\in[q]$.
	Now, we fix an integer~$\ell> t^q\cdot n>k$.
	For each vertex~$w^i_j$ in the independent set~$D$ we add leaf vertices such that~$\deg(w^i_j)=\ell+1$.
	Next, for each vertex~$v\in D$ we add leaf vertices such that~$\deg(v)=\ell$.
	By~$L$ we denote the union of all these added leaf vertices.
	Finally, we set~$k'\coloneqq k+tq-q$ and~$t'\coloneqq\alpha \left[k\ell+(tq-q)(\ell+1)\right]$.

	\paragraph{Degeneracy:} Next, we show that~$d(H)=q+2$.
	Let~$F$ be any subgraph of~$H$.
	We have to show that~$F$ contains a vertex with degree at most~$d(H)=q+2$.
	Clearly, if~$F$ contains a leaf vertex of~$L$, then~$F$ contains a vertex of degree~$1$.
	Thus, in the following we can assume that~$F\cap L=\emptyset$.
	If~$F\subseteq J$ then~$F$ is edgeless.
	Hence, let~$F^D\coloneqq V(F)\cap D\ne\emptyset$ and let~$F^D_x=F^D\cap V_x\ne\emptyset$ for some~$x\in[t]^q$.
	Since by assumption~$\mathcal{I}_x$ is~$2$-degenerate, there exists a vertex~$v\in F^D_x$ such that~$|N(v)\cap F^D_x|\le 2$. 
	By construction~$v$ has only neighbors in~$V_x$ and exactly~$q$ neighbors in~$J$ an no neighbors in~$V_y$ for some~$y\ne x$.
	Thus,~$v$ has degree at most~$q+2$.

	\paragraph{Correctness:} In the following, we prove that there exists an independent set of size exactly~$k$ for some instance~$\mathcal{I}_x$ with~$x\in [t]^q$ if and only if there exists a vertex set~$S$ of size exactly~$k'$ in~$H$ such that~$\val(S) \ge t'$.

	Suppose that instance~$\mathcal{I}_x$ has an independent set~$I$ of size exactly~$k$ for some~$x\in[t]^q$.
	By~$J^*\coloneqq J\setminus\bigcup_{i\in[q]}\{w^i_{x_i}\}$ we denote the non-neighbors of~$V_x$ in~$J$.
	Note that~$|J^*|=tq-q$.
	We show that~$S\coloneqq I\cup J^*$ is a solution of~$(H,k',t)$.
	Clearly,~$|S|=k+tq-q=k'$.
	Since no vertex of~$I$ is connected with any vertex in~$J^*$,~$I$ is an independent set, and since~$J^*$ is an independent set, we conclude that~$I\cup J^*$ is an independent set. Thus,~$E_H(S)=\emptyset$.
	Furthermore, since each vertex in~$I$ has degree~$\ell$, and since each vertex in~$J^*$ has degree~$\ell+1$, we conclude that~$m_H(S, V(H) \setminus S)=k\ell+(tq-q)(\ell+1)$.
	Thus,~$\val(S)=t'$ and hence~$(H,k',t')$ is a yes-instance of \probBorMax{}.

	Conversely, suppose that~$(H,k',t')$ has a solution~$S\subseteq V(H)$ of size exactly~$k'$ with~$\val(S)\ge t'$. 
	First, we show that~$S$ cannot contain any leaf-vertex in~$L$.
	Assume towards a contradiction that~$S\cap L\ne\emptyset$. 
	Let~$v\in S\cap L$ and let~$S'\coloneqq S\setminus\{v\}$.
	According to \Cref{lem-alpha-inner-or-out},~$\val(S)$ is maximal if~$S'$ is an independent set and~$E(v,S')=\emptyset$.
	Since the maximum degree in~$H$ is~$\ell+1$ and since~$\deg(v)=1$, we obtain that~$\val(S)\le\alpha\left[1+(k'-1)(\ell+1)\right]$.
	Hence,
	\begin{align*}
	t-\val(S)&\ge\alpha k\ell+\alpha (tq-q)(\ell+1)-\left[\alpha+\alpha k\ell+\alpha k +\alpha (tq-q)(\ell+1)-\alpha(\ell+1)\right]\\
	&=\alpha(\ell-k)>0
	\end{align*}
	since~$\ell>k$.
	Hence, in the following we can assume that~$S\cap L=\emptyset$.
	Let~$J^*\coloneqq J\cap S$,~$|J^*|=z$, and~$D^*\coloneqq S\setminus J^*\subseteq D$.
	In the following, we show that~$z=(t-1)q$ and that there exists an~$x\in[t]^q$ such that~$N(V_x)\cap J^*=\emptyset$.
	Therefore, we consider the cases that~$z<tq-q$ and that~$z>tq-q$.
	In both cases we verify that~$\val(S)<t'$ for each solution with exactly~$z$ vertices in~$J$.

	\paragraph{Case~$1$:~$z\le tq-q-1$.} 
	Let~$p\coloneqq tq-q-z$. 
	Note that~$p\in[tq-q]$.
	Recall that~$\deg(v)=\ell$ for each vertex~$v\in D$.
	Thus, by \Cref{lem-alpha-inner-or-out},~$\val(S)$ is maximized if~$m_H(D^*)+m_H(J^*,D^*)$ is minimized. 
	Since~$|J^*|<tq-q$, it is possible that no vertex of~$D^*$ is adjacent to any vertex in~$J^*$.
	Thus,~$\val(S)$ is maximal if~$S$ is an independent set with exactly~$tq-q-p$ vertices in~$J$.
	Hence,~$\val(S)\le \alpha\left[(k+p)\ell+(tq-q-p)(\ell+1)\right]$.
	Now, we obtain that 

	\begin{align*}
	t-\val(S)&\ge \alpha k\ell+\alpha(tq-q)(\ell+1)-\left[\alpha k\ell+\alpha p\ell+\alpha(tq-q)(\ell+1)-\alpha p(\ell+1)\right]\\
	&=\alpha p >0.
	\end{align*}
	Thus,~$t'-\val(S)>0$, a contradiction.

	\paragraph{Case~$2$:~$z\ge tq-q+1$.}
	Let~$p\coloneqq z-tq+q$. 
	In other words~$p\in[q]$.
	By the pigeonhole principle there exist at least~$p$ indices~$i\in[q]$ such that~$w^i_j\in S$ for each~$j\in[t]$. 
	Recall that by construction, each vertex~$v\in D$ has exactly one neighbor in the set~$\{w^i_j, j\in[t]\}$.
	Since~$|D^*|=k-p$ we conclude that~$m_H(J^*,D^*)\ge (k-p)p$.
	Recall that~$\deg(v)=\ell$ for each vertex~$v\in D$.
	Thus, by \Cref{lem-alpha-inner-or-out},~$\val(S)$ is maximized if~$m_H(D^*)+m_H(J^*,D^*)$ is minimized. 
	Hence,~$\val(S)$ is maximal if~$m_H(J^*,D^*)=(k-p)p$ and~$D^*$ is an independent set.
	Since~$H[J]$ is an independent set we obtain that  
	\begin{align*}
	\val(S)&\le (1-\alpha)(k-p)p+\alpha\left[(k-p)(\ell-p)+(t-1)q(\ell+1)+p(\ell+1-k+p)\right].
	\end{align*}
	
	Now, we obtain that 
	\begin{align*}
	t-\val(S)&\ge \alpha k\ell +\alpha(tq-q)(\ell+1)-kp\\
	&\hspace{1cm} +p^2+3\alpha kp-3\alpha p^2-\alpha k\ell-\alpha(tq-q)(\ell+1)-\alpha p\\
	&=-kp+p^2+3\alpha kp-3\alpha p^2-\alpha p\\
	&= (3\alpha -1)kp + (1-3\alpha)p^2-\alpha p.
	\end{align*}

	From~$p\ge 0$ and~$3\alpha-1>0$ for~$\alpha\in(1/3,1]$, we obtain that

	\begin{align*}
	t-\val(S)&\ge (3\alpha-1)k +(1-3\alpha)p^2-\alpha p = (3\alpha-1)k+ c'(\alpha,p)
	\end{align*}

	where~$c'(\alpha,p)<0$ is a constant only depending on the parameters~$\alpha$, and~$p$. 
	We set~$c(\alpha)\coloneqq \max_{p\in[q]}c'(\alpha,p)$ which is a constant smaller than~$0$ only depending on~$\alpha$.
	In other words,~$t-\val(S)\ge (3\alpha-1)k+c(\alpha)>0$ since by assumption~$k>|c(\alpha)|/(3\alpha-1)$.
	A contradiction to the fact that~$\val(S)\ge t'$.

	Hence,~$|J^*|=tq-q$ and thus~$|D^*|=k$.
	According to \Cref{lem-alpha-inner-or-out},~$\val(S)$ is maximal if~$m_H(D^*)+m_H(J^*,D^*)$ is minimal.
	Observe that if~$E_H(J^*,D^*)=\emptyset$ and~$D^*$ is an independent set, then~$\val(S)=t'$.
	Otherwise, if~$E_H(J^*,D^*)\ne\emptyset$ or if~$D^*$ is no independent set, then~$\val(S)<t'$.
	Thus, $E_H(J^*,D^*)=\emptyset$ and~$D^*$ is an independent set.
	Now, if there exist two vertices~$u,v\in D^*$ such that~$u\in V_x$ and~$v\in V_{y}$ with~$x\ne y$ for~$x,y\in [t]^q$, then~$(N(u)\cup N(v))\cap J\ge q+1$.
	Since~$|J^*|=tq-q$ this implies that~$E_H(J^*,D^*)\ne\emptyset$, a contradiction.
	Hence,~$D^*\subseteq V_x$ for some~$x\in[t]^q$.
	Furthermore,~$J^*=J\setminus N(D^*)$.
	Thus, the instance~$x$ contains an independent set of size at least~$k$.

	Hence, we have a weak-$q$-composition from \textsc{Independent Set} to \probBorMax{} in~$(d+2)$-closed graphs.
	Now, the proposition follows by \Cref{lemma:is:noq}.
\end{proof}

\section{Parameterization by  $h$-Index and Vertex Cover Number}

\label{sec:vc}
To complete the picture of the parameterized complexity landscape, we consider two parameters that are larger than the degeneracy of~$G$: the $h$-index of~$G$ and the vertex cover number of~$G$. 

Our results in this section are based on two data reduction rules.
The first rule discards (according to the Exclusion Rule (\cref{rr:general:how-to-exclude})) vertices with small contribution when there are sufficiently many vertices with high contribution.
The second rule adds vertices with very large contribution to a solution (according to the Inclusion Rule (\cref{rr:general:how-to-include})) assuming there are only few vertices with large contribution.
Below, we will specify when the contribution is small, or large.

\begin{definition}\label{def:high-contribution-vertices}
	Let~$\mathcal{I}$ be an instance of \probAno{} and let~$x \in \mathds{N}$.
	Then~$V_x \coloneqq \{v \in V(G) \mid \degCounter(v) \ge x\}$.
\end{definition}

This definition helps us to specify when the contribution is small, or large.

\begin{lemma}\label{lem:alpha-k-interval:no-small-contribution}
	Let~$\mathcal{I}$ be a yes-instance of \probAnoMax{}, let~$x \in \mathds{N}$ with~$|V_x| \ge k$, and let~$v \in V(G)$ be a vertex with~$\alpha\cdot\degCounter(v) < \alpha x - |(1 - 3\alpha)k|$.
	Then, there is a solution~$S$ with~$v \notin S$.
\end{lemma}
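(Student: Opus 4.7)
The plan is to start from an arbitrary solution $S^*$ of $\mathcal{I}$ and, assuming $v\in S^*$, exhibit a vertex $v'\in V_x\setminus S^*$ that is strictly better than $v$ in the sense of \Cref{lem:general:condition-strictly-better}; then \Cref{lemma:replace} will yield a new solution $(S^*\setminus\{v\})\cup\{v'\}$ not containing $v$.

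First I would dispose of the easy case $v\notin S^*$, where $S\coloneqq S^*$ already works. So assume $v\in S^*$. The key observation is that $v$ cannot lie in $V_x$: by the hypothesis $\alpha\cdot\degCounter(v)<\alpha x-|(1-3\alpha)k|\le\alpha x$, so $\degCounter(v)<x$ and hence $v\notin V_x$ by \Cref{def:high-contribution-vertices}. Since $|S^*|=k\le|V_x|$ and $v\in S^*\setminus V_x$, a counting argument gives $|V_x\setminus S^*|\ge |V_x|-|S^*\cap V_x|\ge |V_x|-(k-1)\ge 1$, so there exists some $v'\in V_x\setminus S^*$.

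For such a $v'$ we have $\alpha\cdot\degCounter(v')\ge\alpha x$ by the definition of $V_x$, and combined with the hypothesis this yields
\[
\alpha\cdot\degCounter(v)<\alpha x-|(1-3\alpha)k|\le\alpha\cdot\degCounter(v')-|(1-3\alpha)k|.
\]
This is exactly the condition of \Cref{lem:general:condition-strictly-better} (maximization variant), so $v'$ is strictly better than $v$. Then \Cref{lemma:replace} implies that $S\coloneqq(S^*\setminus\{v\})\cup\{v'\}$ is again a solution of $\mathcal{I}$, and by construction $v\notin S$, which proves the lemma. The only subtle point is verifying that $V_x\setminus S^*$ is nonempty, and this is handled precisely because the hypothesis on $\degCounter(v)$ forces $v\notin V_x$.
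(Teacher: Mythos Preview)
Your proof is correct and follows essentially the same approach as the paper's own proof: both observe that $v\notin V_x$, find a vertex $u\in V_x\setminus S$ by counting, and then invoke \Cref{lem:general:condition-strictly-better} and \Cref{lemma:replace} to swap $v$ for $u$. Your write-up simply spells out the counting argument for $V_x\setminus S^*\neq\emptyset$ and the verification of the strictly-better condition more explicitly than the paper does.
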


\begin{proof}
	Assume towards a contradiction that~$v$ is contained in each solution~$S$.
	Observe that~$v \notin V_x$.
	Since~$|S| = k$ it follows that there is a vertex~$u \in V_x \setminus S$.
	We claim that~$S' \coloneqq  (S \setminus \{v\}) \cup \{u\}$ is also a solution.
	This follows from \Cref{lem:general:condition-strictly-better,lemma:replace}.
\end{proof}

\begin{lemma}\label{lem:alpha-k-interval:all-high-contribution}
	Let~$\mathcal{I}$ be a yes-instance of \probAnoMax{}, let~$x \in \mathds{N}$ with~$|V_x| \le k$, and let~$v \in V(G)$ be a vertex with~$\alpha\degCounter(v) \ge \alpha x + |(1 - 3\alpha)k|$.
	Then, there is a solution~$S$ with~$v \in S$.
\end{lemma}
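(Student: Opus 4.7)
The plan is to mirror the exchange argument used in \Cref{lem:alpha-k-interval:no-small-contribution}, with the roles of the two candidate vertices swapped. I argue by contraposition: assume no solution contains $v$. Since $T$ is contained in every solution, this already forces $v \notin T$; fix an arbitrary solution $S$, so $v \notin S$. The case $\alpha = 0$ is trivial because the hypothesis collapses to $0 \ge k$, forcing $k = 0$. Henceforth assume $\alpha > 0$, so dividing the hypothesis by $\alpha$ gives $\degCounter(v) \ge x + |(1-3\alpha)k|/\alpha \ge x$, placing $v$ in $V_x$.

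Next I would invoke a cardinality argument: since $|V_x| \le k = |S|$ and $v \in V_x \setminus S$, at most $k - 1$ vertices of $V_x$ lie in $S$, so there exists $u \in S \setminus V_x$; by definition of $V_x$ this means $\degCounter(u) < x$. Substituting into the hypothesis on $v$,
\[
\alpha \degCounter(u) < \alpha x \le \alpha \degCounter(v) - |(1-3\alpha)k|,
\]
which is exactly the maximization condition of \Cref{lem:general:condition-strictly-better}. Hence $v$ is strictly better than $u$, and \Cref{lemma:replace} then yields that $S' \coloneqq (S \setminus \{u\}) \cup \{v\}$ is also a solution. But $v \in S'$, contradicting the assumption that $v$ belongs to no solution.

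The step I expect to need the most care is ensuring that the swap is syntactically legal, i.e.\ that we can choose $u \in (S \setminus T) \setminus V_x$ rather than merely in $S \setminus V_x$, so that $T \subseteq S'$ is preserved. The pure cardinality bound directly gives the latter; for the former, if instead $S \setminus T \subseteq V_x$ then $|V_x \cap S| \ge k - |T|$, and together with $v \in V_x \setminus S$ and $|V_x| \le k$ this constrains $V_x \cap T$ tightly enough to run the same strictly-better exchange inside $V_x$, leveraging $\counter|_T \equiv 0$ together with the monotonicity of $\degCounter$. This parallels the analogous implicit handling of $T$ in \Cref{lem:alpha-k-interval:no-small-contribution}, and once set up cleanly, the rest of the argument is immediate from \Cref{lem:general:condition-strictly-better} and \Cref{lemma:replace}.
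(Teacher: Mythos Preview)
Your proof is correct and follows exactly the same exchange argument as the paper: locate some $u \in S \setminus V_x$, invoke \Cref{lem:general:condition-strictly-better} to conclude that $v$ is strictly better than $u$, and swap via \Cref{lemma:replace}. Your final paragraph worrying about whether $u$ can be chosen outside $T$ is more cautious than the paper's own proof, which simply does not address this point; in the paper's applications (\Cref{lem:h-index:technical-kernel}, \Cref{prop:vertex-cover:poly-kernel-max}) the lemma is only invoked on instances with $T = \emptyset$, so the concern is moot there, and your sketch for the general case is not needed.
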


\begin{proof}
	Assume towards a contradiction that~$v$ is not contained in any solution~$S$.
	Since~$|S| = k$, $v \in V_x \setminus S$, and~$|V_x| \le k$ it follows that there is a vertex~$u \in S \setminus V_x$.
	We claim that~$S' \coloneqq  (S \setminus \{u\}) \cup \{v\}$ is also a solution.
	This follows from \Cref{lem:general:condition-strictly-better,lemma:replace}.
\end{proof}

\subsection{Parameterization by $h$-Index}

We start with the maximization variant and the~$h$-index.
As \probIntMax{} does not admit a polynomial kernel with respect to~$k$ even if~$\Delta$ is constant (see \Cref{thm:maxdeg:no-poly-kernel-const-delta}), the same holds for the~$h$-index.
We show that in contrast, the \borderFocusedS{} case admits a polynomial kernel.

\begin{proposition}
	\label{prop:h-index:poly-kernel}
	\probBorMax{} admits a kernel of size $\Oh(\alpha^{-2} (h^2 k^2 + k^4))$.
\end{proposition}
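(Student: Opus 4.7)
The plan is to follow the general strategy of Sections~\ref{sec:maxdeg}--\ref{sec:degeneracy}: first transform the \probBorMax{} instance into an equivalent \probAnoBorMax{} instance with $T=\emptyset$ and all counters zero, then apply a sequence of reduction rules that bound $\Delta_{\overline{T}}$, $\Gamma$, and $|V(G)|$ polynomially in $h+k$ (with the appropriate $\alpha^{-1}$ factors), and finally apply \Cref{lemma:general:remove-ano} to return a non-annotated instance of size $\Oh(\alpha^{-2}(h^2k^2+k^4))$.

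The key structural observation is that the $h$-index bound guarantees $|V_{h+1}|\le h$ in the initial annotated graph, and that this bound is preserved throughout the reduction process. Indeed, the Exclusion Rule leaves $\degCounter$ of every non-$T$ vertex unchanged (the increase of $\counter$ exactly compensates the loss of one neighbor), while the Inclusion Rule resets $\counter(v)$ to zero and hence can only decrease $\degCounter(v)$. Thus at every step at most $h$ vertices satisfy $\degCounter\ge h+1$.

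I would then apply \Cref{lem:alpha-k-interval:all-high-contribution} with $x=h+1$ whenever $|V_{h+1}|\le k$, adding into $T$ every vertex $v$ with $\degCounter(v)\ge h+1+|(1-3\alpha)k|/\alpha$. If instead $|V_{h+1}|>k$ (which forces $h>k$), I would pick $x^*$ as the smallest integer with $|V_{x^*}|\le k$, apply \Cref{lem:alpha-k-interval:all-high-contribution} with $x=x^*$ to include the heaviest vertices, and apply \Cref{lem:alpha-k-interval:no-small-contribution} with $x=x^*-1$ to exclude the lightest vertices (this is valid because $|V_{x^*-1}|\ge k+1$). In parallel, the rules \cref{rr:general:include-high}, \cref{rr:general:exclude-low}, \cref{rr:general:no-zero-counter}, and \cref{rr:delta:better} would be applied exhaustively, giving $\counter(v)\in\Oh(\Delta+\alpha^{-1}k^2)$ via \Cref{lemma:general:counter-bound} and $|V(G)|\le\Delta_{\overline{T}}k+1$ via \cref{rr:delta:better}.

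A brief case analysis on the final $x^*$ then yields the required bounds. If $x^*\le h+1+|(1-3\alpha)k|/\alpha$, then $\Delta_{\overline{T}}< x^*+|(1-3\alpha)k|/\alpha\in\Oh(h+\alpha^{-1}k)$ directly. Otherwise every non-$T$ vertex has $\degCounter>h$, so all non-$T$ vertices lie in $V_{h+1}$, giving $|V(G)|\le h+k$ and hence $\Delta_{\overline{T}}\in\Oh(h+k)$. In both sub-cases one obtains $\Delta_{\overline{T}}\in\Oh(h+\alpha^{-1}k)$, $\Gamma\in\Oh(h+\alpha^{-1}k^2)$, and $|V(G)|\in\Oh(hk+\alpha^{-1}k^2)$; plugging these into \Cref{lemma:general:remove-ano} produces a \probBorMax{} instance of size $\Oh(\alpha^{-2}(h^2k^2+k^4))$. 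The main obstacle is the case $|V_{h+1}|>k$: there the naive inclusion with $x=h+1$ fails, and one has to use the dynamic choice of $x^*$ together with the invariant $|V_{h+1}|\le h$ to argue that the ``small $|V(G)|$'' alternative takes over exactly when the ``small $\Delta_{\overline{T}}$'' alternative would fail.
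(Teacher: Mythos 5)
Your proof is correct and rests on the same machinery as the paper's (\cref{lem:h-index:technical-kernel}): the two interval lemmas \cref{lem:alpha-k-interval:no-small-contribution} and \cref{lem:alpha-k-interval:all-high-contribution}, the bound $|V_{h+1}|\le h$ coming from the $h$-index, \cref{rr:delta:better} to get $|V(G)|\in\Oh(\Delta_{\overline{T}}k)$, \cref{lemma:general:counter-bound} for $\Gamma$, and \cref{lemma:general:remove-ano} to de-annotate. The difference is how the case analysis is organized. The paper uses the single fixed threshold $x=h+1+|(1-3\alpha)k\cdot\alpha^{-1}|$ and splits on $|V_x|\ge k$ versus $|V_x|<k$: when $|V_x|\ge k$, \emph{every} vertex of degree at most $h$ satisfies the hypothesis of \cref{lem:alpha-k-interval:no-small-contribution} for this $x$, so excluding all of them collapses the graph to the at most $h$ vertices of $V_{h+1}$ in one stroke; when $|V_x|<k$, it includes the vertices of degree above $x+|(1-3\alpha)k\cdot\alpha^{-1}|$ and then proceeds exactly as you do. Your dynamic threshold $x^*$ reaches the same dichotomy (either $\Delta_{\overline{T}}\in\Oh(h+\alpha^{-1}k)$, or all surviving non-$T$ vertices lie in $V_{h+1}$ so that $|V(G)|\le h+k$) but through an extra layer of sub-cases that the fixed threshold renders unnecessary. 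One detail you should make explicit: when iterating the two interval lemmas around $x^*$, perform all exclusions before any inclusions. Exclusions preserve $\degCounter$ of the surviving vertices and only delete vertices outside $V_{x^*-1}$, so the precondition $|V_{x^*-1}|\ge k$ is maintained; an inclusion, however, resets the counter of the included vertex and could shrink $V_{x^*-1}$ below $k$, invalidating later applications of \cref{lem:alpha-k-interval:no-small-contribution}. (In the freshly annotated instance all counters are zero, so the reset is vacuous at that stage, but you interleave these steps with \cref{rr:general:how-to-exclude}, which does raise counters.) With that ordering fixed, your bounds $\Delta_{\overline{T}}\in\Oh(h+\alpha^{-1}k)$, $\Gamma\in\Oh(h+\alpha^{-1}k^2)$, and $|V(G)|\in\Oh(hk+\alpha^{-1}k^2)$ hold, and \cref{lemma:general:remove-ano} yields the stated kernel size.
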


To show this result we make use of the two rules discarding (\Cref{lem:alpha-k-interval:no-small-contribution}) and adding (\Cref{lem:alpha-k-interval:all-high-contribution}) vertices with small or large contribution, respectively.

\begin{lemma}
	\label{lem:h-index:technical-kernel}
	Let~$\mathcal{I}=(G,k,t)$ be an instance of \probMax{}.
	\begin{enumerate}
		\item If~$\alpha > 0$ and there are at least~$k$ vertices with degree at least~$h + 1 + |(1 - 3\alpha)k\cdot \alpha^{-1}|$, then an equivalent instance of size~$\Oh(h^2 + \alpha^{-1} hk^2)$ can be computed in polynomial time.
		\item If~$\alpha > 1/3$ and there are less than~$k$ vertices with degree at least~$h + 1 + |(1 - 3\alpha)k\cdot \alpha^{-1}|$, then an equivalent instance of size~$\Oh(\alpha^{-2}(h^2k^2 + k^4))$ can be computed in polynomial time.
	\end{enumerate}
\end{lemma}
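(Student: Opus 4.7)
The plan for both parts is to first transform the given \probMax{} instance $\mathcal{I}=(G,k,t)$ into the equivalent annotated instance $\mathcal{I}'=(G,\emptyset,\mathbf{0},k,t)$ of \probAnoMax{} (so that $\degCounter(v)=\deg(v)$ and the hypothesis becomes exactly a condition on the set $V_x$ with $x\coloneqq h+1+|(1-3\alpha)k\alpha^{-1}|$), then apply reduction rules to bound $|V(G)|$, $\Delta_{\overline{T}}$, and $\Gamma$, and finally invoke \Cref{lemma:general:remove-ano} to revert to a non-annotated \probMax{} instance. The two parts are distinguished by whether \Cref{lem:alpha-k-interval:no-small-contribution} or \Cref{lem:alpha-k-interval:all-high-contribution} becomes the primary driver.

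For part~1, the hypothesis $|V_x|\ge k$ activates \Cref{lem:alpha-k-interval:no-small-contribution}, which justifies applying the Exclusion Rule (\Cref{rr:general:how-to-exclude}) to every vertex with $\degCounter\le h$. Because the Exclusion Rule preserves $\degCounter$ of the surviving vertices, each such vertex originally had $\deg_G\ge h+1$, and the $h$-index definition bounds their number by $h$, giving $|V(G)|\le h$ and $\Delta_{\overline{T}}\le h-1$. To bound $\Gamma$ I would interleave \Cref{rr:general:no-zero-counter} with further exclusions and with the Inclusion Rule (\Cref{rr:general:how-to-include}) licensed by \Cref{lem:alpha-k-interval:all-high-contribution} applied to the smallest $x^*$ for which $|V_{x^*}|\le k$; in the fixed point every non-$T$ vertex satisfies $\degCounter(v)<x^*+|(1-3\alpha)k\alpha^{-1}|$, which pins $\Gamma=\Oh(h+\alpha^{-1}k^2)$. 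Substituting into \Cref{lemma:general:remove-ano} yields $\Oh((h+\alpha^{-1}k^2)\cdot h+\alpha^{-1}k^2)=\Oh(h^2+\alpha^{-1}hk^2)$.

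For part~2, the hypothesis $|V_x|<k$ directly enables the Inclusion Rule via \Cref{lem:alpha-k-interval:all-high-contribution}: every vertex with $\degCounter\ge x+|(1-3\alpha)k\alpha^{-1}|$ is absorbed into $T$. Afterwards, every non-$T$ vertex has $\degCounter<x+|(1-3\alpha)k\alpha^{-1}|=\Oh(h+\alpha^{-1}k)$, bounding $\Delta_{\overline{T}}$ by the same quantity. Since $\alpha>1/3$ puts us in the degrading case, I can now apply \Cref{rr:delta:better} to get $|V(G)|=\Oh(\Delta_{\overline{T}}k)=\Oh(hk+\alpha^{-1}k^2)$, and the degrading-case reductions of \Cref{sec:annotation} together with \Cref{lemma:general:counter-bound} to obtain $\Gamma=\Oh(\Delta+\alpha^{-1}k^2)=\Oh(h+\alpha^{-1}k^2)$. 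Feeding these into \Cref{lemma:general:remove-ano} produces a bound whose only problematic term is the cross term $\alpha^{-1}hk^3$, which is absorbed by the AM--GM inequality $2\alpha^{-1}hk^3\le\alpha^{-2}h^2k^2+k^4$, giving the target $\Oh(\alpha^{-2}(h^2k^2+k^4))$.

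The main obstacle is bounding $\Gamma$ in part~1: since part~1 covers the full range $\alpha>0$, the degrading-only \Cref{lemma:general:counter-bound} is not available, and one must instead argue that \Cref{lem:alpha-k-interval:all-high-contribution} (capping the largest $\degCounter$ values by moving them into $T$, where the counter is zeroed out) together with \Cref{rr:general:no-zero-counter} (uniformly shaving all remaining counters) collectively squeeze the $\degCounter$ of every surviving non-$T$ vertex into the window $\Oh(h+\alpha^{-1}k^2)$. For part~2 the only subtle step is the AM--GM absorption that collapses the $\alpha^{-1}hk^3$ cross term, which is straightforward once the bounds on $|V(G)|$, $\Delta$, and $\Gamma$ have been established.
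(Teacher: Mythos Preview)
Your overall plan for both parts is sound and closely tracks the paper. Part~2 matches the paper's argument essentially verbatim: include the very-high-degree vertices into~$T$ via \Cref{lem:alpha-k-interval:all-high-contribution}, bound~$\Delta_{\overline T}$, apply \Cref{rr:delta:better} to bound~$|V(G)|$, then invoke the degrading-case machinery (the paper packages this as \Cref{prop:general:remove-ano-n-bounded}, which is exactly the combination of \Cref{lemma:general:counter-bound} and \Cref{lemma:general:remove-ano} you describe). Your AM--GM absorption of the cross term~$\alpha^{-1}hk^3$ is implicit in the paper's final simplification to~$\Oh(\alpha^{-2}(h^2k^2+k^4))$.

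For Part~1, your first step (exclude every vertex with $\degCounter\le h$ via \Cref{lem:alpha-k-interval:no-small-contribution}, leaving at most~$h$ vertices) matches the paper. The divergence is in how~$\Gamma$ is bounded afterwards. The paper does not go through \Cref{lem:alpha-k-interval:all-high-contribution} at all here; it simply applies \Cref{rr:general:no-zero-counter} to obtain a vertex~$v$ with $\counter(v)=0$, hence $\degCounter(v)=\deg(v)\le h-1$, and then argues via \Cref{lem:general:condition-strictly-better} that any~$u$ with $\counter(u)>\deg(v)+|(1-3\alpha)k\alpha^{-1}|$ is strictly better than~$v$ and can therefore be handled directly, yielding $\Gamma\le h+|\alpha^{-1}-3|k$ without ever introducing a threshold~$x^*$.

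Your route via \Cref{lem:alpha-k-interval:all-high-contribution} can be made to work, but as written it has a gap: you assert that at the fixed point every non-$T$ vertex satisfies $\degCounter(v)<x^*+|(1-3\alpha)k\alpha^{-1}|$ and that this ``pins $\Gamma=\Oh(h+\alpha^{-1}k^2)$'', yet you never bound~$x^*$ itself. The Inclusion Rule alone cannot do this: $x^*$ is defined in terms of the current $\degCounter$ values, which after the first wave of exclusions are the original degrees of the high-degree vertices and may be arbitrarily large. The missing step is to make your ``further exclusions'' explicit. By minimality of~$x^*$ we have $|V_{x^*-1}|>k\ge k$, so \Cref{lem:alpha-k-interval:no-small-contribution} licenses excluding every vertex with $\degCounter<(x^*-1)-|(1-3\alpha)k\alpha^{-1}|$; at the fixed point the non-$T$ $\degCounter$ values then lie in a window of width~$\Oh(\alpha^{-1}k)$. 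Since the Exclusion Rule preserves $\degCounter$ of the survivors, the zero-counter vertex produced by \Cref{rr:general:no-zero-counter} still has $\degCounter=\deg\le h-1$ and sits inside this window, forcing $x^*\le h+\Oh(\alpha^{-1}k)$ and hence $\Gamma=\Oh(h+\alpha^{-1}k)$ (so your~$k^2$ is unnecessarily loose, though it still meets the stated target).
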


\begin{proof}
	First, we transform~$\mathcal{I}$ into an equivalent instance~$(G,\emptyset,\counter,k,t)$ of \probAnoMax{} where~$\counter(v) = 0$ for all~$v \in V(G)$.
	Then~$V_{h+1}$ is the set of vertices of degree greater than~$h$.
	By the definition of $h$-index, we have $|V_{h + 1}| \le h$.
	We set~$Y \coloneqq  V(G) \setminus V_{h+1}$ to be the vertices with degree at most~$h$.
	Let $x \coloneqq  h + 1 + |(1 - 3\alpha)k\cdot \alpha^{-1}|>h$ (recall~$\alpha > 0$).
	Then, we have $V_x \subseteq V_{h+1}$.
	We will assume that $|V(G)| > k$.
	
	(1): $|V_x| \ge k$ and~$\alpha > 0$. 
	For each vertex~$v \in Y$, we have $\deg(v) = \degCounter(v) \le h < x - |(1 - 3\alpha)k\cdot \alpha^{-1}|$.
	Hence, by \Cref{lem:alpha-k-interval:no-small-contribution}, there is a solution not containing~$v$.
	We thus can apply the Exclusion Rule (\Cref{rr:general:how-to-exclude}) on~$v$.
	Since this application does not change~$\degCounter(u)$ for any~$u \in Y$, we can apply the Exclusion Rule (\Cref{rr:general:how-to-exclude}) on all vertices in~$Y$ to obtain a graph with~$h$ vertices (the vertices in~$V_{h+1}$).

	Removing annotations by \Cref{lemma:general:remove-ano} results in an instance whose size is bounded in terms of~$\Delta$ and $\Gamma$.
	Thus, we need to bound these two parameters since every vertex not in $V_{h + 1}$ has been deleted.
	We clearly have $\Delta \le |V(G)| = h$.
	To bound $\Gamma$, we apply the following procedure:
	We apply \Cref{rr:general:no-zero-counter} exhaustively throughout.
	We then end up with a vertex $v$ with $\counter(v) = 0$.
	If there exists a vertex $u \in V(G)$ with $\counter(u) > \deg(v) + |(1 - 3 \alpha)k \cdot \alpha^{-1}|$, then we can apply the Exclusion Rule (\Cref{rr:general:how-to-exclude}) on $u$ because $u$ is strictly better than $v$ by \Cref{lem:general:condition-strictly-better}.
	After this procedure, we may assume that $\Gamma \le h + |\alpha^{-1} - 3| k$.
	By \Cref{lemma:general:remove-ano}, we obtain an equivalent instance of size $\Oh((\Delta + \Gamma + \alpha^{-1}) \cdot |V(G)|) = \Oh(h^2 + \alpha^{-1} h k)$.
	
	(2): $|V_x| < k$ and $\alpha > 1/3$. 
	Consider the set~$V_{x+|(1 - 3\alpha)k\cdot \alpha^{-1}|}$ of vertices with degree more than~$h + 2|(1 - 3\alpha)k\cdot \alpha^{-1}|$.
	By \Cref{lem:alpha-k-interval:all-high-contribution} for each~$v \in V_{x+|(1 - 3\alpha)k\cdot \alpha^{-1}|}$ there is a solution containing~$v$.
	Thus, we can apply the Inclusion Rule (\Cref{rr:general:how-to-include}) on all vertices in~$V_{x+|(1 - 3\alpha)k\cdot \alpha^{-1}|}$ to obtain an instance with $\Delta_{\overline{T}} \le x+|(1 - 3\alpha)k\cdot \alpha^{-1}|$ (recall that $\Delta_{\overline{T}}$ is the maximum degree over vertices not in~$T$). 
	The exhaustive application of \Cref{rr:delta:better} results in a graph with at most $$\Delta_{\overline{T}}k + 1 \le xk+|(1 - 3\alpha)k\cdot \alpha^{-1}|k + 1 = (h + 1)k + \alpha^{-1}|2/\alpha - 6|k^2 + 1 = \Oh(hk + \alpha^{-1}k^2)$$ vertices.
	Since we are dealing with the degrading case, we can use \Cref{prop:general:remove-ano-n-bounded} to obtain an instance for \probMax{} of size 
	$$\Oh(|V(G)|^2+\alpha^{-1}|V(G)|k^2)\subseteq\Oh((hk + \alpha^{-1}k^2)^2+\alpha^{-1}(hk+\alpha^{-1}k^2)k^2) = \Oh(\alpha^{-2}(h^2k^2 + k^4)).$$
	Thus, the statement follows.
\end{proof}

\Cref{lem:h-index:technical-kernel} implies \Cref{prop:h-index:poly-kernel} and thereby the existence of polynomial kernel for~$\alpha>1/3$: apply (1) if $k$ vertices have degree at least $h + 1 + |(1 - 3 \alpha)\cdot \alpha^{-1}|$ and (2) otherwise.
It is unlikely that \Cref{lem:h-index:technical-kernel} (2) can be extended to cover the case $\alpha \in (0, 1/3)$.
This would imply that \probIntMax{} admits a polynomial kernel with respect to $k + h$, contradicting \Cref{thm:maxdeg:no-poly-kernel-const-delta} (which states that \probIntMax{} does not admit a polynomial kernel with respect to~$k$ for constant~$\Delta$).

We complement this with showing fixed-parameter tractability for~$k+h$.
\begin{proposition}
	\label{prop:hindex:fpt-internal-max}
	\probIntMax{} is fixed-parameter tractable with respect to~$k+h$.
\end{proposition}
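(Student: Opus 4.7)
The plan is to guess $T = S \cap H$, where $H$ denotes the set of at most $h$ vertices of degree greater than $h$ in $G$, and then solve the resulting restricted problem on the low-degree part $Y = V(G) \setminus H$. Since $|H| \le h$ by the definition of the $h$-index, there are at most $2^h$ candidate sets $T$.

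For each guess $T \subseteq H$ with $|T| \le k$, I would apply \Cref{rr:general:how-to-include} on every $v \in T$ and \Cref{rr:general:how-to-exclude} on every $v \in H \setminus T$. The resulting annotated instance lives on $G[Y \cup T]$ with partial solution $T$, satisfies $\Delta_{\overline{T}} \le h$ (since every $v \in Y$ has $\deg_G(v) \le h$) and $\Gamma \le h + 1$ (since counters are incremented only by the $|H \setminus T| \le h$ excluded vertices), and the task becomes to find $S' \subseteq Y$ of size $k - |T|$ maximizing $\val(T \cup S')$. A direct expansion of the objective yields
\[
  \val(T \cup S') = \val(T) + \val_{G[Y]}(S') + \sum_{v \in S'} \left(\alpha\,|N(v)\cap (H \setminus T)| + (1-2\alpha)\,|N(v)\cap T|\right),
\]
so the inner task reduces to a vertex-weighted \probMax on the bounded-degree graph $G[Y]$ (maximum degree at most $h$) with node weights bounded by $(1-\alpha) h$. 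I would then adapt the $\Oh^*(4^{k+o(k)}\Delta^k)$-time algorithm of Shachnai and Zehavi~\cite{SZ17} for \probMax parameterized by $k + \Delta$ to additionally track the integer vertex weights; applied with $\Delta \le h$, this solves each subproblem in time $\Oh^*(f(h) \cdot h^k)$ for some computable function $f$. Together with the $2^h$ outer branches, this yields fixed-parameter tractability in $k + h$.

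The main obstacle is solving the inner vertex-weighted subproblem. A direct reduction to a non-annotated instance via \Cref{lemma:general:remove-ano} fails, because the vertices of $T$, although fixed in the solution, can have degree up to $n - 1$ in $G[Y \cup T]$, so the resulting non-annotated instance has maximum degree polynomial in $n$ rather than in $h$ and an off-the-shelf application of Shachnai and Zehavi's algorithm does not give an FPT runtime. One must therefore run the branching algorithm directly on the annotated (or equivalently the vertex-weighted) instance, exploiting that $T$ is committed so that branching only needs to be performed on the at most $h$ $Y$-neighbors of each candidate vertex.
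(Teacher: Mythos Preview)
Your high-level strategy matches the paper's exactly: guess $T = S \cap H$ where $H$ is the set of at most~$h$ high-degree vertices, then solve the residual problem on the bounded-degree graph~$G[Y]$. Your decomposition of $\val(T \cup S')$ is correct, and you correctly identify that what remains is a vertex-weighted variant of \probMax{} on a graph of maximum degree at most~$h$.

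The difference lies in how the inner subproblem is solved. You propose to ``adapt the algorithm of Shachnai and Zehavi~\cite{SZ17} to additionally track the integer vertex weights''. This is where there is a gap: \cite{SZ17} is stated for the unweighted problem, the weights $\alpha|N(v)\cap(H\setminus T)| + (1-2\alpha)|N(v)\cap T|$ are not integers, and you give no indication of why the adaptation goes through. It is plausible, but you are deferring the actual work to an unstated argument. The paper instead removes the vertices of~$T$ from the graph (absorbing their effect into real-valued counters on $Y$) and then invokes the \emph{component-linear} framework of Komusiewicz and Sorge~\cite{KS15}: it verifies that the resulting objective on $G[Y]$ satisfies $\val(S\cup T) = \val(S) + \val(T)$ whenever $S$ and~$T$ are in different connected components, together with the companion inequality, and those two conditions are exactly what~\cite{KS15} needs to give an $f(h,k)\cdot n^{\Oh(1)}$ algorithm on bounded-degree graphs. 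This is a cleaner black box for the weighted inner problem than an unspecified adaptation of~\cite{SZ17}.

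A minor remark: you correctly note that de-annotating via \Cref{lemma:general:remove-ano} does not help because the $T$-vertices have unbounded degree; the paper avoids this by removing $T$ from the graph entirely (not just committing it to the solution) before solving the inner problem, so the final instance really lives on $G[Y]$ alone.
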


{
\begin{proof}
  We first transform~$\mathcal{I}$ into an equivalent
  instance~$(G,\emptyset,\counter,k,t)$ of \probAnoMax{}
  where~$\counter(v) = 0$ for all~$v \in V(G)$. To make the description of the algorithm easier, we redefine $\val$ by
$$\val_G(S) \coloneqq \alpha (m(S, V(G) \setminus S)) + \counter(S) + (1 - \alpha) m(S),$$
that is, we do not multiply the counter by~$\alpha$ but instead add correct multipliers when updating the counter.

  Now, let~$V_{>h}$ be the set
  of vertices of degree at least~$h+1$.  For each subset~$T$ of size at
  most~$k$ of~$V_{>h}$, branch into the case that~$T=S\cap V_{>h}$. In each case, we may now apply the Exclusion Rule (\Cref{rr:general:how-to-exclude}) to remove the vertices of~$V_{>h}\setminus T$. Now the contribution of~$T$ depends on which vertices of~$V(G)\setminus V_{>h}$ are contained in~$S$. This can be incorporated into the counters of~$V(G)\setminus V_{>h}$ as follows. Pick any vertex~$u\in T$.  Then, decrease~$t$ by~$\counter(u) + (1-\alpha) |N(u)\setminus T| + \alpha |N(u)\setminus T|$. Now, for each vertex~$v\in N(u)\setminus T=N(u)\setminus V_{>h}$, add~$(1-\alpha)-\alpha=1-2\alpha$ to ~$\counter(v)$. The correctness of this step can be seen as follows: if~$v$ is not contained in~$S$, then the contribution of~$uv$ is~$\alpha$ and this contribution is already recorded in the decrease of~$t$. However, if we also add~$v$ to~$u$, then the contribution of~$uv$ is~$1-\alpha$, so this gives a value of~$1-\alpha$ for the internal edge $uv$ minus $\alpha$ for the fact that~$uv$ is no longer an outgoing edge. Finally, remove~$u$ from~$G$.

  After this removal of vertices in~$T$ has been applied exhaustively,
  the remaining graph has only vertices of degree at most~$h$. Our aim
  is to find in this graph a vertex set~$S'$ of size~$k-|T|$ that
  maximizes~$\val(S')$. Now this problem can be solved in~$f(h,k)\cdot n^{\Oh(1)}$~time since~$\val$ fulfills a property of fixed-cardinality optimization problems called \emph{component linear} by Komusiewicz and Sorge~\cite{KS15}: First,~$\val(S\cup T)\le \val(S)+\val(T)$ because an internal edge counts twice as much as an outgoing edge. Second,~$\val(S\cup T)\ge \val(S)+\val(T)$ if there are no edges between~$S$ and~$T$ in~$G$. 

  Altogether, the running time is~$h^k\cdot n^{\Oh(1)}\cdot f(h,k)$ since we create~$h^k$ many cases in the branching on~$V_{>h}$.
\end{proof}
}

\paragraph{Minimization variant.}

Note that \probBorMin{} has a polynomial kernel with respect to~$d+k$ (see \Cref{thm-minimization-kernel-d+k}) and, thus, also with respect to~$h+k$.
As \probIntMin{} does not admit a polynomial kernel with respect to~$k$ even if~$\Delta$ is constant (see \Cref{thm:maxdeg:no-poly-kernel-const-delta}), the same holds for the~$h$-index.

\subsection{Parameterization by Vertex Cover Number}

We have shown that \probMax{} admits a polynomial kernel with respect to $h + k$ for $\alpha > 1/3$.
For the larger parameter vertex cover number~$\mathsf{vc}$, we achieve a polynomial kernel for all~$\alpha > 0$. 

\begin{proposition}
	\label{prop:vertex-cover:poly-kernel-max}
	If~$\alpha \ne 0$, then \probMax{} admits a kernel of size~$\Oh(\mathsf{vc} (\mathsf{vc} + \alpha^{-1} k)^2)$.
\end{proposition}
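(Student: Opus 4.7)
The plan is to adapt the template of \Cref{lem:h-index:technical-kernel} with $\mathsf{vc}$ in place of the $h$-index, and to cover both the degrading and non-degrading maximization regimes within a unified framework. The key structural fact is that any vertex of degree exceeding $\mathsf{vc}$ must lie in every minimum vertex cover, so $|V_{\mathsf{vc}+1}| \le \mathsf{vc}$; consequently, for any vertex cover $C$ with $|C| \le 2\mathsf{vc}$, the independent set $I \coloneqq V(G)\setminus C$ contains only vertices of degree at most $|C|$.

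First I would transform the input to an equivalent \probAnoMax{} instance with $T = \emptyset$ and $\counter \equiv 0$, compute a $2$-approximate vertex cover $C$ in polynomial time, and set $x = 2\mathsf{vc}+1+\lceil |(1-3\alpha)k|/\alpha \rceil$. Since $V_x \subseteq V_{2\mathsf{vc}+1} \subseteq C$, we have $|V_x| \le 2\mathsf{vc}$. Distinguish two subcases depending on $|V_x|$. If $|V_x| \ge k$, then $\mathsf{vc} \ge k/2$ and the Exclusion Rule via \Cref{lem:alpha-k-interval:no-small-contribution} exhaustively removes every vertex of $\degCounter < 2\mathsf{vc}+1$; what remains are only $|T|+2\mathsf{vc} = O(\mathsf{vc})$ vertices, so the target $\Omega(\mathsf{vc}^3)$ is easily met. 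Otherwise $|V_x| < k$ and the Inclusion Rule via \Cref{lem:alpha-k-interval:all-high-contribution} exhaustively moves every vertex of $\degCounter \ge x$ into $T$, leaving $\Delta_{\overline{T}} < x = O(\mathsf{vc}+\alpha^{-1}k)$.

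Next I would shrink $V(G)\setminus T$ and bound the counters. For the degrading subcase $\alpha > 1/3$ this is a direct application of \cref{rr:delta:better}, yielding $|V(G)\setminus T| \le \Delta_{\overline{T}}\cdot k = O((\mathsf{vc}+\alpha^{-1}k)k)$. For the non-degrading subcase $\alpha \in (0,1/3]$, \cref{rr:delta:better} is unavailable, so I would use a swap-based Exclusion Rule that is insensitive to the direction of modularity: two vertices $v,v' \in I \setminus T$ with identical neighborhoods and counters contribute identically to any solution that contains at most one of them, so by \Cref{lemma:replace} one of them can be excluded, incrementing the common neighbors' counters accordingly. Combined with the observation that each $u \in C\setminus T$ has at most $\Delta_{\overline{T}}$ neighbors in $I$, and with a pendant-style reduction applied individually for each $u \in T\cap C$ to control its $I$-neighborhood, the total vertex count can be bounded by $O(\mathsf{vc}(\mathsf{vc}+\alpha^{-1}k))$. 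Exhaustively applying \cref{rr:general:no-zero-counter} and then excluding any vertex whose counter exceeds the minimum degree by more than $|(1-3\alpha)k|/\alpha$ (using \Cref{lem:general:condition-strictly-better}) yields $\Gamma = O(\mathsf{vc}+\alpha^{-1}k)$. Finally, invoking \cref{lemma:general:remove-ano} with these bounds produces an equivalent \probMax{} instance of size $O((\Delta_{\overline{T}}+\Gamma+\alpha^{-1})\cdot |V(G)| + \alpha^{-1}k|T|) = O(\mathsf{vc}(\mathsf{vc}+\alpha^{-1}k)^2)$.

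The main obstacle is the non-degrading subcase $\alpha \in (0,1/3]$: the supermodularity of $\val$ blocks the pigeonhole argument underlying \cref{rr:delta:better}, because ``better with respect to $T$'' no longer implies ``better with respect to any superset of $T$''. The rescue is that the swap argument based on identical neighborhoods and counters is agnostic to sub- or supermodularity --- once two $I$-vertices are genuinely indistinguishable they contribute the same to every solution extending $T$, so \Cref{lemma:replace} applies in either direction. The delicate part is the bookkeeping: each deletion increases the counters of up to $|C|$ vertices in $C$, and one must interleave the type-based deletions with \cref{rr:general:no-zero-counter} and strictly-better exclusions to keep $\Gamma$ within $O(\mathsf{vc}+\alpha^{-1}k)$.
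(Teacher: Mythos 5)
Your overall skeleton (pass to \probAnoMax{}, split on whether $|V_x|\ge k$, include the very-high-degree vertices into $T$, bound the vertices of the independent set that see $C\setminus T$ by $|C|\cdot\Delta_{\overline{T}}$, then bound $\Gamma$ and remove annotations) matches the paper's proof. The degrading subcase and the $|V_x|\ge k$ subcase are fine. The gap is in your treatment of the non-degrading subcase $\alpha\in(0,1/3]$, specifically of the vertices of $I$ whose entire neighborhood lies in $T$ --- exactly the set the paper calls $I\setminus I_{\overline{T}}$, which can contain almost all of $V(G)$. Your swap rule for ``two vertices $v,v'\in I\setminus T$ with identical neighborhoods and counters'' is unsound as stated: \Cref{lemma:replace} lets you trade one solution vertex for a vertex \emph{outside} the solution that is at least as good, but if a solution contains \emph{both} twins there need not be any adequate replacement elsewhere, so excluding one of them can destroy the only solution. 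Even if you repair this by keeping up to $k$ representatives per type, the number of types is governed by the number of distinct neighborhoods inside $C$, which is $2^{\Theta(\mathsf{vc})}$, so this does not yield the claimed $\Oh(\mathsf{vc}(\mathsf{vc}+\alpha^{-1}k))$ vertex bound. The ``pendant-style reduction for each $u\in T\cap C$'' is too vague to close this: what must be controlled for a vertex $w\in I$ with $N(w)\subseteq T$ is not any single $u$'s neighborhood but the pair $(\degCounter(w),|N(w)\cap T|)$.

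The observation you are missing, and which the paper uses, is that for every $w\in I$ with $N(w)\subseteq T$ the contribution to \emph{any} solution $S\supseteq T$ is literally a fixed number, namely $\alpha\degCounter(w)+(1-3\alpha)|N(w)|$, because $I$ is independent and $N(w)\cap(S\setminus\{w\})=N(w)\cap T$ regardless of which other vertices $S$ picks. This makes the sub-/supermodularity issue irrelevant for these vertices: one simply sorts $I\setminus I_{\overline{T}}$ by this fixed value and applies the Exclusion Rule to all but the $k$ best, leaving $\Oh(\mathsf{vc}^2+\alpha^{-1}\mathsf{vc}\cdot k)$ vertices in total before removing annotations. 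With that replacement for your twin argument, the rest of your plan (counter bounding via \Cref{rr:general:no-zero-counter} and \Cref{lem:general:condition-strictly-better}, then \Cref{lemma:general:remove-ano}) goes through and gives the stated bound.
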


{
\begin{proof}
	We follow the proof of \cref{lem:h-index:technical-kernel}.
	As $\mathsf{vc} \ge h$, we only need to extend the statement of \cref{lem:h-index:technical-kernel} (2) concerning the case $\alpha\in(0,1/3)$.
	Let~$(G,k,t)$ be an instance of \probMax{}.
	First, we transform~$(G,k,t)$ into an equivalent instance~$(G,\emptyset,\counter,k,t)$ of \probAnoMax{} where~$\counter(v) = 0$ for all~$v \in V(G)$.
	Then, let~$X$ be a vertex cover of size~$\mathsf{vc}$.
	We set~$I \coloneqq V(G) \setminus X$ to be the independent set.
	Note that each vertex in~$I$ has degree at most~$\mathsf{vc}$.
	Moreover, we set~$x \coloneqq\mathsf{vc} + |(1 - 3\alpha)k\cdot \alpha^{-1}|>\mathsf{vc}$ since~$\alpha > 0$.
	Thus, $V_x \subseteq X$.

	
	\paragraph{Case 1: $|V_x| \ge k$.} 
	This case follows from \cref{lem:h-index:technical-kernel} (1).

	\paragraph{Case 2: $|V_x| < k$.} 
	
	Consider the set~$V_{x+|(1 - 3\alpha)k\cdot \alpha^{-1}|}$ of vertices with degree at least~$\mathsf{vc} + 2|(1 - 3\alpha)k\cdot \alpha^{-1}|$.
	By \Cref{lem:alpha-k-interval:all-high-contribution}, for each~$v \in V_{x+|(1 - 3\alpha)k\cdot \alpha^{-1}|}$ there is a solution containing~$v$.
	Thus, we can apply the Inclusion Rule (\Cref{rr:general:how-to-include}) on every vertex in~$V_{x+|(1 - 3\alpha)k\cdot \alpha^{-1}|}$ including it into $T$.
	(Recall that~$T$ is the set of vertices fixed in the solution.)
	We then have~$\Delta_{\overline{T}} \le x+|(1 - 3\alpha)k\cdot \alpha^{-1}|$.
	Thus, there are at most~$\mathsf{vc} \cdot \Delta_{\overline{T}} \in \Oh(\mathsf{vc}^2 + \alpha^{-1}\mathsf{vc}\cdot k)$ vertices in~$I$ that have at least one neighbor in~$X \setminus T$.
	Denote these vertices by~$I_{\overline{T}}$.
	The remaining vertices in~$I \setminus I_{\overline{T}}$ have neighbors only in~$T$. 
	Hence, their contribution is fixed and we can get rid of all but the~$k$ vertices in~$I \setminus I_{\overline{T}}$ with highest contribution using the Exclusion Rule (\Cref{rr:general:how-to-exclude}).
	Thus, we are left with the vertices in~$X$, in~$I_{\overline{T}}$, and the $k$ vertices with highest contribution of~$I \setminus I_{\overline{T}}$. 
	These are $\Oh(\mathsf{vc}^2 + \alpha^{-1} \mathsf{vc}\cdot k)$ many vertices.
	We remove the annotation using \Cref{lemma:general:remove-ano}: we obtain an instance for \probMax{} of size~$\Oh((\Delta_{\overline{T}} + \Gamma + \alpha^{-1}) |V(G)| + \alpha^{-1} k |T|) = \Oh(\mathsf{vc} (\mathsf{vc} + \alpha^{-1} k)^2)$.
\end{proof}
}

For~$\alpha = 0$, \probMax{} corresponds to \textsc{Densest~$k$-Subgraph} and  \textsc{Clique} is one of its special cases ($t = \binom{k}{2}$).
Since \textsc{Clique} does not admit a polynomial kernel with respect to $\mathsf{vc}$~\cite{BJK14} (and any clique is of size at most~$\mathsf{vc}+1$), \textsc{Densest~$k$-Subgraph} does not admit a polynomial kernel with respect to~$\mathsf{vc}$.
However, \textsc{Densest~$k$-Subgraph} can be solved by a straightforward  algorithm in $\Oh^*(2^\mathsf{vc})$~time.
Thus, \textsc{Densest~$k$-Subgraph} admits a kernel of size~$\Oh(2^\mathsf{vc})$.

\paragraph{Minimization variant.}
Recall that \probBorMin{} has a polynomial kernel with respect to~$d+k$ (see \Cref{thm-minimization-kernel-d+k}) and thereby $\mathsf{vc}+k$.
It remains to consider \probIntMin{} parameterized by~$\mathsf{vc}+k$.

\begin{proposition}
	\label{prop:vertex-cover:poly-kernel-min}
	\probMin{} admits a kernel of size~$\Oh((\alpha^{-2} + k) (\mathsf{vc} + \alpha^{-1} \mathsf{vc} \cdot k)^2)$ for~$\alpha > 0$ and of size~$\Oh(\mathsf{vc}^2 + \mathsf{vc}\cdot k)$ for $\alpha = 0$.
\end{proposition}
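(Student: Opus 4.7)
The plan for $\alpha=0$, where \probMin{} is \textsc{Sparsest~$k$-Subgraph} and $\val(S)=m(S)$, is a direct argument. Compute a vertex cover~$X$ of size $\Oh(\mathsf{vc})$ (e.g., via $2$-approximation) and let $I\coloneqq V(G)\setminus X$; this set is independent. If $|I|\ge k$, output a trivial yes-instance, since any $k$ vertices from~$I$ form an independent set with $\val(S)=0\le t$ (if $t<0$ the original instance is trivially no). Otherwise $|V(G)|<\mathsf{vc}+k$, and since every edge has at least one endpoint in~$X$ the edge count is $\Oh(\mathsf{vc}^2+\mathsf{vc}\cdot k)$, which matches the claimed size.

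For $\alpha>0$, the plan is to mirror \Cref{prop:vertex-cover:poly-kernel-max} using minimization analogs of \Cref{lem:alpha-k-interval:no-small-contribution,lem:alpha-k-interval:all-high-contribution}. I would transform to the annotated variant, compute a vertex cover~$X$ of size~$\mathsf{vc}$, set $I\coloneqq V(G)\setminus X$ and $x\coloneqq\mathsf{vc}+\lceil|(1-3\alpha)k\alpha^{-1}|\rceil+1$, so that $V_x\subseteq X$ initially. The two min analogs, proved via \Cref{lem:general:condition-strictly-better,lemma:replace} exactly as in the maximization case with roles reversed, say: if $|V\setminus V_x|\ge k$ and $\alpha\degCounter(v)\ge\alpha x+|(1-3\alpha)k|$ then there is a solution without~$v$ (swap $v$ with a strictly better $u\in(V\setminus V_x)\setminus S$); and if $|V\setminus V_x|\le k$ and $\alpha\degCounter(v)\le\alpha x-|(1-3\alpha)k|$ then there is a solution containing~$v$ (swap with a strictly worse $u\in S\cap V_x$, which exists because $|V\setminus V_x|\le k=|S|$ and $v\in V\setminus V_x$).

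I would then distinguish two cases. In the easy case $|V\setminus V_x|\le k$ we have $|V(G)|\le|V_x|+k\le\mathsf{vc}+k$ directly; apply the min-inclusion rule exhaustively to populate~$T$, bound~$\Gamma$ by the minimization analog of \Cref{lemma:general:counter-bound} (obtained by the obvious dualization, using that the dual of ``satisfactory'' applies to low contribution and ``needless'' to high contribution), and conclude via \Cref{lemma:general:remove-ano-min}. In the hard case $|V\setminus V_x|>k$, apply min-exclusion to every vertex with $\degCounter>\mathsf{vc}+2|(1-3\alpha)k\alpha^{-1}|$, which reduces~$\Delta$ to $\Oh(\mathsf{vc}+\alpha^{-1}k)$. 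To further bound $|V(G)|$, I would follow Case~2 of \Cref{prop:vertex-cover:poly-kernel-max}: once $\Delta_{\overline T}$ is bounded and $T$ is populated, at most $|X\setminus T|\cdot\Delta_{\overline T}=\Oh(\mathsf{vc}(\mathsf{vc}+\alpha^{-1}k))$ vertices of~$I$ have any neighbor in $X\setminus T$; the remaining vertices of~$I$ have neighborhoods entirely in~$T$, hence a fixed contribution, and all but the~$k$ of smallest contribution can be removed via the Exclusion Rule.

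The main obstacle lies in the hard case: min-exclusion alone bounds~$\Delta$ but not $|V(G)|$, and the direct min analog of \Cref{lem:alpha-k-interval:all-high-contribution} needs the opposite hypothesis on $|V\setminus V_x|$, so it is not a priori clear that enough vertices are pushed into~$T$ to make the ``fixed contribution'' pruning applicable. The hard part will be to set up the reduction rules so that after their exhaustive interplay we simultaneously obtain $\Delta_{\overline T},\Gamma\in\Oh(\mathsf{vc}+\alpha^{-1}k)$ and $|V(G)|\in\Oh(\mathsf{vc}(\mathsf{vc}+\alpha^{-1}k))$, after which invoking \Cref{lemma:general:remove-ano-min} yields a kernel of size $\Oh((\alpha^{-2}+k)(\mathsf{vc}+\alpha^{-1}\mathsf{vc}\cdot k)^2)$ as claimed.
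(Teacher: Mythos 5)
Your $\alpha=0$ argument is correct and matches the paper. For $\alpha>0$, however, there is a genuine gap, and you have flagged it yourself: your ``hard case'' $|V\setminus V_x|>k$ does not close. Transplanting the two-case structure of \Cref{prop:vertex-cover:poly-kernel-max} is the wrong move for minimization, because the vertices one would want to force into~$T$ are now the \emph{low}-degree ones, and those live in the (large) independent set~$I$ --- so no inclusion rule ever pushes the high-degree part of~$X$ into~$T$, and the ``neighborhood entirely in~$T$, hence fixed contribution'' pruning you rely on never becomes available. Including high-degree vertices into~$T$ would not be correct for minimization in any case.

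The missing idea, which the paper uses and which eliminates the case distinction entirely: after assuming WLOG that $|I|\ge k$ (otherwise $|V(G)|<\mathsf{vc}+k$ and you already have a tiny kernel), every vertex of~$I$ has $\degCounter$ at most~$\mathsf{vc}$, so by \Cref{lem:general:condition-strictly-better} every vertex of~$I$ is strictly better (in the minimization sense) than every vertex of~$V_x$ for $x=\mathsf{vc}+|(1-3\alpha)k\alpha^{-1}|$. Since $|I|\ge k$, \Cref{lemma:replace} yields a solution avoiding all of~$V_x$, so the Exclusion Rule (\Cref{rr:general:how-to-exclude}) applies to \emph{every} vertex of~$V_x$ unconditionally, and $T$ stays empty throughout. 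Afterwards each remaining vertex of~$X$ has degree less than~$x$, so at most $\mathsf{vc}\cdot x$ vertices of~$I$ still have a neighbor; all other vertices of~$I$ are isolated, their contribution in any solution is exactly $\alpha\cdot\counter(v)$, and all but the~$k$ with smallest counter can be discarded. This gives $\Oh(\mathsf{vc}^2+\alpha^{-1}\mathsf{vc}\cdot k)$ vertices with $\Delta$ and $\Gamma$ in $\Oh(\mathsf{vc}+\alpha^{-1}k)$, and \Cref{lemma:general:remove-ano-min} finishes the proof; no minimization analog of \Cref{lemma:general:counter-bound} or of the inclusion lemmas is needed.
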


\begin{proof}
	Let~$X$ be a vertex cover of size~$\mathsf{vc}$ and let~$I \coloneqq  V(G) \setminus X$ be the independent set.
	Without loss of generality we can assume that~$|I|\ge k$ since otherwise we already have a trivial~$\mathsf{vc} + k$-vertex kernel.
	If~$\alpha = 0$, we have a trivial yes-instance as~$\val(I') = 0$ for all~$I' \subseteq I$ of size~$k$.
	Thus, in the following, we assume that~$\alpha > 0$.
	Let~$(G,k,t)$ be an instance of \probMin{}.
	We transform~$(G,k,t)$ into an equivalent instance~$(G,\emptyset,\counter,k,t)$ of \probAnoMin{} where~$\counter(v) = 0$ for all~$v \in V(G)$.

	Let~$x \coloneqq \mathsf{vc} + |(1-3\alpha)k\cdot \alpha^{-1}|$.
	We first show that there is a solution that does not contain any vertex in~$V_x$:
	To this end, observe that~$\deg(v) \le \mathsf{vc}$ for each~$v \in I$.
	Hence, by \cref{lem:general:condition-strictly-better}, each vertex in~$I$ is strictly better than each vertex in~$V_x$.
	Since~$|I| \ge k$, it follows from \cref{lemma:replace} that there is a solution not containing any vertex from~$V_x$.
	Hence, we can apply the Exclusion Rule (\cref{rr:general:how-to-exclude}) on each vertex in~$V_x$.
	As a result, the remaining vertices in~$X$ form still a vertex cover and have degree less than~$\mathsf{vc} + |(1-3\alpha)k\cdot \alpha^{-1}|$.
	Thus, less than~$\mathsf{vc}(\mathsf{vc} + |(1-3\alpha)k\cdot \alpha^{-1}|)$ vertices in~$I$ have neighbors in~$X$; the remaining vertices are isolated vertices.
	As the contribution of each isolated vertex~$v$ in any solution is exactly~$\alpha \cdot \counter(v)$, we can simply sort the isolated vertices by their contribution and remove all but the~$k$ vertices with the lowest contribution.
	Thus, we end up with at most $\mathsf{vc} + \mathsf{vc}(\mathsf{vc} + |(1-3\alpha)k\cdot \alpha^{-1}|) + k = \Oh(\mathsf{vc}^2 + \alpha^{-1}\mathsf{vc}\cdot k)$~vertices.
	Using \Cref{lemma:general:remove-ano-min} to remove the annotation, we get an instance for \probMax{} of size 
	\begin{align*}
		\Oh(\alpha^{-2} (\Delta + \Gamma + k)^2 + \alpha^{-1} (\Delta + \Gamma + k) \cdot |V(G)|) = \Oh((\alpha^{-2} + k) (\mathsf{vc} + \alpha^{-1} \mathsf{vc} \cdot k)^2)
	\end{align*}
	Thus, the statement follows.
\end{proof}

\section{Conclusion}

We provided a systematic parameterized complexity analysis for \prob{} (see \cref{fig:overview}).
Although we settled the existence of polynomial kernels with respect to various parameters combined with the solution size~$k$, several open questions remain.
First, our polynomial kernels are not optimized and thus the polynomials are of high degree. 
Looking for smaller kernels is thus an obvious first task. 
Second, can our positive results for $c$-closure and degeneracy be extended to the smaller parameter weak closure \cite{FRSWW20}?
Furthermore, while we looked at parameters that are small in sparse graphs, can similar results be achieved for dense graphs as considered e.\,g. by Lochet et al.~\cite{LL0Z21}?
Moreover, it would be interesting to perform a similar complete study for related problems. A natural candidate could be the variant of \prob{} where an additional connectivity constraint is imposed on~$G[S]$; the special case of this variant where one aims to maximize the number of edges with one endpoint in~$S$ has been studied under the name \textsc{Multi-Node Hub} problem~\cite{SZ18b}.  
Finally, we believe that an experimental verification of our data reduction rules could demonstrate their practical usefulness. 

\bibliographystyle{plain}

\end{document}